\documentclass[acmsmall, sigconf, screen]{acmart}

\usepackage{amsfonts}
\usepackage{amsmath}
\usepackage{amsthm}

\usepackage{blindtext}
\usepackage{xpatch}
\usepackage{makecell}
\usepackage{multirow}
\usepackage{colortbl}

\usepackage[linesnumbered,ruled,vlined]{algorithm2e}
\SetKwInput{KwInput}{Input}                
\SetKwInput{KwOutput}{Output}              

\usepackage{mathtools}
\usepackage{enumitem}
\usepackage{graphicx}
\usepackage{stmaryrd}
\usepackage{mdframed}
\usepackage{hyperref}
\usepackage{xcolor}
\usepackage{listings}
\usepackage{tikz}
\usetikzlibrary{matrix,positioning}
\usepackage{fancyhdr} 
\usepackage{amsbsy}
\usepackage{comment}

\usepackage{subcaption}

\usepackage{pdfpages}
\usepackage{soul}

\usepackage{marvosym}
\usepackage{orcidlink}

\hypersetup{
  colorlinks=true,
  linkcolor=blue,
  filecolor=magenta,
  citecolor=violet,
  urlcolor=cyan,
  pdftitle={Overleaf Example},
  pdfpagemode=FullScreen
}

\definecolor{darkgreen}{rgb}{0.0, 0.3, 0.0}
\definecolor{codegreen}{rgb}{0,0.4,0}
\definecolor{codegray}{rgb}{0.5,0.5,0.5}
\definecolor{codepurple}{rgb}{0.58,0,0.82}
\definecolor{darkmagenta}{rgb}{0.55,0,0.55}
\definecolor{backcolour}{rgb}{1.0,1.0,1.0}
\definecolor{eminence}{RGB}{108,48,130}
\definecolor{burntorange}{rgb}{0.8, 0.33, 0.0}

\lstdefinestyle{assembly}{
  belowcaptionskip=1\baselineskip,
  breaklines=true,
  frame=single,
  numbers=left,
  numberstyle=\tiny\color{codegray},
  showstringspaces=false,
  commentstyle=\itshape\color{gray!40!black},
  basicstyle=\ttfamily,
  upquote=true,
  stepnumber=1,
  numbersep=8pt,
  aboveskip=1.5em,
  belowskip=1.5em,
  xleftmargin=2em,
  frame=single,
  framexleftmargin=.35em
}

\lstdefinelanguage{pseudo}{
  sensitive = true,
  keywords={operation, var, guard, if, else, for, type, record, fin},
  numbers=left,
  stepnumber=1,
  numbersep=8pt,
  showstringspaces=false,
  comment=[l]{//},
  morecomment=[s]{/*}{*/},
}
\lstdefinestyle{pseudostyle}{
    language=pseudo,
    commentstyle=\color{green!50!black},
    keywordstyle=\bf\ttfamily\color{orange},
    numberstyle=\tiny\color{codegray},
    stringstyle=\color{codepurple},
    breakatwhitespace=false,
    basicstyle=\footnotesize\ttfamily,
    breaklines=true,
    backgroundcolor=\color{blue!4},
    captionpos=b,
    keepspaces=true,
    numbers=left,
    numbersep=5pt,
    showspaces=false,
    showtabs=false,
    tabsize=2,
    frame=none,
    frame=shadowbox,
    xleftmargin=0.7cm,
    xrightmargin=0.2cm,
    escapeinside={<@}{@>}
}

\lstdefinestyle{baseC}{
  language=C,
basicstyle=\footnotesize\ttfamily,
keywordstyle=\color{blue}\ttfamily,
stringstyle=\color{red}\ttfamily,
commentstyle=\itshape\color{green!50!black},
morecomment=[l][\color{magenta}]{\#},
backgroundcolor=\color{gray!7},
xleftmargin=0.7cm,
xrightmargin=0.2cm,
frame=none,
  emptylines=1,
  numberstyle=\tiny\color{codegray},
  stringstyle=\color{codepurple},
  breaklines=true,
  captionpos=b,
  keepspaces=true,
  numbers=left,
  numbersep=5pt,
  showspaces=false,
  showstringspaces=false,
  showtabs=false,
  tabsize=2,
  frame=shadowbox,
}

\newtheorem{example}{Example}
\newtheorem{lemma}{Lemma}
\newtheorem{observation}{Observation}
\newtheorem{theorem}{Theorem}
\newtheorem{definition}{Definition}

\newcommand{\toolname}{\textsc{SECANT}}

\newcommand{\adepth}{d}
\newcommand{\agrammar}{G}
\newcommand{\rbuf}{\text{\pseudocb{reuse\_buf}}}

\newcommand{\srcvars}{\setvars_{\pubinp}}
\newcommand{\obsvars}{\setvars_{\mathsf{obs}}}

\newcommand{\askeleton}{w}
\newcommand{\apred}{\phi}

\newcommand{\fpop}{\texttt{MulOp}}
\newcommand{\ldop}{\texttt{LdOp}}
\newcommand{\datadep}{\mathsf{datadep}}
\newcommand{\addrdep}{\mathsf{addrdep}}
\newcommand{\sameaddr}{\mathsf{sameaddr}}
\newcommand{\diffaddr}{\mathsf{diffaddr}}
\newcommand{\speculates}{\mathsf{speculative}}

\newcommand{\lowoperands}{\mathsf{lowoperands}}
\newcommand{\highresult}{\mathsf{highresult}}
\newcommand{\lowresult}{\mathsf{lowresult}}
\newcommand{\destreg}{\mathsf{destreg}}
\newcommand{\srcaddr}{\mathsf{srcaddr}}
\newcommand{\srcdata}{\mathsf{srcdata}}
\newcommand{\areg}{\texttt{reg}}
\newcommand{\rsone}{\texttt{rs1}}
\newcommand{\rstwo}{\texttt{rs2}}
\newcommand{\rd}{\texttt{rd}}
\newcommand{\addr}{\texttt{addr}}

\newcommand{\regfile}{\mathsf{regfile}}

\newcommand{\aprogram}{C}

\newcommand{\adomain}{\mathbb{D}}

\newcommand{\afacet}{f}

\newcommand{\setvars}{\mathrm{V}}

\newcommand{\anopermap}{\omega}

\newcommand{\anassignment}{\sigma}
\newcommand{\setassignments}{\Sigma}

\newcommand{\initialpred}{\setassignments_\mathsf{init}}

\newcommand{\anop}{\mathsf{op}}
\newcommand{\aninstr}{\mathsf{inst}}

\newcommand{\setops}{\mathsf{Op}}
\newcommand{\setinstrs}{\mathsf{Inst}}

\newcommand{\nspec}{\mathsf{ns}}

\newcommand{\atrace}{\pi}

\newcommand{\arch}{\mathsf{a}}
\newcommand{\march}{\mathsf{m}}
\newcommand{\setarchvars}{\setvars_{\arch}}
\newcommand{\setmarchvars}{\setvars_{\march}}

\newcommand{\pubinp}{\mathsf{pub}}
\newcommand{\secinp}{\mathsf{sec}}
\newcommand{\obsout}{\mathsf{obs}}

\newcommand{\apattern}{\mathsf{p}}
\newcommand{\apredicate}{\phi}

\newcommand{\transrel}{\mathsf{T}}

\newcommand{\transrelsem}[2]{\llbracket #1 \rrbracket(#2)}
\newcommand{\transrelsemnspec}[2]{\llbracket #1 \rrbracket_{\nspec}(#2)}

\definecolor{clr-background}{RGB}{255,255,255}
\definecolor{clr-text}{RGB}{0,0,0}
\definecolor{clr-string}{RGB}{163,21,21}
\definecolor{clr-namespace}{RGB}{0,0,0}
\definecolor{clr-preprocessor}{RGB}{128,128,128}
\definecolor{clr-keyword}{RGB}{0,0,255}
\definecolor{clr-type}{RGB}{43,145,175}
\definecolor{clr-variable}{RGB}{0,0,0}
\definecolor{clr-constant}{RGB}{111,0,138} 
\definecolor{clr-comment}{RGB}{0,128,0}

\lstdefinestyle{VS2017}{
	backgroundcolor=\color{clr-background},
	basicstyle=\ttfamily\color{clr-text}, 
	stringstyle=\color{clr-string},
	identifierstyle=\color{clr-variable}, 
	commentstyle=\color{clr-comment},
	directivestyle=\color{clr-preprocessor}, 
	keywordstyle=\color{clr-type},
	keywordstyle={[2]\color{clr-constant}}, 
	tabsize=4
}

\newcommand{\semhyper}{\mathrm{SH}}
\newcommand{\attpat}{\mathrm{AP}}
\newcommand{\sempat}{\textrm{SemPat}}
\newcommand{\pseudocb}[1]{\text{\lstinline[style=pseudostyle, basicstyle=\ttfamily]{#1}}}

\newcommand{\baseCcb}[1]{\lstinline[style=baseC, basicstyle=\ttfamily]{#1}}

\newcommand{\platcompreuse}{\textsc{PlatCR}}

\newcommand{\platss}{\textsc{PlatSS}}

\newcommand{\platsynth}{\textsc{PlatSynth}}
\newcommand{\pdep}{\mathsf{pdep}}
\newcommand{\gdep}{\mathsf{gdep}}

\newcommand{\abuf}{\text{\pseudocb{buf}}}

\newcommand{\orcidlogo}[1]{\href{https://orcid.org/#1}{\mbox{\scalerel*{
\begin{tikzpicture}[yscale=-1,transform shape]
\pic{orcidlogo};
\end{tikzpicture}
}{|}}}}

\acmConference[]{ArXiv}{May 2024}{N.A.}

\settopmatter{printacmref=false} 
\renewcommand\footnotetextcopyrightpermission[1]{} 
\setcopyright{none}

\title{SemPat: Using Hyperproperty-based Semantic Analysis to Generate Microarchitectural Attack Patterns}

\author[A. Godbole]{Adwait Godbole \href{mailto:adwait@berkeley.edu}{$^{\text{(\Letter)}}$} 
\orcidlink{0000-0001-7704-304X}}
\email{adwait@berkeley.edu}
\affiliation{
    \institution{University of California, Berkeley}
    \city{Berkeley}
    \state{CA}
    \country{USA}
}

\author[Y. A. Manerkar]{Yatin A. Manerkar \orcidlink{0000-0002-6954-2292}}
\email{manerkar@umich.edu}
\affiliation{
    \institution{University of Michigan}
    \city{Ann Arbor}
    \state{MI}
    \country{USA}
}

\author[S. A. Seshia]{Sanjit A. Seshia \orcidlink{0000-0001-6190-8707}}
\email{sseshia@berkeley.edu}
\affiliation{
    \institution{University of California, Berkeley}
    \city{Berkeley}
    \state{CA}
    \country{USA}
}

\begin{CCSXML}
    <ccs2012>
       <concept>
           <concept_id>10002978.10002986.10002989</concept_id>
           <concept_desc>Security and privacy~Formal security models</concept_desc>
           <concept_significance>500</concept_significance>
           </concept>
       <concept>
           <concept_id>10002978.10002986.10002990</concept_id>
           <concept_desc>Security and privacy~Logic and verification</concept_desc>
           <concept_significance>500</concept_significance>
           </concept>
       <concept>
           <concept_id>10002978.10003001.10010777.10011702</concept_id>
           <concept_desc>Security and privacy~Side-channel analysis and countermeasures</concept_desc>
           <concept_significance>500</concept_significance>
           </concept>
       <concept>
           <concept_id>10003752.10003790.10011192</concept_id>
           <concept_desc>Theory of computation~Verification by model checking</concept_desc>
           <concept_significance>500</concept_significance>
           </concept>
       <concept>
           <concept_id>10003752.10003790.10002990</concept_id>
           <concept_desc>Theory of computation~Logic and verification</concept_desc>
           <concept_significance>300</concept_significance>
           </concept>
       <concept>
           <concept_id>10002944.10011123.10011676</concept_id>
           <concept_desc>General and reference~Verification</concept_desc>
           <concept_significance>300</concept_significance>
           </concept>
     </ccs2012>
\end{CCSXML}
    
\ccsdesc[500]{Security and privacy~Formal security models}
\ccsdesc[500]{Security and privacy~Logic and verification}
\ccsdesc[500]{Security and privacy~Side-channel analysis and countermeasures}
\ccsdesc[500]{Theory of computation~Verification by model checking}
\ccsdesc[300]{Theory of computation~Logic and verification}
\ccsdesc[300]{General and reference~Verification}

\begin{document}

\begin{abstract}

Microarchitectural security verification 
of software has seen the emergence of two broad classes 
of approaches.
The first is based on \textit{semantic security properties} 
(e.g., non-interference) which are verified for a given program 
and a specified abstract model of the hardware microarchitecture.
The second is based on \textit{attack patterns}, which, if found in a 
program execution, indicates the presence of an exploit. 
While the former uses a formal specification that can capture several gadget variants targeting the same vulnerability, it is limited by the scalability of verification.
Patterns, while more scalable, must be currently constructed manually, as they are narrower in scope and sensitive to gadget-specific structure.

This work develops a technique that, given a non-interference-based semantic security hyperproperty, automatically generates attack patterns up to a certain complexity parameter (called the skeleton size).
Thus, we combine the advantages of both approaches: security can be specified by a hyperproperty that uniformly captures several gadget variants, while automatically generated patterns can be used for scalable verification.
We implement our approach in a tool and demonstrate the ability to generate new patterns, (e.g., for SpectreV1, SpectreV4) and improved scalability using the generated patterns over hyperproperty-based verification.

\end{abstract}
\maketitle

\section{Introduction}

Modern processors are packed with performance-improving 
microarchitectural mechanisms such as caches, 
speculation, prefetching.
These lead to subtle microarchitectural-level interactions 
that can be exploited by hardware execution attacks  
to leak sensitive data from a victim
(e.g., \cite{kocher, Kocher2019SpectreAE, Lipp2018MeltdownRK, Canella2019FalloutLD, Schwarz2019ZombieLoadCD}).
Recently, several techniques have been proposed to detect the presence of 
such vulnerabilities in software and/or verify their absence 
(e.g. \cite{Cheang2019AFA, Guarnieri2020SpectectorPD, Balliu2020InSpectreBA, Mosier2022AxiomaticHC, Len2021CatsVS, Daniel2019BinsecRelER,Daniel2021HuntingTH}).
While sharing the goal of security verification, these 
approaches adopt different (hardware) platform models,
security specifications, and have different strengths.

We observe the emergence of two broad classes of approaches
based on the security specification they adopt:
\textit{semantic hyperproperty} ($\semhyper$) based approaches
(e.g., \cite{Cheang2019AFA, Guarnieri2020SpectectorPD, 
Balliu2020InSpectreBA, Guarnieri2021HardwareSoftwareCF, 
Fabian2022AutomaticDO}) perform verification with respect to
a hyperproperty-based \cite{Clarkson2008Hyperproperties} security specification, while \textit{attack pattern} ($\attpat$) 
based approaches 
(e.g., \cite{Mosier2022AxiomaticHC, Len2021CatsVS, Trippel2019SecurityVV})
perform verification using patterns that indicate the 
existence of an exploit.
$\semhyper$ approaches provide 
advantages of uniform specification and formal guarantees,
while $\attpat$ approaches have better scalability.
In this work, \textbf{we enable automated generation 
of attack patterns given a hyperproperty-based specification},
thereby combining their strengths.

\text{$\semhyper$ approaches}
(e.g., \cite{Cheang2019AFA, Guarnieri2020SpectectorPD, Balliu2020InSpectreBA, 
Guarnieri2021HardwareSoftwareCF, Fabian2022AutomaticDO})
identify a transition system-based platform model
and define program security \textit{semantically} as a hyperproperty
(e.g., non-interference (NI) \cite{Goguen1984UnwindingAI, Clarkson2008Hyperproperties})
over executions of this transition system.
This hyperproperty is verified using approaches such as model checking 
\cite{Clarke1993ModelCA} or symbolic execution \cite{King1976SymbolicEA}.
By semantically characterizing vulnerabilities, 
hyperproperties
allow \textit{uniform specification}: i.e., specification 
that captures several exploit gadgets that target the same vulnerability,
while differing in syntactic structure.
This results in strong, high-coverage security guarantees. 
However, these approaches are often limited by scalability owing to microarchitectural platform model complexity.

\text{$\attpat$ approaches}
(e.g. \cite{Mosier2022AxiomaticHC, Len2021CatsVS, Trippel2019SecurityVV}) 
use \textit{attack patterns} to detect vulnerabilities.
Patterns identify execution fragments that are indicative of an exploit;
program executions embedding these patterns are flagged as vulnerable.
Patterns can be defined (and checked) over a platform 
model that is more abstract than the microarchitectural models 
used by $\semhyper$ approaches.
This abstraction leads to simpler verification queries 
which scale better with program size.
However, each pattern captures \textit{specific execution scenarios} 
and is sensitive to structural variability in exploit gadgets, 
even when the gadgets target the same microarchitectural feature. 
Covering the attack vector requires an
enumeration of all possible patterns that it encompasses.
This
can be tedious to perform manually, leading to incomplete
coverage.

\textit{$\sempat$.}
The above discussion 
motivates the question:
\textit{Can we combine the scalability of $\attpat$ approaches 
with the uniform specification and guarantees of 
$\semhyper$ approaches?}
This work aims to resolve this question in the context of 
microarchitectural vulnerabilities
by developing algorithms to generate attack 
patterns given a platform model and a non-interference-based hyperproperty.
\textbf{Our key insight} is to use the hyperproperty 
as the specification against which to check 
whether a candidate pattern represents an exploit. 
Our technique (\S\ref{sec:approach}) ensures that the generated patterns 
capture all executions in which there is a dependency-closed 
sub-execution of size $k$ that violates the hyperproperty.
This property, termed as $k$-completeness (Eq. \ref{eq:completeness}), enables
combining formal specification via hyperproperties with 
scalable verification via patterns.

Our pattern generation is guided by a \textit{grammar} 
(\S\ref{subsec:patgrammar}) that identifies the space 
of constraints that patterns are defined over.
The choice of grammar captures the tradeoff between generality and the 
precision (false positives) of the generated patterns.
Specialized grammars result in patterns which have fewer false positives
but are microarchitectural-implementation dependent.
Our approach traverses the grammar-induced search space, 
checking candidates against the hyperproperty, a la grammar-based synthesis 
(\cite{Alur2013SyntaxguidedS,Jha2015ATO}).

\textbf{Contributions.} Our contributions are as follows:
\begin{enumerate}
    \item \textbf{Formally relating semantic hyperproperty and attack pattern based approaches:} 
    We compare and formally relate $\semhyper$
    and $\attpat$ approaches to exploit detection. 
    Based on the insight that hyperproperties can serve as a 
    specification for patterns, we propose the problem of 
    generating the latter given the former.
    \item \textbf{$k$-complete automatic pattern generation:} 
    We develop a grammar-based search algorithm for automatic pattern 
    generation that ensures that the generated patterns capture all 
    non-interference violations up to a certain complexity parameter $k$ 
    (termed skeleton size).
    \item \textbf{Implementation and Evaluation - new patterns and improved verification performance:} 
    We develop a prototype tool implementing our pattern generation 
    technique and for verifying software binaries using the generated patterns.
    We evaluate our approach by generating attack patterns
    and using them to analyze variants of Spectre-style exploits.
    We demonstrate: (a) the ability to generate previously unknown 
    patterns for existing (e.g., Spectre-BCB, Spectre-STL) 
    attacks as well as variants targetting alternative (e.g. computation-unit-based)
    side-channels, and (b) upto 2 orders-of-magnitude performance improvement 
    on litmus tests with better scaling using 
    the generated patterns compared to hyperproperty-based analysis.
\end{enumerate}

\textbf{Outline.}
In \S\ref{sec:motivation} we provide background on 
$\semhyper$ and $\attpat$ and motivate our contribution.
We provide our programming model in \S\ref{sec:progmodel},
followed by the problem 
formulation in \S\ref{sec:problemform}.
In \S\ref{sec:approach} we describe SemPat, our approach to
generate attack patterns based on a semantic platform model and 
security hyperproperty.
We describe the evaluation methodology in \S\ref{sec:methodology} and 
present experimental results in \S\ref{sec:evaluation}.
We discuss our approach, and its limitations in \S\ref{sec:discussion},
related work in \S\ref{sec:related}, and 
\S\ref{sec:conclusion} concludes.
We provide proofs/algorithm pseudocode in the supplementary material appendix.

\section{Background and Motivation}
\label{sec:motivation}

\subsection{Microarchitectural Execution Attacks}

We provide background on microarchitectural execution attacks
using the Spectre Bounds Check Bypass (BCB) vulnerability.
We refer the reader to literature (e.g., \cite{Kocher2019SpectreAE, Ge2018ASO, Canella2019FalloutLD})
for extensive surveys.

\subsubsection{Spectre-BCB}
The \baseCcb{victimA} function in Fig. \ref{fig:programexamples} 
shows the Spectre-BCB (also known as SpectreV1) vulnerability gadget.
In Spectre-BCB, an attacker induces an invocation of the \baseCcb{victimA} function 
with an argument \baseCcb{i} which is out-of-bounds of array \baseCcb{arr1}.
While architecturally this is an illegal access, branch mis-speculation 
can allow \baseCcb{arr1[i]}, followed by \baseCcb{arr2[arr1[i] << CL_INDEX]} 
to be accessed \textit{speculatively}.
In particular, the second load leaves a residue in the cache which 
is a function of the accessed address (i.e., \baseCcb{arr2+arr1[i] << CL_INDEX}).
This residue is preserved even after speculative rollback, and can be 
observed by the attacker (e.g., using a Prime+Probe primitive 
\cite{Liu2015LastLevelCS}).
Thus the attacker can observe the value of \baseCcb{arr1[i]} for an
out-of-bounds index \baseCcb{i}, leading to a security vulnerability.
The key aspect of this vulnerability is that it leverages \textit{microarchitectural
features} - branch (mis)-speculation and cache side-channels -
to leak information.

\subsubsection{Variant Vulnerabilities}

Over the years several variants of microarchitectural hardware execution 
attacks have been demonstrated.
These vary both at the software level (e.g., the structure of the 
vulnerability gadget occurring in the program library/executable)
as well as at the microarchitectural level (e.g., the underlying 
speculative mechanism/side-channel that is exploited).

The function \baseCcb{victimC} in Fig. \ref{fig:programexamples} 
shows a variant of \baseCcb{victimA}, that replaces a cache-based
side channel with a computation unit-based side channel.
Like \baseCcb{victimA}, \baseCcb{victimC} also exploits branch speculation to 
perform an out-of-bounds load of \baseCcb{arr1fp[i]} (a floating-point array).
The loaded value is used in a floating-point multiply operation.
If the microarchitectural implementation of this operation
has data-operand dependent timing (e.g., \cite{Vicarte2021OpeningPB,Ge2018ASO})
then an execution time measurement leaks the value of \baseCcb{arr1fp[i]}.
Computation units have been targeted in this manner by previously demonstrated exploits (e.g., NetSpectre \cite{Schwarz2019NetSpectreRA} 
uses AVX-based timing side channels) as well as conjectured vulnerabilities 
(e.g., \cite{Vicarte2021OpeningPB}).

\begin{figure*}
    \centering
    \includegraphics[scale=0.52]{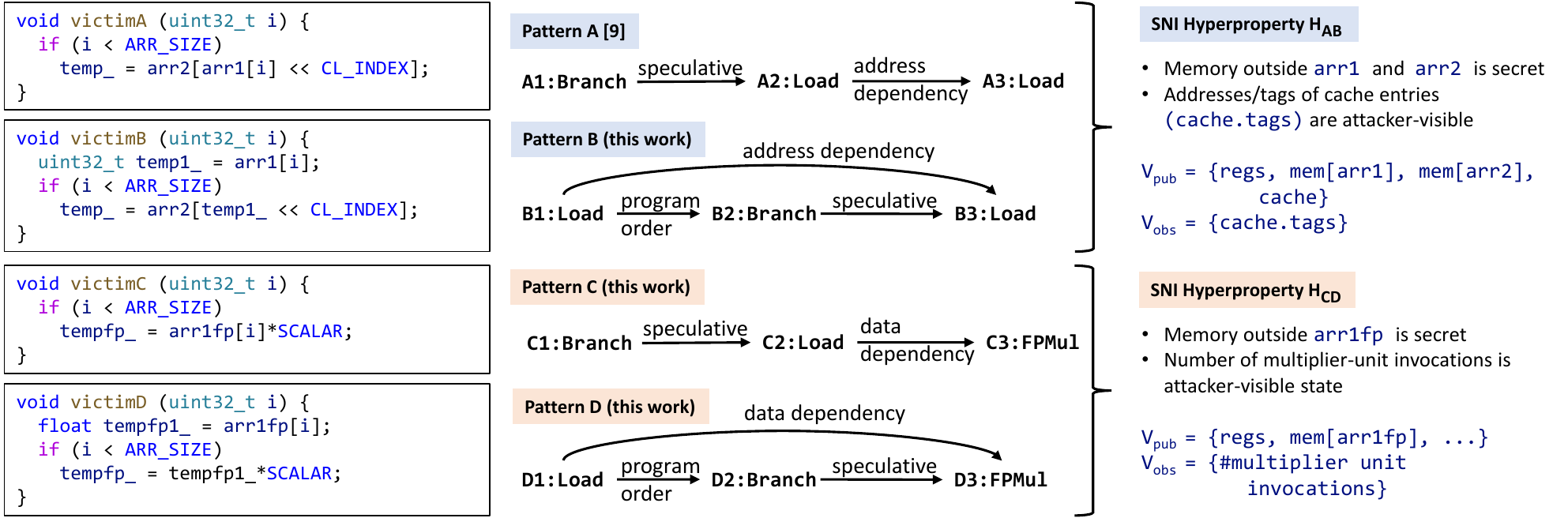}
    \caption{\textbf{Left}: Variants of speculative exploits targeting branch (mis)-speculation
    with cache-based (\baseCcb{victimA}, \baseCcb{victimB}) and computation unit-based (\baseCcb{victimC}, \baseCcb{victimD}) side channels.
    \textbf{Centre}: While attack patterns that can detect these exploits, variants of the same 
    (Spectre-BCB) vulnerability - \baseCcb{victimA} and \baseCcb{victimB} - require different patterns, 
    as do \baseCcb{victimC} and \baseCcb{victimD}.
    In (A, C), the load \baseCcb{arr1[i]} is performed within the speculative window 
    (after the branch), while in (B, D) it is before the branch.
    \textbf{Right}: Unlike attack patterns, a semantic hyperproperty \textit{uniformly characterizes} several exploits 
    aimed a particular microarchitectural vulnerability. The Speculative Non-Interference \cite{Guarnieri2020SpectectorPD} (SNI) 
    Hyperproperty $\text{H}_{\text{AB}}$ can identify both \baseCcb{victimA} and \baseCcb{victimB} 
    (as well as others) as exploits, while the SNI Hyperproperty $\text{H}_{\text{CD}}$ identifies both \baseCcb{victimC} and \baseCcb{victimD}. 
    While the hyperproperty changes with the platform's microarchitectural features 
    (e.g., speculation and side-channels), it is robust to variances in the exploit (program) structure itself.}
    \label{fig:programexamples}
\end{figure*}

\subsection{Analyzing Software for Vulnerabilities}

Analyzing programs for the existence of microarchitectural
vulnerabilities is an important yet challenging problem.

\subsubsection{$\attpat$: Attack Pattern-based Analysis}
\label{subsubsec:attpat}

One family of approaches, which we call $\attpat$ approaches, 
detect vulnerabilities using \textit{attack patterns}.
An attack pattern is a small execution fragment
which, if \textit{embedded} in some (larger) program execution,
indicates the presence of a vulnerability.
Pattern A (Fig. \ref{fig:programexamples}) illustrates one such pattern
from existing work \cite{Mosier2022AxiomaticHC}.
Pattern A matches executions where a load (A2) is speculatively executed
following a branch (A1), and where the address of a subsequent load (A3) 
depends on the value loaded by A2.
Fig. \ref{fig:patternmatch} (Match A) shows the compiled \baseCcb{victimA}, 
and depicts how the instruction nodes A1, A2, A3 
in pattern A match instructions in the binary execution.

\begin{table}
    \centering
\resizebox*{0.45\textwidth}{!}{
    \begin{tabular}{c|c|c}
        \hline
        $\attpat$-approach & Execution model & Attack pattern variant \\ \hline \hline
        CheckMate \cite{Trippel2019SecurityVV} & $\mu$hb graphs \cite{Lustig2016COATCheckVM} & \textit{Bad execution patterns} \\  \hline
        \makecell{Cats vs. Spectre \\ \cite{Len2021CatsVS}} & \makecell{\texttt{cat} \cite{Alglave2014HerdingCM} based \\ event structures} & $\mathsf{sec} \rightarrow^* \textit{\textbf{Obs}}$ paths \\ \hline
        \makecell{Axiomatic HW/SW \\ Contracts \cite{Mosier2022AxiomaticHC}} & \makecell{Leakage containment \\ models (LCMs)} & Transmitters \\ \hline 
    \end{tabular}
}
    \vspace{1em}
    \caption{Examples of execution modelling choices and 
    pattern variants used in some $\attpat$-approaches.}
    \label{tab:attpat-approaches}
\end{table}

\subsubsection*{$\attpat$ approach variants}
Existing $\attpat$-based detection approaches identify a set of such patterns, 
and then analyze programs for the existence of pattern embeddings.
These approaches model executions and define patterns either at the
architectural \cite{Mosier2022AxiomaticHC,Len2021CatsVS} or the microarchitectural 
\cite{Trippel2019SecurityVV} level.
Architecture-level approaches are augmented with sufficient
microarchitectural detail to capture the vulnerability-specific features
(e.g. \textit{xstate} from Mosier et. al. \cite{Mosier2022AxiomaticHC}).
Table \ref{tab:attpat-approaches} provides a summary of these differences.

Patterns defined at the architectural level \textit{abstract away}
complex microarchitectural details of the platform.
This results in simpler verification queries,
allowing analysis to scale to larger programs (e.g., see~\cite{Mosier2022AxiomaticHC,Len2021CatsVS}).

\begin{figure}
    \centering
    \includegraphics[scale=0.55]{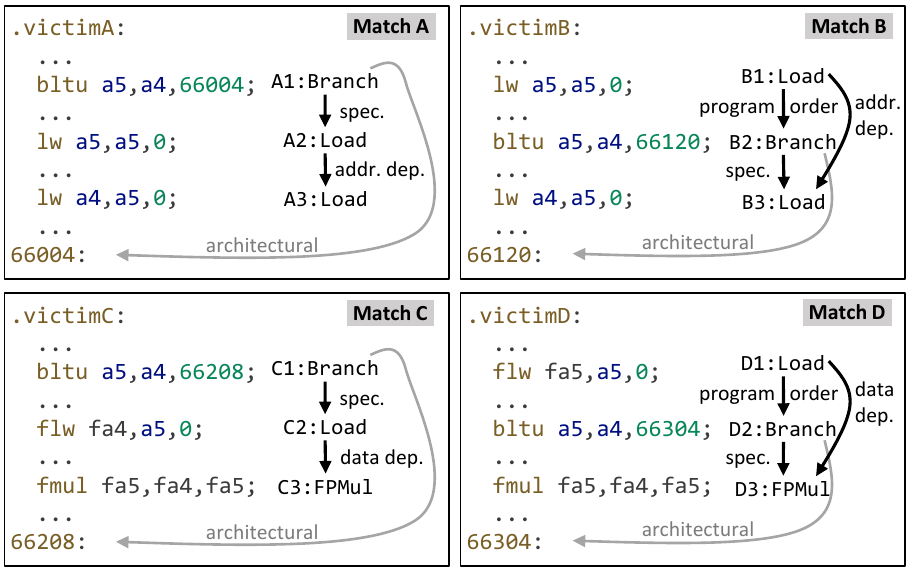}
    \caption{Patterns A, B, C, D matched against executions of programs \baseCcb{victimA},\baseCcb{victimB},\baseCcb{victimC},\baseCcb{victimD} respectively.}
    \label{fig:patternmatch}
\end{figure}

\subsubsection*{Patterns are tied to gadget-specific structure
\footnote{Which is why we call these \textit{attack} patterns, since they only capture
specific attack executions, and not the common underlying vulnerability.}}

While attack pattern-based analysis has the benefit of greater scalability,
patterns do not generalize well to other gadget variants that exploit the 
same underlying microarchitectural features.
To illustrate this, function \baseCcb{victimB} in Fig. \ref{fig:programexamples} 
is a modified version of \baseCcb{victimA}, where the load \baseCcb{arr1[i]} 
is performed non-speculatively, i.e. before the branch.
Due to this load$-$branch inversion, pattern A does not match the execution 
of \baseCcb{victimB}, even though both \baseCcb{victimA} and \baseCcb{victimB}
exploit the same (Spectre-BCB) mechanism.
In Fig. \ref{fig:programexamples} we provide pattern B that reorders 
the branch with the first load, and thus 
\textit{can} capture \baseCcb{victimB} (depicted in Fig. \ref{fig:patternmatch} Match B).
This highlights the fact that attack patterns are tied to gadget-specific structure.

\begin{table}
    \centering
\resizebox*{0.47\textwidth}{!}{
    \begin{tabular}{c|c}
        \hline
        $\semhyper$-approach & Non-interference (NI) variant  \\ \hline \hline
        \makecell{A Formal Approach to \\ to Secure Speculation \cite{Cheang2019AFA}}  & \makecell{Trace-property dependent \\ observational-determinism} \\  \hline
        Spectector \cite{Guarnieri2020SpectectorPD} & Speculative non-interference \\ \hline
        InSpectre \cite{Balliu2020InSpectreBA} & Conditional non-interference \\ \hline 
        \rowcolor{gray!15}
        Hardware-software contracts \cite{Guarnieri2021HardwareSoftwareCF} & Contract conditioned NI \\ \hline
        \rowcolor{gray!15}
        \cellcolor{gray!15}\makecell{Automated detection of \\ speculative attack combinations \cite{Fabian2022AutomaticDO}}  & Speculative NI  \\ \hline
    \end{tabular}
}
    \vspace{1em}
    \caption{Summary of non-interference variants used in some $\semhyper$ approaches.}
    \label{tab:semhyper-approaches}
\end{table}

\subsubsection{$\semhyper$: Semantic Hyperproperty-based Analysis}
\label{subsubsec:semhyper}

Non-interference-based hyperproperties \cite{Clarkson2008Hyperproperties} 
allow security specifications such as confidentiality
(``secret variables should not affect public outputs'') and integrity
(``public inputs should not affect protected variables'').
$\semhyper$-approaches, e.g., \cite{Cheang2019AFA, Guarnieri2020SpectectorPD, Balliu2020InSpectreBA},
formulate security semantically using hyperproperties
and verify them against a platform model.

\subsubsection*{Hyperproperties allow uniform security specification}
The hyperproperty $\text{H}_\text{AB}$ (Fig. \ref{fig:programexamples} (right)) 
specifies the memory \textit{outside} (public) arrays 
\baseCcb{arr1}, \baseCcb{arr2} as being secret, and the cache tags as being public output.
This captures a cache-based side channel where the tags
are attacker-observable (e.g., using Prime+Probe \cite{Liu2015LastLevelCS}).
\baseCcb{victimA} violates $\text{H}_\text{AB}$, since the cache state
is tainted with the (speculatively loaded) address of 
the second load which is a value outside \baseCcb{arr1}, \baseCcb{arr2}.
Although it has a different structure than \baseCcb{victimA},
\baseCcb{victimB} also violates $\text{H}_\text{AB}$
as it too leaves a cache residue.
Thus, unlike attack patterns, hyperproperties such as $\text{H}_\text{AB}$ are agnostic to gadget-specific structure.
By uniformly capturing several exploit-variants targetting a particular 
microarchitectural vulnerability, $\semhyper$-approaches provide wide-scoped, strong 
guarantees.

\subsubsection*{Hyperproperty verification}
$\semhyper$-approaches base their analysis on variants of
non-interference, which we summarize in Tab. \ref{tab:semhyper-approaches},
and discuss in more detail in \S\ref{subsec:securitydef}.
They verify this hyperproperty against a 
\textit{semantic platform model}, which
identifies (a) system state (variables in the system),
and (b) execution semantics of operations.
Fig. \ref{fig:simpleplatformmodel} illustrates a
platform fragment with register file, memory and cache state,
and \pseudocb{load} and \pseudocb{alu} operations.
Non-interference-based hyperproperties 
can be converted into a single trace property by performing \textit{self-composition},
i.e., composing together copies of the platform
(\cite{Clarkson2008Hyperproperties,Terauchi2005SecureIF}).
This single trace property can be checked 
by invoking a model checking or software verification procedure.

Self-composition-based verification against a semantic model
has two drawbacks.
Firstly, microarchitectural detail in the platform model
results in large verification queries.
(e.g., a query for $\text{H}_\text{AB}$ over the 
platform from Fig. \ref{tab:platformstate} would have to 
encode the cache state).
In comparison, architectural level pattern-based queries
(e.g., with pattern A) are smaller, enabling faster verification.
Secondly, self-composition results in a further increase in the state-space
(and query) and consequentially adversely affects performance.

\begin{figure}
    \centering
\begin{lstlisting}[style=pseudostyle]
// System state
var regs  : [regindex_t]word_t
var cData : [index_t]word_t
var cTag  : [index_t]tag_t
var mem   : [word_t]word_t
// Load operation
operation load (rs, rd, imm) {
  addr = regs[rs] + imm;  // Compute the address
  if (cacheTag(addr_to_tag(addr)) == addr) {
    ... // Load from cache if hit
  } else {
    data = mem[addr]; // Load from memory
  }
  regs[rd] = data; // Register writeback
}

// Generic ALU Register-Register operation
operation alu (rs1, rs2, rd, op) {
  if (op == ADD) regs[rd] = regs[rs1] + regs[rs2];
  ...
}
\end{lstlisting}
\caption{Fragment of a platform model with 
state variables and \pseudocb{load} and \pseudocb{alu} operation semantics.}
\label{fig:simpleplatformmodel}
\end{figure}

\subsection{Why convert from $\semhyper$ to $\attpat$?}
\label{subsec:cando}

\subsubsection*{Manual pattern generation is error-prone}

While more scalable, $\attpat$ approaches require creation of 
several patterns, due to their gadget-specificity.
To avoid unsound analysis (e.g., using only pattern A on \baseCcb{victimB}),
it is important that patterns are not missed, e.g., pattern B which we have not observed being formulated previously.

Moreover, patterns need to be recreated for newer microarchitectures (with newer vulnerabilities).
Consider \baseCcb{victimC} (Fig. \ref{fig:programexamples} left) which replaces a cache-based 
side channel (as in \baseCcb{victimA}) with a computation unit-based channel.
Its variant \baseCcb{victimD} inverts the first load and the branch (as in \baseCcb{victimB}).
These examples are inspired from \cite{Vicarte2021OpeningPB}, which 
hypothesizes the existence of computation-unit based side-channels on 
microarchitectures with data-operand-dependent timing 
\cite{IntelDOIT,Ge2018ASO,Vicarte2021OpeningPB}.
Existing work, which targets cache-based side channels, misses patterns C and D (Fig. \ref{fig:programexamples} center) 
that capture \baseCcb{victimC} and \baseCcb{victimD}.
Automating pattern generation can make patterns more comprehensive and cover newer microarchitectural features.

\subsubsection*{Semantic hyperproperties as a specification for automated pattern generation}

In this work, we propose a technique to automatically generate patterns for a 
given hyperproperty and microarchitecture. 
As an example we were able to automatically generate patterns C and D
from the (shared) hyperproperty $\text{H}_\text{CD}$.

\textbf{Our key insight} is using the semantic hyperproperty as a specification to guide 
pattern generation.
Since a hyperproperty can capture an entire class of exploits targetting a vulnerability, 
we can use it to determine whether a given pattern yields an exploit
by checking it against the hyperproperty. 
By automatically checking several candidates, 
we can identify a comprehensive set of patterns for that hyperproperty.
Thus, our technique replaces manual pattern creation with 
the requirement of specifying a hyperproperty and platform model.
To summarize, by developing an automated conversion technique 
from $\semhyper$-specifications to $\attpat$-based patterns,
\textbf{we combine the low-overhead, uniform specifications and formal guarantees of 
$\semhyper$ with the superior verification scalability of $\attpat$ to get the best of both 
worlds.}

\newcommand{\specvar}{\mathit{spec}}

\section{System Model}
\label{sec:progmodel}

In this section, we introduce our formal model for hardware 
platforms, which we later use to develop our 
problem formulation (\S\ref{sec:problemform}).
At a high level, the hardware platform is an
operational transition system over architectural 
and microarchitectural state variables.
Instructions executed on the platform induce transitions 
over this state.
We summarize these elements in Table \ref{tab:platformstate}.

\subsubsection{State}

The platform consists of variables $\setvars$
which take values from a domain $\adomain$.
$\setvars$ includes architectural 
($\setarchvars$) 
and microarchitectural 
($\setmarchvars$) 
variables.
The platform state is an assignment to these variables, 
$\anassignment: \setvars \rightarrow \adomain$.
We denote the set of all assignments as $\setassignments = \setvars \rightarrow \adomain$.

\begin{table}[t]
        \renewcommand{\arraystretch}{1.5}
    \centering
    \begin{tabular}{rrl}
    \makecell[r]{Variables (arch. and march.)}  & $v :$ & $\setvars =
    \setarchvars \cup \setmarchvars$  \\[0.1cm]
    \makecell[r]{States \\ (variable assignments)} & 
        $\sigma :$ 
    & 
        $\Sigma = \setvars \rightarrow \adomain$ 
    \\[0.1cm] \hline
    Operation code & $\anop :$ & $\setops$  \\
    Instructions & $\aninstr :$ & $\setinstrs = \{\anop(\anopermap)\}_{\anop, \anopermap}$ \\ \hline
    Full (speculative) semantics & $\transrel$ :& $\setinstrs \times \Sigma \rightarrow  \Sigma$ \\
    Non-speculative semantics & $\transrel_\nspec$ :& $\setinstrs \times \Sigma \rightarrow  \Sigma$ \\
    \end{tabular}
    \vspace{0.2em}
    \caption{Platform state and operational semantics.}
    \label{tab:platformstate}
\end{table}

\subsubsection{Instruction semantics}

The platform executes a set of instructions
of form $\aninstr = \anop(\anopermap)$,
where $\anop$ is the instruction opcode, 
and $\anopermap$ are operands.
The platform assigns two \textit{transition semantics}
to each instruction:
a full semantics (allowing speculation) denoted as $\transrel$
and
a non-speculative semantics denoted as $\transrel_\nspec$.
Both semantics can be viewed as functions taking an 
instruction and the current platform state as input and returning the next 
platform state (obtained after executing the instruction).
The full semantics defines the behaviour when
the platform \textit{can} speculate (not necessarily enforcing speculative behaviour at all times)
while the non-speculative semantics
defines behaviours when speculation is disabled.

\begin{example}
\label{ex:formalmodel}
For the platform model in Fig. \ref{fig:simpleplatformmodel},
the architectural and microarchitectural variables are 
$\setarchvars = \{\pseudocb{mem}, \pseudocb{regs}\}$ and
$\setmarchvars = \{\pseudocb{cacheTag}, \pseudocb{cacheData}\}$,
with $\setvars = \setarchvars \cup \setmarchvars$.
The semantics of the load instruction 
updates the register file \pseudocb{regs}
and the cache state variables (\pseudocb{cacheData}, \pseudocb{cacheTag}) as
defined in Fig. \ref{fig:simpleplatformmodel}.
\end{example}

\subsubsection{Modelling speculation}
\label{subsubsec:specmodel}
Instructions, e.g., branches or loads (store-to-load forwarding), signal that they are 
initiating speculation by setting a variable $\specvar \in \setvars$.
Internally, the full semantics $\transrel$ defines instruction behaviour
by conditioning on $\specvar$: $\specvar$ being set
implies that the platform is \textit{currently} speculating.
Indeed, $\specvar$ can be set only in the full semantics ($\transrel$)
and not in $\transrel_\nspec$.
Prior work considers specialized semantics
that define when $\specvar$ is set
(e.g., oracle-based semantics \cite{Guarnieri2020SpectectorPD}).
Since we adopt a hardware-oriented model, we assume 
that this is explicitly defined in the transition semantics
of the speculating instruction.
Despeculation is assumed to be similarly defined 
(this time, however, by unsetting $\specvar$).

In our current implementation, we restrict speculation
to a single frame, and do not support nested speculation 
(e.g., speculative loads within a branch speculation context).
However, this is not a fundamental limitation of our approach;
extension to nested speculation is possible by 
defining a stack of frames storing architectural state.

\subsubsection{Executions}

The platform consumes a stream of instructions,
and transitions on them, thereby producing a 
trace of states.
Then, an input instruction stream 
$\aprogram = \aninstr_0, \aninstr_1, \ldots, \aninstr_n$,
starting in state $\anassignment_0$
leads to a sequence of states
$\atrace = \anassignment_0 ~ \anassignment_1 ~ \ldots ~ \anassignment_{n+1}$,
where $\anassignment_{i+1} = \transrel(\aninstr_i, \anassignment_i)$
(under the full semantics) and 
$\anassignment_{i+1} = \transrel_\nspec(\aninstr_i, \anassignment_i)$
(under the non-speculative semantics).
The execution generated by instruction stream $\aprogram$ 
from initial state $\anassignment$
is denoted as $\transrelsem{\aprogram}{\anassignment}$.
We similarly define non-speculative executions
$\transrelsemnspec{\aprogram}{\anassignment}$ in which instructions 
follow the $\transrel_\nspec$ transition relation.

\section{Specifications and Problem Formulation}
\label{sec:problemform}

In this section we formalize hyperproperty specifications (\S\ref{subsec:securitydef})
and the notion of attack patterns
that our approach aims to generate (\S\ref{subsec:transmit}).
We then discuss a technical limitation of pattern-based approaches in \S\ref{subsec:largeskel},
and formulate the problem statement in \S\ref{subsec:problemstmt}.

\subsection{Hyperproperty-based Security Specification}
\label{subsec:securitydef}

We follow existing work (\cite{Clarkson2008Hyperproperties,Guarnieri2020SpectectorPD}) 
to formalize non-interference-based security specifications.

\textbf{Non-interference}
(\cite{Clarkson2008Hyperproperties}) 
states that any pair of executions 
which begin in states with equivalent values of 
\textit{public (non-secret) variables} ($\setvars_\pubinp$) 
continue to have states with equivalent values of \textit{observable variables} ($\setvars_\obsout$):
\footnote{
Here, $\anassignment_1 \equiv_{\setvars'} \anassignment_2$ for $\setvars' \subseteq \setvars$ means that 
$\anassignment_1(v) = \anassignment_2(v)$ for all $v \in \setvars'$ (the assignments agree on all 
variables in $\setvars'$). For traces $\atrace_1$ and $\atrace_2$, $\atrace_1 \equiv_{\setvars'} \atrace_2$
holds if $\atrace_1[i] \equiv_{\setvars'} \atrace_2[i]$ for all $i$.
}
\begin{align*}
    \aprogram \models &~\mathsf{NI}(\initialpred, \setvars_{\pubinp}, \setvars_\obsout) \overset{\Delta}{=} \forall \anassignment_1, \anassignment_2 \in \initialpred. \\
    &\anassignment_1 \equiv_{\setvars_{\pubinp}} \anassignment_2 \implies \transrelsem{\aprogram}{\anassignment_1} \equiv_{\setvars_{\obsout}} \transrelsem{\aprogram}{\anassignment_2}
\end{align*}
This property is parameterized by the choice of $\initialpred$, $\setvars_\pubinp$ 
and $\setvars_\obsout$ and intuitively says that the observable variables are not affected
by the secret ($\setvars_\secinp = \setvars\setminus\setvars_\pubinp$) variables.
Non-interference expresses security against an attacker
that tries to infer the values of $\setvars_\secinp$  
by observing $\setvars_\obsout$.

\textbf{Speculative non-interference}
enforces non-interference 
\textit{only if} the program is non-interfering under non-speculative semantics:
\begin{align*}
    \aprogram \models &~\mathsf{SNI}(\initialpred, \setvars_\pubinp, \setvars_\obsout) \overset{\Delta}{=} \forall \anassignment_1, \anassignment_2 \in \initialpred. \\
    &(\anassignment_1 \equiv_{\setvars_{\pubinp}} \anassignment_2 \land \llbracket \aprogram \rrbracket_{\nspec}(\anassignment_1) \equiv_{\setvars_{\obsout}}  \llbracket \aprogram \rrbracket_{\nspec}(\anassignment_2)) \implies \\
    &\transrelsem{\aprogram}{\anassignment_1} \equiv_{\setvars_{\obsout}} \transrelsem{\aprogram}{\anassignment_2}
\end{align*}
Intuitively, SNI restricts the scope of non-interference enforcement,
we refer the reader to \cite{Guarnieri2020SpectectorPD, Cheang2019AFA} for more details.

Conditional/contract-based non-interference \cite{Guarnieri2021HardwareSoftwareCF} is another variant
which also restricts the scope of non-interference using 
an architectural semantics. It requires that a program
be non-interfering in the full semantics if it is non-interfering
in the architectural semantics. 
While we focus our presentation on non-interference
our technique also applies to these variants,
as demonstrated in \S\ref{sec:evaluation}.

\subsection{Attack Pattern-based Security}
\label{subsec:transmit}

\subsubsection{Patterns}

A pattern $\apattern$ is a pair $(\askeleton, \apred)$ 
of a template $\askeleton$ and a constraint ($\apred$), i.e., a boolean formula.
The template is a sequence over opcodes
$\askeleton = \anop_0 \cdot \anop_1 \cdots \anop_k$
that \textit{structurally} restricts executions the 
pattern can be embedded in, to those with an opcode subsequence matching the template.
The constraint ($\apred$) further \textit{semantically} 
constrains matching executions to those satisfying it.

\begin{figure}
    \centering
    \includegraphics[width=0.9\linewidth]{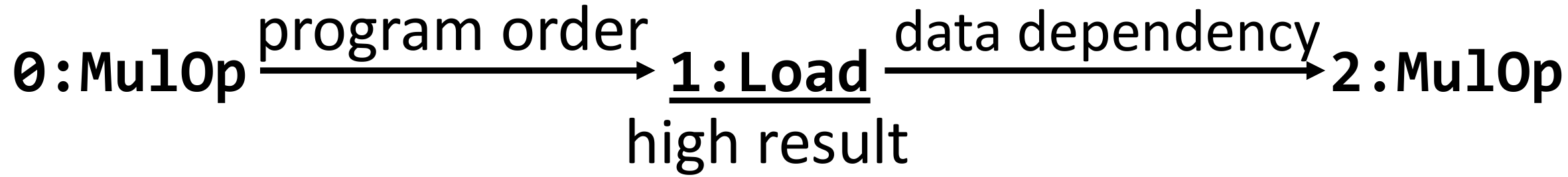}
    \caption{A pattern for a computation-based side channel.}
    \label{fig:patternexample}
\end{figure}

\begin{example}[$\fpop - \ldop - \fpop$ pattern]
The pattern from Fig. \ref{fig:patternexample} is formalized as $(\askeleton, \apredicate)$ where,
$\askeleton = (0:\fpop) \cdot (1:\ldop) \cdot (2:\fpop)$
and the constraint is $\phi \equiv \datadep(1:\ldop, 2:\fpop) \land 
\highresult(1: \ldop)$, i.e., there is a 
data dependency between the load and the second multiplication operation,
and a the loaded result is high (secret dependent).
Intuitively, this pattern matches executions satisfying $\phi$ 
with a $\fpop - \ldop - \fpop$ instruction subsequence.
\label{ex:pattern}
\end{example}

\subsubsection{Execution-embedding}

Now we formalize when a pattern embeds (matches) an execution,
which determines which programs the pattern flags as exploits.
Consider pattern $\apattern = (\askeleton, \apred)$ with template $\askeleton = \anop^*_1\cdots\anop^*_k$,
and an execution $\atrace = \anassignment_0 \cdot \anassignment_1 \cdots \anassignment_n$.
Let the sequence of opcodes in $\atrace$ be $\anop_1 \cdots \anop_n$, i.e.,
the transition from $\anassignment_i$ to $\anassignment_{i+1}$ is performed by executing
an instruction with opcode $\anop_{i+1}$.
The pattern $\apattern$ embeds in execution $\pi$ at a subsequence given by indices $(i_1 < \cdots < i_k)$ 
if the corresponding opcodes match the template $\askeleton$: 
$\anop_{i_j} = \anop^*_{j} \text{ for } j \in [1\cdots k]$,
and the execution $\pi$ satisfies the constraint $\apred$.
We denote the fact that $\apattern$ embeds at indices $(i_1, \cdots, i_k)$ in trace $\atrace$
as $\atrace \models_{(i_1, \cdots, i_k)} ~(\askeleton, \apred)$.
Execution $\atrace$ embeds $(\askeleton, \apred)$ if there is a matching subsequence:
$$\atrace \models (\askeleton, \apred) \overset{\Delta}{=} \exists i_1, \cdots, i_k.~~ \atrace \models_{(i_1, \cdots, i_k)} (\askeleton, \apred)$$

Attack pattern $\apattern = (\askeleton, \apred)$ matches 
instruction sequence $\aprogram$ if there is some execution of $\aprogram$ 
that embeds it:
\begin{equation*}
    \aprogram \models \apattern \overset{\Delta}{=} \exists \anassignment \in \initialpred.~ \transrelsem{\aprogram}{\anassignment} \models \apattern
\end{equation*}
Fig. \ref{fig:patternmatch} provides examples of patterns matching instructions.

\newcommand{\setpatterns}{\mathsf{P}}

\subsection{Non-interference Violation Skeleton}
\label{subsec:largeskel}

Given a hyperproperty, we aim to generate a set of patterns such that any 
hyperproperty violation is detected by atleast one of the patterns in this 
set.
However, as illustrated in Example \ref{ex:largeskel} the fact that patterns 
have fixed length is a fundamental limitation in the 
violations they can detect.

\begin{figure}[h]
    \centering
    \includegraphics[scale=0.2]{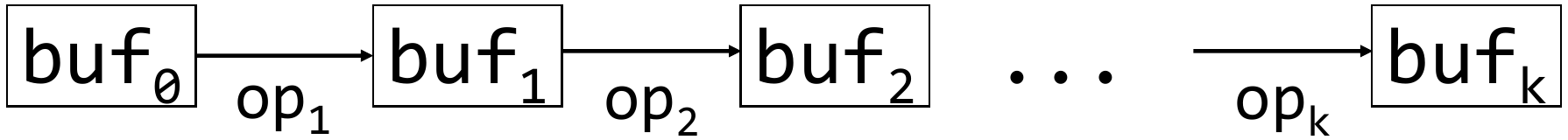}
    \caption{Buffer chain in $\platsynth(k)$ with operations.}
    \label{fig:largeskel}
\end{figure}

\begin{example}[Large skeletons: the $\platsynth(k)$ platform]
    Consider the pedagogical example microarchitecture illustrated in Fig. \ref{fig:largeskel} 
    with $k+1$ state variables (e.g., buffers): $\abuf_0, \cdots, \abuf_k \in \setvars$,
    and $k$ corresponding operations $\anop_1, \anop_2, \cdots, \anop_k \in \setops$.
    Operation $\anop_i$ moves data from $\abuf_{i-1}$ to $\abuf_i$.
    Now, consider a non-interference property with 
    $\setvars_{\pubinp} = \{\abuf_1, \cdots, \abuf_k\}$, and $\setvars_{\obsout} = \{\abuf_k\}$.
    That is we want to identify whether the secret input $\abuf_0$ affects the observable 
    output $\abuf_k$.
    While the operation sequence $\anop_1 \cdots \anop_k$ violates this property
    (it moves data from $\abuf_0$ to $\abuf_k$),
    any sequence of length $k-1$ or less does not.
    \label{ex:largeskel}
\end{example}

\newcommand{\dep}{\mathsf{dep}}
\newcommand{\lwriter}{\mathsf{lw}}
\newcommand{\idep}{\mathsf{idep}}

Given the possibility of large non-interference violations
that would be greater than the size of \textit{any} fixed set of 
patterns, we qualify our problem statement to detecting those
violations with small \textit{skeletons}, as we now define.

Consider an instruction sequence $\aprogram = \aninstr_0 ~ \aninstr_1 ~ \cdots$,
and a corresponding execution trace 
$\atrace = \transrelsem{\aprogram}{\anassignment}$.
For a trace index $i$, we denote the variables that 
$\aninstr_i$ depends on as $\dep_\atrace(i) \subseteq \setvars$.
We denote the last writer of a variable $v\in\setvars$ at index $i$, denoted as 
$\lwriter_\atrace(v, i) \in \{\aninstr_0, \cdots, \aninstr_{i-1}\}$
as the last instruction that writes to $v$ before $\aninstr_i$.
Finally, the set of all (instruction) dependencies of an 
instruction $\aninstr_i$ is the union of the last writers 
of all variables it depends on:
$\idep_\atrace(i) = \bigcup_{v \in \dep_\atrace(i)} \lwriter_\atrace(v, i)$.
For a trace $\atrace$, we say that $i_1, \cdots, i_k$ 
is a subsequence that is \textit{closed under dependencies} if 
for all $j \in [1\cdots k]$,
$\idep(i_j) \in \{i_1, \cdots, i_{k}\}$.

\begin{definition}[Skeleton]
    Suppose the sequence of instructions $\aninstr_1 \cdots \aninstr_n$
    violates a non-interference property $\mathsf{NI}$. 
    Then, $\aprogram$ has an $\mathsf{NI}$-violation skeleton of size $k$,
    denoted as $\aprogram \not\models_k \mathsf{NI}$ if 
    there exist traces $\atrace_1, \atrace_2$ with subsequences of
    length $k$ which are closed under dependencies 
    and also violate $\mathsf{NI}$.
\end{definition}

\subsection{Formal Problem Statement}
\label{subsec:problemstmt}

Formally, given
(a) a platform model (with state $\setvars$ and semantics $\transrel$), and
(b) a non-interference security specification $\mathsf{NI}$($\initialpred$, $\setvars_{\pubinp}$, and $\setvars_{\obsout}$),
and (c) a given skeleton size $k$,
we generate a set of patterns $\setpatterns$ such that 
any instruction sequence $\aprogram$ that violates 
the non-interference property with a skeleton of size $k$ 
is detected by one of the patterns in $\setpatterns$:
\begin{equation}
    \forall \aprogram.~~ \aprogram \not\models_k \mathsf{NI}(\initialpred,\setvars_{\pubinp},\setvars_{\obsout}) 
    \implies \exists \apattern\in \setpatterns.~  \aprogram \models \apattern
    \label{eq:completeness}
\end{equation}

We refer to Eq. \ref{eq:completeness} as the $k$-completeness property.
\section{The SemPat Approach}
\label{sec:approach}

\newcommand{\gentemp}{\textbf{GenerateTemplates}}
\newcommand{\conspec}{\textbf{ConstraintSpecialize}}

\begin{figure*}[h]
    \centering
    \includegraphics[width=0.9\textwidth]{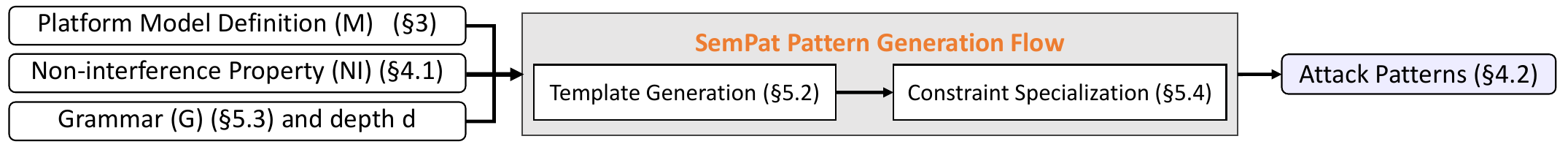}
    \caption{The SemPat approach.}
    \label{fig:sempatpipeline}
\end{figure*}

In this section we discuss our approach to automatically generate 
attack patterns satisfying $k$-completeness (Eq. \ref{eq:completeness}).
We provide an overview in \S\ref{subsec:gen-overview}, 
followed by details 
in \S\ref{subsec:skelexplore} and \S\ref{subsec:grammar-search}.

\begin{figure}
    \centering
\begin{lstlisting}[style=pseudostyle,mathescape=True]
type rentry_t = record { op1: word_t, op2: word_t, result: word_t };
var reuse_buf : [instr_ind_t]rentry_t;

operation mulop (rd, rs1, rs2) {
  op1 = regfile[rs1]; op2 = regfile[rs2];
  // Check for reuse
  if ($\exists$ i. reuse_buf[i].op1 == op1 &&
      reuse_buf[i].op2 == op2) {
    result = reuse_buf[i].result; // Reuse result
  } else {  // Otherwise invoke multiplier unit
    result = multiplier(op1, op2);
    mulcount = mulcount + 1;
  }
  // Replace reuse_buf entry (at index ind_)
  reuse_buf[ind_] = rentry_t {op1, op2, result};
  regfile[rd] = result;
}
\end{lstlisting}
    \caption{$\platcompreuse$: Fragment of the computation reuse platform model.}
    \label{fig:platcolisting}
\end{figure}

\subsection{Pattern Generation Overview}
\label{subsec:gen-overview}

\subsubsection*{Operation}

Figure \ref{fig:sempatpipeline} provides a high-level 
overview of our approach.
We take in a transition system-based platform model 
(\S\ref{sec:progmodel}),
a non-interference specification (\S\ref{subsec:securitydef}), 
a constraint grammar $\agrammar$ and depth $\adepth$. 
We generate a set of patterns $\setpatterns$ up to depth $\adepth$, 
with constraints sourced from $\agrammar$ (discussed in \S\ref{subsec:patgrammar}).
Our approach has two components: template generation (\S\ref{subsec:skelexplore}) and 
grammar-based specialization (\S\ref{subsec:grammar-search}).
We explain these elements using a running example.

\subsubsection*{Running Example: Computation Reuse}
We use the running example of the generation of the
$\fpop - \ldop - \fpop$ pattern from Fig. \ref{fig:patternexample}.
We generate this pattern based on the computation-reuse 
platform model ($\platcompreuse$), an excerpt of which is provided in 
Fig. \ref{fig:platcolisting}.
The $\platcompreuse$ microarchitecture includes a reuse
buffer ($\rbuf$) that stores the operands and results of previous 
multiplication (\texttt{mul}) instructions.
Future \texttt{mul} instructions matching operands of 
previous instructions, can reuse results from the buffer 
instead of reinvoking the multiplier.
We encode the security property that secret data from the 
memory should not affect the count of multiplier invocations
as the property
$\mathsf{NI}_\platcompreuse(\mathsf{init}_\rbuf, \texttt{mem}, \texttt{mulcount})$
where $\mathsf{init}_\rbuf$ constrains 
all entries in the $\rbuf$ to be initially invalid and
$\texttt{mulcount} \in \setvars$ counts multiplier invocations.

\subsection{Template Generation}
\label{subsec:skelexplore}

The first phase of our approach uses an overapproximate analysis
to generate templates up to the user-specified depth $\adepth$.
This is performed by the \gentemp{} procedure. We discuss \gentemp{} intuitively and illustrate 
it with an example, while defering its pseudocode to the appendix.
\gentemp{} scans over all templates starting with size 1
(single operations) up to size $\adepth$.
For each template, we first perform (overapproximate) taint analysis 
to check whether the template propagates taint from the secret inputs 
($\setvars_\secinp$) to the public outputs ($\obsvars$).
If it does not, then the template will also not violate the 
target non-interference property, and we can skip over it.
If it does propagate taint from $\setvars_\secinp$ to $\obsvars$, 
we check if the template violates the target non-interference property
\textit{semantically}.
We do this by reducing non-interference to a safety query 
(e.g., \cite{Rushby1992NoninterferenceTT,Clarkson2008Hyperproperties}),
and solving this query using SMT-based model checking (\cite{SymbolicMC,Barrett2009SatisfiabilityMT}).

\begin{example}[Template generation for $\fpop - \ldop - \fpop$]
    Consider invoking \gentemp{} for the 
    $\platcompreuse$ platform (Fig. \ref{fig:platcolisting}) with
    $\mathsf{NI}_\platcompreuse$ as the non-interference property
    and a search depth of 3.
    \gentemp{} first considers single operation templates.
    However, none of them propagate taint from 
    \texttt{mem} to \texttt{mulcount}.
    Subsequently, \gentemp{} finds 
    the two-operation template $\ldop - \fpop$ does 
    propagate taint from \texttt{mem} to \texttt{mulcount}.
    However, this template does not (semantically) violate non-interference 
    as the $\rbuf$ is initially empty, and the multiplier
    must be invoked in all executions.
    Eventually, \gentemp{} considers the 
    size 3 template $\fpop - \ldop - \fpop$. 
    This template both, propagates taint and semantically violates $\mathsf{NI}_\platcompreuse$.
    This is because the first $\fpop$ `primes' the reuse buffer, 
    and the second $\fpop$ uses the primed buffer entry.
    This will lead to two executions with different 
    $\texttt{mulcount}$ values, i.e., a non-interference violation.
\end{example}

\gentemp{} generates all operation sequences (up to length $d$) 
that violate the non-interference property:
\begin{lemma}
    \label{lem:templategen}
    For $k \leq d$, if $\aninstr_1 \cdots \aninstr_k \not\models \mathsf{NI}$
    where instruction $\aninstr_i = \anop_i(\anopermap_i)$, then, $\anop_1 \cdots \anop_k \in \gentemp{}(M, \mathsf{NI}, d)$.
\end{lemma}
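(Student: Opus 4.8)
The plan is to show that the template $\anop_1\cdots\anop_k$ is both enumerated by \gentemp{} and retained by both of its filters. Enumeration is immediate: \gentemp{} scans every opcode sequence of length $1$ through $\adepth$, and since $k \le \adepth$ the sequence $\anop_1\cdots\anop_k$ is among those scanned. So the real work is to show the template survives the overapproximate taint filter and then the semantic non-interference check, since a template is placed in the output set exactly when both succeed.

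First I would argue that the taint filter is passed. The lemma to invoke (or establish) is soundness of the overapproximate taint analysis: for an opcode sequence, the analysis conservatively captures, over \emph{all} operand instantiations, every semantic dependence of an observable variable on a secret variable. Given $\aprogram = \aninstr_1\cdots\aninstr_k \not\models \mathsf{NI}$, unfolding the definition of $\mathsf{NI}$ produces two initial states agreeing on $\setvars_\pubinp$ whose executions under $\transrel$ differ on some variable in $\obsvars$; hence along this concrete run some observable value is a function of the differing secret component, i.e.\ there is a genuine flow from $\setvars_\secinp$ to $\obsvars$. Soundness of the analysis then forces it to report a flow from $\setvars_\secinp$ to $\obsvars$ when run on $\anop_1\cdots\anop_k$, so the template is not pruned. (If \gentemp{} prunes extensions of taint-free prefixes rather than re-analyzing each template in full, I would additionally need a monotonicity argument; I would instead appeal to the formulation in which each template is analyzed independently, matching the informal description.)

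Next I would argue the semantic check is passed. Here \gentemp{} leaves the template's operands symbolic and reduces the statement ``some instantiation of $\anop_1\cdots\anop_k$ violates $\mathsf{NI}$'' to a safety query on the self-composed platform (the standard non-interference-to-safety reduction), discharged by SMT-based model checking. The concrete operands $\anopermap_1,\dots,\anopermap_k$ together with the two witnessing initial states for $\aprogram\not\models\mathsf{NI}$ form a counterexample to that safety query; by soundness of the reduction and completeness of the model checker for the resulting finitely-bounded query, the check reports a violation, so the template is retained. Assembling these two filter arguments yields $\anop_1\cdots\anop_k \in \gentemp{}(M,\mathsf{NI},\adepth)$.

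I expect the main obstacle to be the taint-filter step: one must pin down and prove soundness of the overapproximate taint analysis \emph{at the opcode-template level}, where it must simultaneously account for all operand choices (operands such as register indices fix each instruction's read/write footprint, hence the dependency graph). Concretely, this decomposes into showing that the template-level analysis refines every instruction-level dependency analysis, plus the classical statement that instruction-level dependency analysis is sound with respect to semantic non-interference; the latter is standard, but the former needs care in this setting and is where I would concentrate the argument.
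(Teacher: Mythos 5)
Your proposal is correct and follows essentially the same route as the paper's own (very terse) proof: exhaustive enumeration up to depth $d$, soundness of the taint overapproximation (a template that does not propagate taint cannot violate $\mathsf{NI}$, so pruning is safe), and retention by the semantic check since the concrete operands and witnessing initial states instantiate the symbolic template. Your side worry about prefix pruning does not arise, since in the paper's Algorithm~\ref{alg:gen-templates-app} the recursive call is made unconditionally and each template is taint-analyzed independently, exactly the formulation you chose to appeal to.
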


\setcounter{algocf}{1}

\begin{algorithm}
    \small
\SetKwFunction{GrammarHelper}{ConsHelper}
\SetKwProg{Fn}{Function}{:}{}
\DontPrintSemicolon
    \KwInput{Semantic platform definition $M$, non-interference property $\mathsf{NI}$, pattern template $\askeleton$, and a grammar $G$}
    \KwOutput{A set of patterns}
    \KwData{acc: an accumulated set of patterns}
    
    \Fn{\GrammarHelper{$\apred, i, L$}}{
        \tcc{Exhausted all atomic predicates?}
        \lIf{$i > |L|$}
        {
            acc.append($(\askeleton, \apred)$)
        }
        \tcc{Does adding $\neg L[i]$ eliminate all violations?}
        \lElseIf{
            $\forall \aninstr_1, \cdots, \aninstr_{|\askeleton|}.~  
            \aninstr_1 \cdots\aninstr_{|\askeleton|} \models (\askeleton, \apred \land \neg L[i]) \implies
            \aninstr_1, \cdots, \aninstr_{|\askeleton|} \models \mathsf{NI}$ \;
        }
        { \GrammarHelper{$\apred \land L[i]$, $i+1, L$} \tcc*[f]{add $L[i]$}}
        \lElse(\tcc*[f]{skip over $L[i]$}){
            \GrammarHelper{$\apred, i+1, L$};
        }
    }
    $L$ = ApplyPredicates($\askeleton, G$) \tcc*[r]{Create all atoms in $L$}
    \GrammarHelper{$\texttt{true}, 0, L$} \tcc*[r]{Counterfact. addition}
    \Return acc
\caption{\conspec{}($M, \mathsf{NI}, \askeleton, G$)}
\label{alg:gram-search}
\end{algorithm}

\begin{table*}
    \centering
\resizebox{\textwidth}{!}{
    \begin{tabular}{l|l|l}
    \hline
    Predicate Atom & Meaning & Encoding (assuming RISC-V ISA) \\ \hline
    $\datadep(\aninstr_1, \aninstr_2)$ & Data operand of $\aninstr_2$ depends on result of $\aninstr_1$ & Last write to $\aninstr_2.\rsone$ or $\aninstr_2.\rstwo$ is by $\aninstr_1$ \\
    $\addrdep(\aninstr_1, \aninstr_2)$ & Address operand of $\aninstr_2$ depends on result of $\aninstr_1$ & Last write to $\aninstr_2.\rsone$ is by $\aninstr_1$ \\
    $\sameaddr(\aninstr_1, \aninstr_2)$ & Address operands of $\aninstr_1$ and $\aninstr_2$ are the same & $\aninstr_2.\addr = \aninstr_1.\addr$ (for memory operations) \\
    $\diffaddr(\aninstr_1, \aninstr_2)$ & Address operands of $\aninstr_1$ and $\aninstr_2$ are different & $\aninstr_2.\addr \neq \aninstr_1.\addr$ (for memory operations) \\ \hline
    $\srcdata_\areg(\aninstr_1)$ & Data operand is read from register $\areg$ & e.g., $\aninstr_1.\rsone = \areg$, $\aninstr_1.\rstwo = \areg$ (depends on opcode) \\
    $\srcaddr_\areg(\aninstr_1)$ & Address operand is read from register $\areg$ & $\aninstr_1.\rsone = \areg$ (for memory operations) \\
    $\destreg_\areg(\aninstr_1)$ & Result is written to a register $\areg$ & $\aninstr_1.\rd = \areg$ \\ \hline
    $\speculates(\aninstr)$ & Instruction $\aninstr$ initiates speculation & $\aninstr$ sets the $\specvar$ variable \\ \hline
    $\lowoperands(\aninstr_1)$ & Operands of $\aninstr_1$ is independent of $\setvars_\secinp$ ($= \setvars\setminus\srcvars$) & e.g., $\sigma_1(\regfile[\rsone]) = \sigma_2(\regfile[\rsone])$ (depends on opcode) \\
    $\lowresult(\aninstr_1)$ & Result of $\aninstr_1$ is independent of $\setvars_\secinp$ ($= \setvars\setminus\srcvars$) & $\sigma_1(\regfile[\rd]) = \sigma_2(\regfile[\rd])$ \\ \hline
    \end{tabular}
}
    \caption{Pattern predicate grammar: we generate patterns with constraints as conjunctions of these predicate atoms.}
    \label{tab:facetgrammar}
\end{table*}

\subsection{Conjunction-based Pattern Grammar}

\label{subsec:patgrammar}

While templates alone are too overapproximate to be useful, augmenting (specializing) 
them with constraints ($\apred$) leads to more precise patterns with fewer false positives.
Specialization is performed using the user provided grammar $G$ that identifies the 
space of these constraints.

\subsubsection{Conjunction of Predicate Atoms}

\label{subsubsec:conjgrammar}

We consider constraints which are conjunctions of 
atoms.\footnote{In formal logic, an atom (atomic formula) is a single (indivisible) logical proposition.}
That is, $\apred$ has form: $\apred = \bigwedge_{i} \afacet_i$,
where each $\afacet_i$ is an atom.
These atoms ($\afacet_i$) are generated by applying predicates from a grammar $G$,
such as the one in Table \ref{tab:facetgrammar}, which we use as the default.
Each predicate from the grammar is applied to some
number of instructions from the pattern, as indicated
by its arity
(e.g., $\datadep$ is an arity-2 (binary) predicate).
To identify the instructions a predicate is applied to,
we distinguish apart identical opcodes
using their position
(e.g., the $\datadep$ predicate is applied to ($0:\fpop$), ($2:\fpop$) in Example \ref{ex:pattern}).

\subsubsection{Precision vs. Robustness Tradeoff}
\label{subsubsec:precisionrobustness}

A grammar that only allows high-level (architectural) predicates (e.g., Tab. \ref{tab:facetgrammar})
leads to patterns which are less sensitive to microarchitectural implementation
details, but have more false positives.
Conversely, a grammar that exposes low-level microarchitectural constraints leads 
to more precise patterns (with fewer false positives). 
However, these patterns are then specific to the platform microarchitecture.
Thus, the pattern grammar exposes a tradeoff between the
robustness and expressivity/precision of generated patterns.
While our specialization technique (\S\ref{subsec:grammar-search}) 
requires a conjunction-based grammar (\S\ref{subsubsec:conjgrammar}),
we are not fundamentally limited to the predicates from Table \ref{tab:facetgrammar}.
We explore this further in \S\ref{subsec:rq4} where we augment 
the default grammar with additional predicates to improve precision.

\subsection{Template Specialization with Predicates}
\label{subsec:grammar-search}

The goal of template specialization is making patterns as precise as possible
using the pattern grammar (\S\ref{subsec:patgrammar}),
while ensuring that they do not miss any
violating executions (as required by Eq. \ref{eq:completeness}). 
Specialization is performed by invoking the \conspec{} procedure (Alg. \ref{alg:gram-search}) 
on every pattern template generated by \gentemp{}.
At a high level, starting with the $\texttt{true}$ constraint (line 0),
\conspec{} continues adding predicate atoms to the constraint. 

\subsubsection{Candidate predicate atoms}
\label{subsubsec:gramatoms}

As introduced in \S\ref{subsec:patgrammar}, the pattern 
constraint is a conjunction of predicate atoms from grammar $G$. 
We apply each predicate to operations from the template 
to get a set of atoms.
For example, the binary $\datadep$ predicate with 
the $(0: \fpop) - (1: \ldop) - (2: \fpop)$ template
results in three atoms: $\datadep(0: \fpop, 1: \ldop)$, 
$\datadep(1: \ldop, 2: \fpop)$ or $\datadep(0: \fpop, 2: \fpop)$
(we ignore backwards dependencies).
\conspec{} first collects all such atoms in a list $L$
using the \text{ApplyPredicates} helper function (line 6).

\subsubsection{Counterfactual-based atom addition}
\label{subsubsec:counterfactual}

Adding an atom strengthens the pattern constraint, resulting 
in it capturing fewer executions.
To ensure that the generated pattern does not miss 
any non-interference violations (and thereby violate 
Eq. \ref{eq:completeness}) we use \textit{counterfactual atom addition}.
Counterfactual addition adds an atom only if adding the 
negation of the atom leads to only non-violating executions.
Intuitively, if the negation leads to only non-violating executions,
then adding the atom preserves all violating executions:
\begin{observation}
    \label{obs:counterfactual}
    Consider a pattern $(\askeleton, \apred)$, where $|\askeleton| = k$, 
    and an atom $\afacet$. We have, for all $\aprogram = \aninstr_1 \cdots \aninstr_k$:
\begin{align*}
    &\big(\aprogram \models (\askeleton, \apred \land \neg \afacet) \implies \aprogram \models \mathsf{NI}\big) \implies \\
    &\quad\quad \big((\aprogram \models (\askeleton, \apred) \land \aprogram \not\models \mathsf{NI}) \implies \aprogram \models (\askeleton, \apred \land \afacet)    \big)
\end{align*}
\end{observation}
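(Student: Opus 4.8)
The plan is to reduce Observation~\ref{obs:counterfactual} to a two-line propositional argument, once one structural fact about pattern embedding is in place. Fix an arbitrary program $\aprogram = \aninstr_1 \cdots \aninstr_k$ and abbreviate the four relevant propositions as $A \equiv (\aprogram \models (\askeleton, \apred \land \neg\afacet))$, $B \equiv (\aprogram \models \mathsf{NI})$, $C \equiv (\aprogram \models (\askeleton, \apred))$, and $D \equiv (\aprogram \models (\askeleton, \apred \land \afacet))$. The claim to be established is then $(A \implies B) \implies ((C \land \neg B) \implies D)$.

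The structural fact I would isolate and prove first is the \emph{embedding dichotomy}: $C \iff (A \lor D)$, i.e.\ $\aprogram$ embeds $(\askeleton, \apred)$ iff it embeds $(\askeleton, \apred \land \afacet)$ or it embeds $(\askeleton, \apred \land \neg\afacet)$. The $(\Leftarrow)$ direction is immediate monotonicity of embedding under constraint strengthening: both $\apred \land \afacet$ and $\apred \land \neg\afacet$ entail $\apred$, so any witness (initial state/pair of states together with an index subsequence $(i_1 < \ldots < i_k)$) for the stronger constraint is also a witness for $(\askeleton, \apred)$. For the $(\Rightarrow)$ direction I would unfold the definition of $\aprogram \models (\askeleton, \apred)$ from \S\ref{subsec:transmit}: there is $\anassignment \in \initialpred$ (a pair of states for the relational atoms of Table~\ref{tab:facetgrammar}) and an index subsequence such that the opcode sequence of $\transrelsem{\aprogram}{\anassignment}$ matches $\askeleton$ and the trace satisfies $\apred$ at those indices. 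Relative to this fixed witness the atom $\afacet$ has a determinate truth value, so the very same witness certifies $(\askeleton, \apred \land \afacet)$ if $\afacet$ holds and $(\askeleton, \apred \land \neg\afacet)$ otherwise; hence $D$ or $A$. (Since $|\aprogram| = |\askeleton| = k$ the index subsequence is actually forced to be $(1, \ldots, k)$, but the argument does not rely on this.)

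Given the dichotomy, the conclusion is pure propositional reasoning: assume $A \implies B$ and assume $C \land \neg B$; from $C$ and the dichotomy we get $A \lor D$; if $A$ held then $B$ would follow, contradicting $\neg B$, so $\neg A$ and therefore $D$ --- which is exactly the desired implication. I expect the only genuinely delicate point to be the $(\Rightarrow)$ direction of the embedding dichotomy, specifically the remark that a single embedding witness assigns a definite truth value to $\afacet$ --- whether $\afacet$ is a one-trace atom such as $\speculates$ or a relational two-trace atom such as $\lowresult$ --- so that strengthening $\apred$ by $\afacet$ or by $\neg\afacet$ preserves that witness. Everything else is bookkeeping over the definitions of $\models_{(i_1, \ldots, i_k)}$ and $\models$.
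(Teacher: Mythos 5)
Your proof is correct and follows essentially the same route as the paper: the paper proves only the multi-counterfactual generalization (Observation~\ref{obs:multicounterfactual}) and obtains Observation~\ref{obs:counterfactual} as the special case, via the set identity $U = S \setminus \bigcup_i S_i$, which encodes exactly your embedding dichotomy (a witness for $\apred$ that fails to witness $\apred \land \afacet$ must witness $\apred \land \neg\afacet$), followed by the same propositional step in contrapositive form. Your explicit justification that a fixed embedding witness assigns the atom a determinate truth value --- so the witness transfers to one of the two strengthened patterns --- is precisely the point the paper leaves implicit in asserting that identity.
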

For each atom in $L$, we check (Alg. \ref{alg:gram-search} line 3) 
if it satisfies the counterfactual addition condition,
specializing the pattern (line 4) if so.
If not, we skip over it (line 5) and move to the next atom in $L$.

\begin{example}
    For the $(0: \fpop) - (1: \ldop) - (2: \fpop)$ template,
    \conspec{} adds the $\afacet = \datadep(1: \ldop, 2: \fpop)$ 
    atom to the constraint, 
    since if $(2: \fpop)$ does not depend on $(1: \ldop)$,
    then its operands are not secret, and the 
    non-interference property is not violated.
    In a further iteration, \conspec{} adds the 
    atom $\highresult(1: \ldop)$ which says that the loaded value 
    is secret-dependent. Once again, the negation
    of this would lead to a non-violating execution.
    This gives us the final pattern with
    $w = (0: \fpop) - (1: \ldop) - (2: \fpop)$ and
    $\apredicate = \datadep(1: \ldop, 2: \fpop) \land \highresult(1: \ldop)$.
\end{example}

\subsubsection{Multiple counterfactuals and branching}

Counterfactual addition relies on a strong condition 
which may not hold for single atoms.
In such cases, we consider multiple counterfactual atoms:

\begin{observation}[Multiple counterfactuals]
    \label{obs:multicounterfactual}
    For pattern $(\askeleton, \apred$), 
    atoms $\{\afacet_i\}_i$, and for any $\aprogram = \aninstr_1 \cdots \aninstr_{|\askeleton|}$:
    \begin{align*}
        &\big(\aprogram \models (\askeleton, \apred \land 
        \bigwedge_i \neg \afacet_i) \implies \aprogram \models \mathsf{NI}\big) \implies \\
        &\quad\quad \big((\aprogram \models (\askeleton, \apred) \land \aprogram \not\models \mathsf{NI})\implies {\boldsymbol \bigvee_i} (\aprogram \models (\askeleton, \apred \land \afacet_i))\big)
    \end{align*}
\end{observation}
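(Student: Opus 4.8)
The plan is to prove this by a short case analysis over a single witnessing execution, generalizing the reasoning behind Observation~\ref{obs:counterfactual} (which is exactly the special case of a singleton atom set). Fix $\aprogram = \aninstr_1 \cdots \aninstr_{|\askeleton|}$ and write $k = |\askeleton|$. Assume the hypothesis $\big(\aprogram \models (\askeleton, \apred \land \bigwedge_i \neg \afacet_i) \implies \aprogram \models \mathsf{NI}\big)$, and assume additionally that $\aprogram \models (\askeleton, \apred)$ and $\aprogram \not\models \mathsf{NI}$; the goal is to exhibit some index $i$ with $\aprogram \models (\askeleton, \apred \land \afacet_i)$.

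First I would unfold the embedding relation to pin down one fixed witness. By definition of $\aprogram \models (\askeleton, \apred)$ there is an initial state $\anassignment \in \initialpred$ whose execution $\atrace = \transrelsem{\aprogram}{\anassignment}$ embeds $(\askeleton, \apred)$ at some strictly increasing index tuple; since $\aprogram$ has exactly $k = |\askeleton|$ instructions, the only such tuple in $\{1,\ldots,k\}$ is $(1, \ldots, k)$, so $\atrace \models_{(1,\ldots,k)} (\askeleton, \apred)$. (If $\apred$ or the atoms $\afacet_i$ refer to a pair of executions, read $\atrace$ throughout as the self-composed witness pair together with its two aligned embeddings; the argument is unchanged.) The role of fixing this single witness is that every atom $\afacet_i$ becomes a definite proposition about $\atrace$, so the law of excluded middle applies to it, and ``$\aprogram \models (\askeleton, \apred \land \afacet_i)$'' is literally ``$\atrace$ embeds $(\askeleton, \apred)$ at $(1,\ldots,k)$ and $\atrace$ satisfies $\afacet_i$ there''.

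Next I would case-split on whether $\atrace$ satisfies some $\afacet_i$. If $\atrace$ satisfies $\afacet_j$ for some $j$, then since it already satisfies $\apred$ at the embedding $(1,\ldots,k)$ it also satisfies $\apred \land \afacet_j$ there, hence $\atrace \models_{(1,\ldots,k)} (\askeleton, \apred \land \afacet_j)$ and so $\aprogram \models (\askeleton, \apred \land \afacet_j)$ — the desired disjunct. Otherwise $\atrace$ falsifies every $\afacet_i$, i.e.\ $\atrace$ satisfies $\bigwedge_i \neg \afacet_i$ at $(1,\ldots,k)$; then $\atrace \models_{(1,\ldots,k)} (\askeleton, \apred \land \bigwedge_i \neg \afacet_i)$, so $\aprogram \models (\askeleton, \apred \land \bigwedge_i \neg \afacet_i)$, and the hypothesis forces $\aprogram \models \mathsf{NI}$, contradicting $\aprogram \not\models \mathsf{NI}$. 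Hence the second case is vacuous and the disjunction holds.

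The argument is essentially propositional once the semantic bookkeeping is fixed, so the only real obstacle is that bookkeeping: one must be careful that (a) the length constraint $|\aprogram| = |\askeleton|$ collapses the existential over embedding positions to the unique tuple $(1,\ldots,k)$, and (b) the \emph{same} witness $\atrace$ is reused across all three embedding claims (for $\apred$, for $\apred \land \afacet_j$, and for $\apred \land \bigwedge_i \neg \afacet_i$), so that the disjunction over $i$ and the implication supplied by the hypothesis are all evaluated against one fixed execution. With those two points in place the case split is immediate, and Observation~\ref{obs:counterfactual} is recovered by taking the atom set to be a singleton.
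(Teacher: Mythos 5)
Your proof is correct and follows essentially the same route as the paper's: both reduce the observation to a short propositional case analysis, the paper via a set-theoretic contradiction argument built on the identity $U = S \setminus \bigcup_i S_i$, and you via a direct case split on a fixed witnessing execution. If anything, your version is slightly more careful about the existential quantifier over witnesses: by fixing one $\atrace$ and applying excluded middle to the atoms on that single witness, you establish exactly the inclusion $S \setminus \bigcup_i S_i \subseteq U$ that the paper's argument actually uses (the paper asserts this as an equality, but only that direction holds in general when a program may have several witnessing executions).
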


Considering multiple counterfactuals $\{\afacet_i\}$ 
results in a disjunctive branching (bolded $\bigvee_i$) over atoms.
\conspec{} recursively invokes \texttt{ConsHelper} on $(\askeleton, \apred \land \afacet_i)$
for each $i$.
This is sound as the collection of patterns 
together continue to cover all violating executions.
We do not include multi-counterfactuals in Alg. \ref{alg:gram-search} pseudocode for brevity;
we provide full pseudocode in the appendix.

\begin{example}[Multiple counterfactuals for $\platsynth$]
\label{ex:platsynthmulticounterfactual}
Consider the platform $\platsynth(k)$ from Ex. \ref{ex:largeskel},
with parameter $k = 2$ and grammar depth $d = 3$.
\gentemp{} returns the template 
$(0: \anop_0) - (1: \anop_0) - (2: \anop_1)$
for which neither atom $\datadep(0: \anop_0, 2: \anop_1)$
nor $\datadep(1: \anop_0, 2: \anop_1)$ 
can be added alone.
This is because, even if one is negated, the other might be true, 
leading to a non-interference violation.
However, on adding both negations, the violation is blocked.
Our approach then will branch and specialize each case further.
\end{example}

By Lem. \ref{lem:templategen} and 
Obs. \ref{obs:counterfactual},\ref{obs:multicounterfactual}, 
we have $k$-completeness of our approach up to skeleton size $k=d$.
This is formalized through the following theorem and
proved by induction on the number of specialization iterations.
We provide a proof in the appendix.

\begin{theorem}
    \label{thm:kcomplete}
    Let $W = \gentemp{}(M, \mathsf{NI}, d)$ be the templates for depth $d$, and let 
    $\setpatterns_i = \conspec{}(M, \mathsf{NI}, \askeleton_i, G)$ be the specialized patterns 
    for each $\askeleton_i \in W$. 
    Then, for all instruction sequences $\aprogram = \aninstr_1 \cdots \aninstr_k$, {with $k \leq d$}, we have:
    \begin{equation*}
        \aprogram \not\models \mathsf{NI} \implies \exists \apattern \in \bigcup_i \setpatterns_i. ~ \aprogram \models \apattern
    \end{equation*}
\end{theorem}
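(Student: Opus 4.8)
The plan is a two-stage argument matching the two phases of the algorithm. Stage one uses Lemma~\ref{lem:templategen} to pin down a single template $\askeleton$ that $\aprogram$ structurally matches; stage two tracks $\aprogram$ through the recursion tree of $\conspec{}(M, \mathsf{NI}, \askeleton, G)$ and shows that it always ends at a specialized pattern it embeds. The glue between the two stages, and the engine of the inductive step, is Observations~\ref{obs:counterfactual} and~\ref{obs:multicounterfactual}, which say that counterfactual atom addition never discards a violating, embedding witness.

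\emph{Template reduction.} Fix $\aprogram = \aninstr_1 \cdots \aninstr_k$ with $k \leq d$ and $\aprogram \not\models \mathsf{NI}$, and write $\aninstr_i = \anop_i(\anopermap_i)$, $\askeleton := \anop_1 \cdots \anop_k$. By Lemma~\ref{lem:templategen}, $\askeleton \in W$, so $\setpatterns_{\askeleton} := \conspec{}(M, \mathsf{NI}, \askeleton, G)$ is one of the sets whose union appears in the conclusion. Since $|\askeleton| = k = |\aprogram|$, the unique length-$k$ index subsequence of $\aprogram$ is $(1, \ldots, k)$, and it matches $\askeleton$ opcode-for-opcode; as $\initialpred \neq \emptyset$ (otherwise $\aprogram \models \mathsf{NI}$ vacuously, contradicting the hypothesis), we get $\aprogram \models (\askeleton, \texttt{true})$. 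It therefore suffices to show that $\conspec{}$ places into $\setpatterns_{\askeleton}$ some pattern that $\aprogram$ embeds.

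\emph{Specialization invariant.} Let $L = \text{ApplyPredicates}(\askeleton, G)$. The key claim is: for every call $\texttt{ConsHelper}(\apred, i, L)$ made during $\conspec{}(M, \mathsf{NI}, \askeleton, G)$, if $\aprogram \models (\askeleton, \apred)$ and $\aprogram \not\models \mathsf{NI}$, then some pattern appended to \texttt{acc} by this call or one of its recursive descendants is embedded by $\aprogram$. I prove this by induction on the number $|L| + 1 - i$ of not-yet-considered atoms, which strictly decreases along every recursive call. In the base case $i > |L|$ the call appends exactly $(\askeleton, \apred)$, which $\aprogram$ embeds by hypothesis. For the inductive step, look at the branch taken at line~3. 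If the counterfactual test for $L[i]$ succeeds, then instantiating its universal premise at the length-$|\askeleton|$ sequence $\aprogram$ gives $\aprogram \models (\askeleton, \apred \land \neg L[i]) \implies \aprogram \models \mathsf{NI}$; Observation~\ref{obs:counterfactual} with $\afacet = L[i]$, whose hypotheses $\aprogram \models (\askeleton, \apred)$ and $\aprogram \not\models \mathsf{NI}$ hold, then yields $\aprogram \models (\askeleton, \apred \land L[i])$, and the induction hypothesis applies to the recursive call $\texttt{ConsHelper}(\apred \land L[i], i+1, L)$. If the test fails, the algorithm recurses as $\texttt{ConsHelper}(\apred, i+1, L)$ with $\apred$ unchanged, so the induction hypothesis applies verbatim. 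Applying the invariant to the top-level call $\texttt{ConsHelper}(\texttt{true}, 0, L)$ — whose hypotheses hold by the previous paragraph and by $\aprogram \not\models \mathsf{NI}$ — produces $\apattern \in \setpatterns_{\askeleton} \subseteq \bigcup_i \setpatterns_i$ with $\aprogram \models \apattern$, as required.

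\emph{Multi-counterfactual extension and the hard part.} For the full algorithm (deferred to the appendix), the only change is in the skip branch: instead of dropping $L[i]$, it selects a finite family $\{\afacet_j\}_j$ of atoms with $\aprogram \models (\askeleton, \apred \land \bigwedge_j \neg \afacet_j) \implies \aprogram \models \mathsf{NI}$ (again by instantiating the universal test at $\aprogram$) and recurses on each $(\askeleton, \apred \land \afacet_j)$. Observation~\ref{obs:multicounterfactual} then gives $\bigvee_j \aprogram \models (\askeleton, \apred \land \afacet_j)$, so we apply the induction hypothesis to whichever branch $j$ realizes the disjunct. I expect the main obstacle to be precisely this branching case: one must check that the recursion measure stays well-founded once a call can spawn several children (it does, since the children advance the index past the atoms branched on), and that Observation~\ref{obs:multicounterfactual} always lets us commit to one branch while simultaneously preserving ``$\aprogram$ embeds the partial pattern'' and ``$\aprogram \not\models \mathsf{NI}$''. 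The template-reduction step and the single-counterfactual induction are routine once Lemma~\ref{lem:templategen} and Observation~\ref{obs:counterfactual} are in hand.
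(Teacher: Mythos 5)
Your proof is correct and follows essentially the same route as the paper's: Lemma~\ref{lem:templategen} supplies the template, and an induction over the recursion of \texttt{ConsHelper} (your measure $|L|+1-i$ is exactly the paper's ``number of specialization iterations''), powered by Observations~\ref{obs:counterfactual} and~\ref{obs:multicounterfactual}, shows no violating program is lost. The only difference is presentational: the paper maintains a global invariant that \texttt{acc} together with the pending pattern overapproximates all violating programs, whereas you fix one violating $\aprogram$ and track it down a single branch of the recursion tree --- a dual formulation that, if anything, handles the multi-counterfactual branching and the well-foundedness of the measure more explicitly than the paper does.
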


\section{Evaluation Methodology}
\label{sec:methodology}

\subsection{Tool Prototype}

We implement our approach in a prototype tool $\toolname$.
$\toolname$ allows us to generate patterns, 
and analyze program binaries using these patterns.
For pattern generation, $\toolname$ allows the user to specify 
the platform description (\S\ref{sec:progmodel}), 
and the non-interference specification (\S\ref{subsec:securitydef}).
By default pattern generation uses the predicate grammar from Tab. \ref{tab:facetgrammar}.
However, we also allow the user specify custom predicates.
Based on these inputs the tool generates a set of patterns as 
described in \S\ref{sec:approach}.
The resulting patterns can be inspected by the user, and used for binary analysis.
Analysis can either be performed using patterns or 
the hyperproperty specification.
We discuss the details of this analysis in \S\ref{subsec:patternmatching}.

$\toolname$ uses the UCLID5 \cite{uclid5} verification engine as the 
backend model checker for both, the non-interference checks involved 
in pattern generation as well as binary analysis.
UCLID5 internally compiles model checking queries into SMT 
\cite{BarFT-SMTLIB, Barrett2009SatisfiabilityMT} queries, and
invokes an SMT solver (e.g. Z3 \cite{Moura2008Z3AE}, CVC5 
\cite{Barbosa2022cvc5AV}) on them.

\subsection{Platform Designs}

While users can specify their own platform models,
we describe the models considered in our experimentation.

\subsubsection{$\platcompreuse$: Computation Reuse}
\label{subsec:platco}

The computation reuse platform ($\platcompreuse$, Fig. \ref{fig:platcolisting}) 
was introduced in \S\ref{subsec:gen-overview}.
It is based on microarchitectural optimizations that enable dynamic 
reuse of results of previous high-latency computations.
Our model includes a \textit{reuse buffer} (\pseudocb{reuse_buf}), 
which records operands and results of multiply operations, 
which are then reused by future operations with matching operands. 
Dynamic (in-hardware) optimization schemes have been proposed in the literature 
(e.g., \cite{Mutlu2005OnRT, Sodani1997DynamicIR, Charles2002ImprovingIW})
and also have been hypothesized to be vulnerable to computational side-channel attacks \cite{Vicarte2021OpeningPB}.
Our model is based on the scheme proposed in \cite{Sodani1997DynamicIR}.

\subsubsection{$\platss$: Store Optimization}
\label{subsec:platss}

The store optimization platform ($\platss$) models 
microarchitectural optimizations related to the store unit 
in the memory hierarchy.
$\platss$ models the silent stores \cite{Lepak2000OnTV,Lepak2001SilentSA} 
and store-to-load forwarding optimizations.
The former squashes (makes \textit{silent}) stores that would rewrite 
the same value to the memory. 
Store-to-load forwarding serves store values to 
subsequent loads on the same address.
While these optimizations have several implementation variants 
(e.g. \cite{Sha2005ScalableSF,Sha2006NoSQSC,Subramaniam2006FireandForgetLS}),
we base our model on the relatively simple \textit{LSQ cache} 
(load-store queue cache) 
\cite{Lepak2000SilentSF} which allows combining these optimizations.

We illustrate the LSQ cache through an excerpt of our model in Fig. \ref{fig:platsslisting}.
The LSQ cache is a FIFO buffer that records load and store requests.
On a \textit{load}, the buffer is checked for store requests to the same address.
If there exists such a request (with no subsequent stores to the same address),
then the load can be sourced with this store.
On a \textit{store} too, the buffer is checked for a
request, and the store is squashed (silenced) 
if the request payload matches the value written by the store.
If a valid entry is not found, the new load/store performs 
a full memory request as usual and the LSQ cache is updated.

\begin{figure}
  \centering
\begin{lstlisting}[style=pseudostyle]
type lsqc_entry_t = record { valid: boolean, 
  is_load: boolean, addr: word_t, data: word_t }
var lsqc : [lsqc_index_t]lsqc_entry_t;
var lptr : lsqc_index_t; // Pointer into LSQC
var lscount : int; // Number of memory accesses

operation store (rs2, rs1, imm) {
  addr = addrgen(regfile[rs1], imm);
  data = regfile[rs2];
  if (lsqc[lptr].valid && lsqc[lptr].addr == addr 
      && lsqc[lptr].data == data) {
      // Store squashing (silent store)
  } else {            // Perform memory access
      mem[addr] = data; lscount += 1;
      lsqc = ...      // Update the lsqc
  }
}
operation load (rd, rs1, imm) {
  addr = addrgen(regfile[rs1], imm)
  if (lsqc[lptr].addr == addr && lsqc[lptr].valid){
    data = lsqc[lptr].data; // Data forwarding
  } else {              // Perform memory access
    data = mem[data]; lscount += 1; 
    lsqc = ...          // Update the LSQC
  }
  regfile[rd] = data;
}
\end{lstlisting}
  \caption{$\platss$: Excerpt of the abstract platform modelling load-store optimizations in the microarchitecture.}
  \label{fig:platsslisting}
  \vspace*{-0.2cm}
\end{figure}

\subsubsection{Speculation primitives}
\label{subsubsec:specprimitives}

In addition to the above optimizations, our 
consider platform models have 
branch and store-to-load forwarding speculative features.
We model speculation non-deterministically, abstracting away
from the speculative choice mechanism (e.g. branch predictor).
An instruction that speculates moves the platform into 
speculative mode by setting the $\specvar$ variable 
(described in \S\ref{sec:progmodel}).
Our current implementation only supports a single speculative frame.
For example, we do not allow speculating on loads
within a branch speculation context.
However, this is not a fundamental limitation of our approach.

\subsection{Binary Analysis}
\label{subsec:patternmatching}

We first disassemble the binary to obtain the control flow graph (CFG).
We unroll the control flow graph up to a fixed depth,
starting from a particular entry point.
We instrument the unrolling at the branch instructions with 
assumptions corresponding to the branch condition,
allowing bypassing the condition when speculating
(i.e., if $\specvar$ is set).
For example ``\texttt{if (x == 0) stmt}'' is instrumented as
``\texttt{assume(x == 0 || $\specvar$); stmt}''.
Each instrumented unrolling is analyzed using a 
pattern/hyperproperty based check.

\textbf{Pattern-based analysis.} Given an unrolling, the pattern-based 
analysis implements the pattern check described in \S\ref{subsubsec:attpat}.
That is, we first identify all subsequences in that unrolling that match
the pattern template. 
For each such subsequence, we formulate an SMT query that checks
whether the subsequence satisfies the pattern constraint, 
and solve it using the UCLID5 model checker \cite{uclid5}.

\textbf{Hyperproperty-based analysis.} The hyperproperty-based analysis
follows existing approaches (e.g., \cite{Cheang2019AFA,Guarnieri2020SpectectorPD}) 
by encoding the hyperproperty as a safety property over a self-composition 
(\cite{Terauchi2005SecureIF,Clarkson2008Hyperproperties}) 
of the platform model.
The copies in the self-composition execute identical unrolled programs.
Once again we use the UCLID5 model checker to solve the generated safety query.

\section{Experimental Results}
\label{sec:evaluation}

\subsubsection*{Experimental Setup}

We perform experiments on a machine with an
Intel i9-10900X CPU with 64GB RAM
rated at 3.70GHz and running Ubuntu 20.04.
As mentioned in \S\ref{sec:methodology}, we use the
UCLID5 model checker \cite{uclid5} and 
Z3 (version 4.8.7) \cite{Moura2008Z3AE} as the backend SMT solver
for the queries that UCLID5 generates.
We mention timeouts for each experiment in its respective
section.

\subsubsection*{Research Questions}

We aim to answer the following questions through our evaluation:
(a) Can the approach generate new patterns, can these patterns be used
to detect attacks, do they provide performance advantages?
(b) How well does our approach scale with platform complexity,
and the pattern generation search depth?
(c) How does the choice of grammar affect false positives?

\begin{figure}
    \centering
    \includegraphics[scale=0.24]{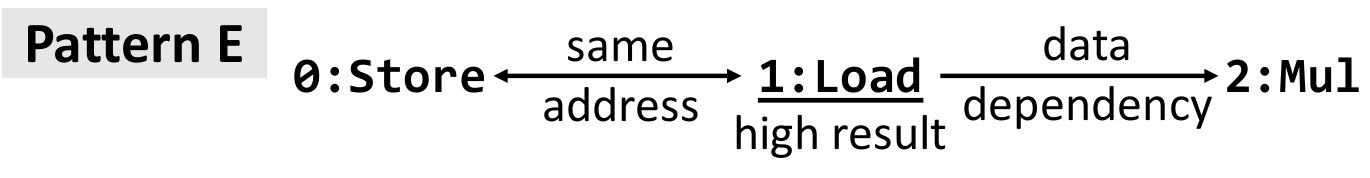}
    \caption{Pattern for store-to-load forwarding. 
    Explanation: The store address matches the load, however,
    due to address computation latency, the load is speculatively
    (incorrectly) served 
    from memory while the store is still in the store buffer and has not propagated to memory.}
    \label{fig:stlpattern}
\end{figure}

\begin{figure}
    \centering
    \includegraphics[scale=0.27]{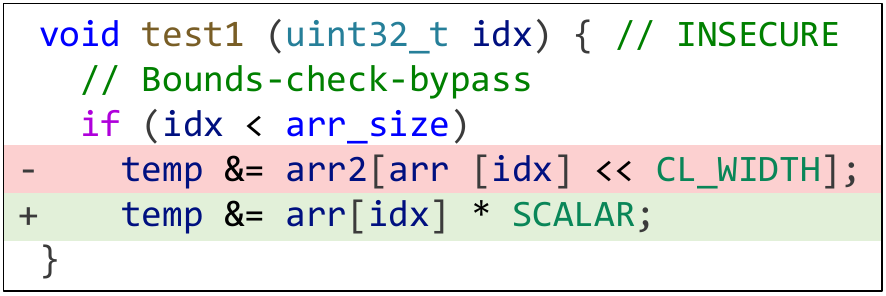}
    \includegraphics[scale=0.27]{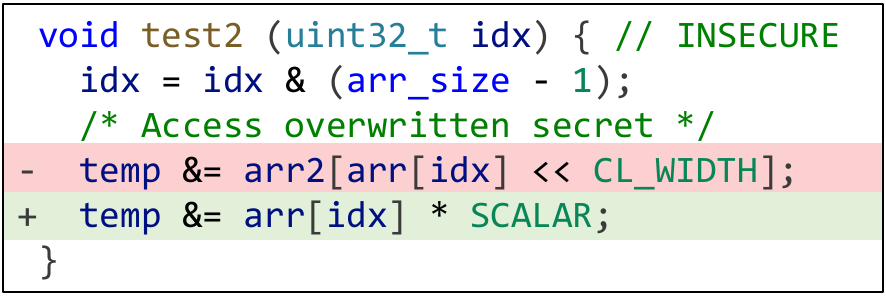}
    \caption{Modification of SpectreV1 (left) and SpectreV4 (right) litmus tests from 
    \cite{Daniel2021HuntingTH} to target a computation-unit side channel (instead of cache-based side channel).} 
    \label{fig:branch-stl-examples}
\end{figure}

\begin{figure*}
    \centering
    \includegraphics[scale=0.4]{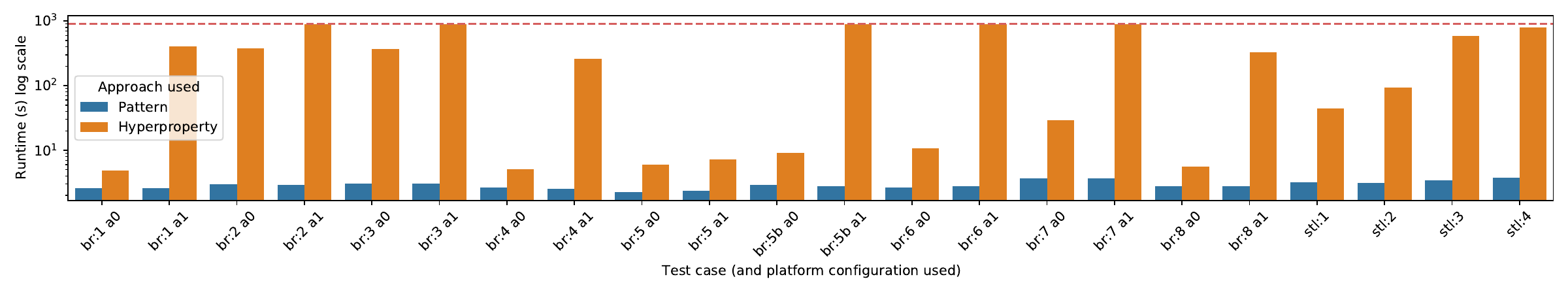}
    \caption{Verification of Branch and STL speculation litmus tests (see Fig. \ref{fig:branch-stl-examples}) with hyperproperties and generated patterns.}
    \label{fig:branch-stl-checktimes}
\end{figure*}

\subsection{RQ1: Can we generate patterns for 
new vulnerability variants?}
\label{subsec:rq1}

To answer \textbf{RQ1}, we considered two extensions to the computation
reuse platform ($\platcompreuse$) introduced in \S\ref{sec:methodology}.
In these extensions we modelled branch speculation 
and store-to-load forwarding respectively (as discussed in \S\ref{subsubsec:specprimitives}).
We then generated patterns for these platforms based on 
the speculative non-interference property $\text{H}_\text{CD}$
(\S\ref{sec:motivation}). 
We used a $\texttt{reuse\_buf}$ of size 4 (i.e., 4 buffer entries), 
and grammar search depth $d = 3$.

\begin{table}[h]
    \centering
    \resizebox*{0.45\textwidth}{!}{
    \begin{tabular}{c|c|c|c|c}
        \hline
        \makecell{Platform} & \makecell{Gen. time  \\ (ww=16)} & \makecell{Gen. time  \\ (ww=20)} & \makecell{Gen. time  \\ (ww=24)} & \makecell{Gen. time  \\ (ww=32)} \\ \hline
        Branch + CR & \multicolumn{3}{c|}{--} & 69s  \\ \hline
        STL + CR & 92s & 334s & \multicolumn{2}{c}{ TO (1 hour)}
    \end{tabular}
    }
    \caption{Generation times for branch speculation and store-to-load forwarding patterns.}
    \label{tab:branchspec-gen-times}
    \vspace*{-0.2cm}
\end{table}

Our approach was able to generate Pattern C and Pattern D from Fig. \ref{fig:programexamples}
for branch speculation with the computation-unit side channel.
For store-to-load forwarding, we obtain the pattern shown in Fig. \ref{fig:stlpattern}.
We provide the time required to generate these patterns in Table \ref{tab:branchspec-gen-times}.
Our STL+CR generation timed out on a particular non-interference check query 
with a machine word width of 32 bits, hence, we used lower word widths.\footnote{
We believe that this is an underlying issue with the solver and are investigating it further.}
The same hyperproperty ($\text{H}_\text{CD}$) was used for both (Branch and STL) cases, 
demonstrating that hyperproperties allow uniform vulnerability specification.
\textit{This demonstrates the ability of our approach to generate new patterns.}

\subsection{RQ2: Can the generated patterns efficiently find exploits
or prove their absence?}
\label{subsec:rq2}

To answer \textbf{RQ2}, we evaluate whether the generated 
patterns provide advantages over hyperproperty-based detection.
For this, we considered modified versions of 
SpectreV1 and SpectreV4 litmus tests from \cite{kocher, Daniel2021HuntingTH}.
In each of these examples, we replaced the secret-dependent load 
with a multiply operation (targetting the computation-based side channel).
We call these examples SpectreV1-CR and SpectreV4-CR respectively,
and illustrate two tests (one for V1 and one for V4) 
in Fig. \ref{fig:branch-stl-examples}.
We consider 9 such tests (8 unsafe, 1 safe) for SpectreV1-CR and 
4 tests (all unsafe) for SpectreV4-CR.

We then evaluated both hyperproperty-based and pattern-based
detection approaches on these tests with a timeout of 15 mins per test.
We provide the results in Fig. \ref{fig:branch-stl-checktimes}.
For each SpectreV1-CR variant test, we considered two platform 
configurations: one with a non-associative cache, (denoted as
\texttt{br:<n> a0}) and one with an associative cache
(these are marked as \texttt{br:<n> a1}).
The pattern-based approach is able to check all test cases 
(unsafe and safe) correctly within ($\sim$5 seconds).
In contrast, the hyperproperty-based approach often takes upwards
of 2 mins, with some tests (e.g., \texttt{br:2}, \texttt{br:3}) even timing out 
(i.e., taking greater than 15 mins).
An interesting observation is that there are some cases
(e.g., \texttt{br:1}, \texttt{br:4}) where the hyperproperty-based check
converges in a reasonable time ($\sim$10s) with the non-associative cache (\texttt{a0}), 
while taking much longer (or times out) with the associative cache (\texttt{a1}).
The pattern-based approach is resilient to these 
differences since it operates over architectural state.
This experiment demonstrates how, by abstracting away complex platform 
microarchitecture, \textit{patterns enable security verification that scales
much better that hyperproperties}.

\subsection{RQ3: How well does pattern generation scale?}
\label{subsec:rq3}

We explore the scalability of our approach with the complexity of the platform
and the depth of the pattern grammar.

\begin{figure}
    \centering
    \includegraphics[scale=0.35]{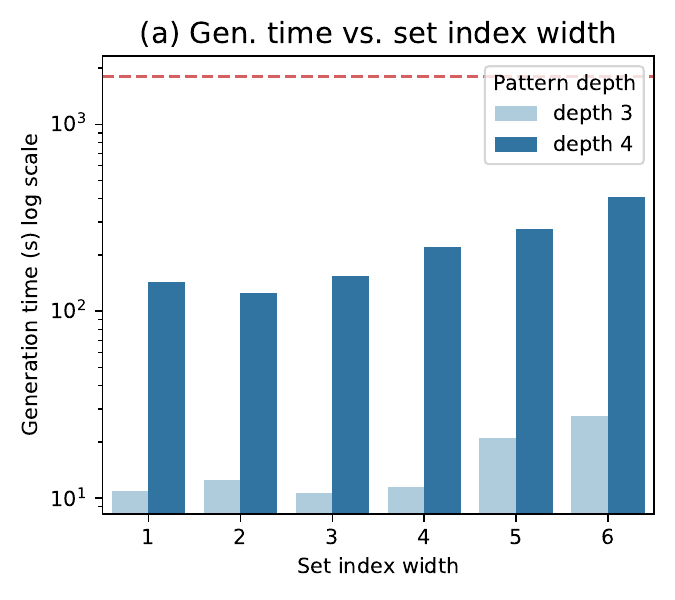}
    \includegraphics[scale=0.35]{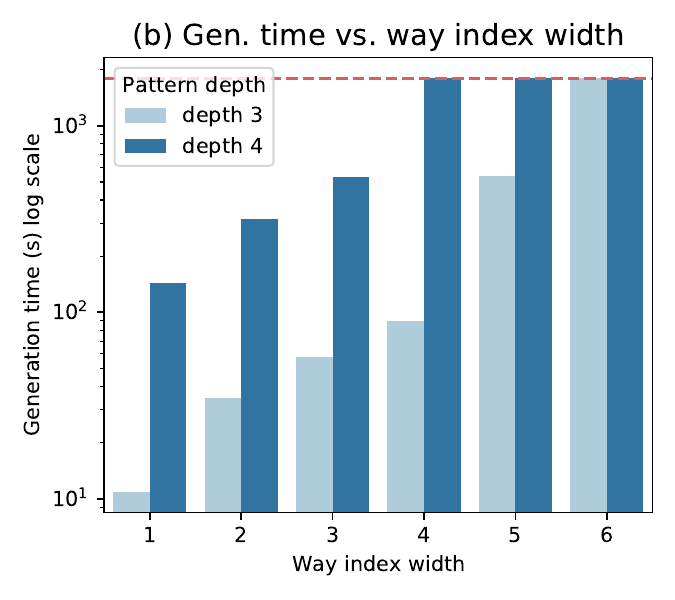}
    \includegraphics[scale=0.35]{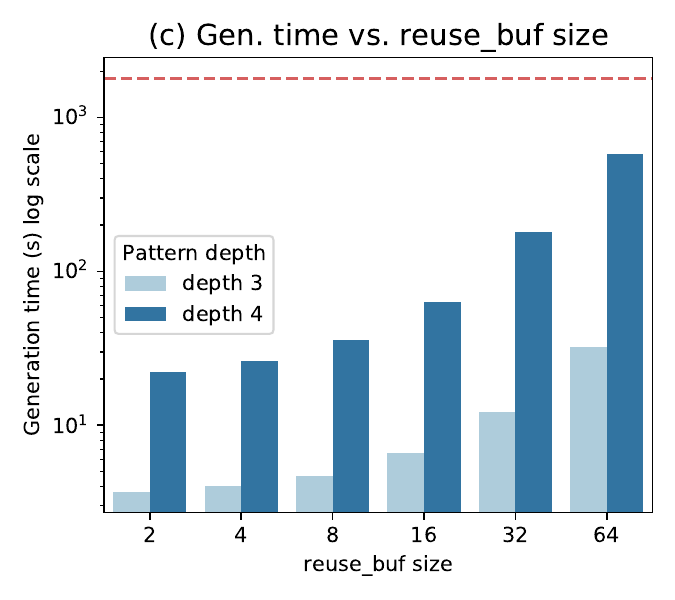}
    \includegraphics[scale=0.35]{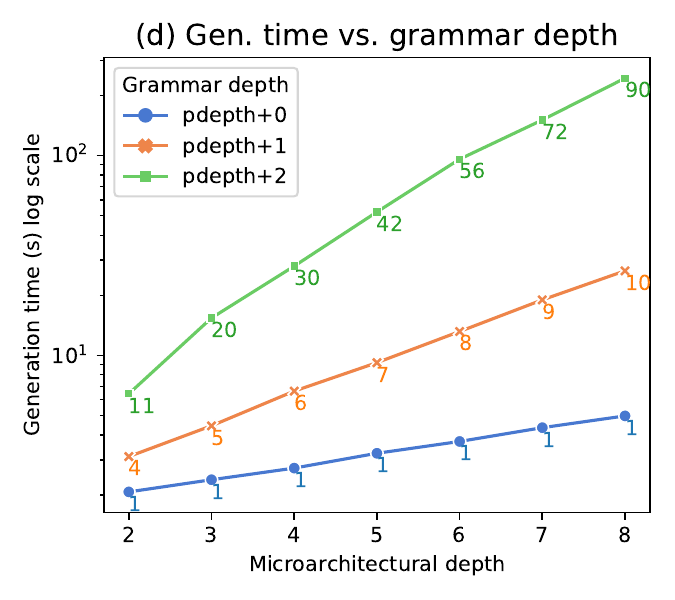}
    \caption{Scalability of pattern generation vs. platform complexity and grammar depth: (a, b) varying set and way index width of the LSQ Cache (Fig. \ref{fig:platsslisting}),
    (c) varying size of the reuse buffer ($\rbuf$ in Fig. \ref{fig:platcolisting}),
    and (d) generation time and number of generated patterns for $\platsynth$ 
    with varying microarchitectural ($\pdep$) and grammar ($\gdep$) depths (Ex. \ref{ex:largeskel}).}
    \label{fig:synthetic-generation}
    \vspace*{-0.3cm}
\end{figure}

\subsubsection{Scalability with model parameters}
\label{subsubsec:paramscaling}

To evaluate this, we experiment on the $\platss$ (\S\ref{subsec:platss}) and 
$\platcompreuse$ (\S\ref{subsec:platco}) platforms.
For $\platss$, we specify a non-interference hyperproperty with $\setvars_\secinp = \{\texttt{mem}\}$,
as the secret input,
and $\obsvars = \{\pseudocb{lscount}\}$
(number of memory operations propagating to memory) 
as the public output.
For $\platcompreuse$, we specify a non-interference property with 
$\setvars_\secinp = \{\texttt{mem}\}$, and $\obsvars = \{\pseudocb{mulcount}\}$,
i.e., number of multiplier invocations as the public output.
We vary size parameters of the LSQ cache (\S\ref{subsec:platss}) and 
reuse buffer (\S\ref{subsec:platco}) components respectively.
For the LSQ cache we sweep through the set index width (same associativity, larger
number of sets), and the way index width (same sets, larger associativity) 
and present results in Fig. \ref{fig:synthetic-generation} (a, b).
For the reuse buffer, we sweep through different buffer sizes
and present the results in Fig. \ref{fig:synthetic-generation} (c).
We perform generation with depths 3 and 4, and a timeout of 30 mins.

\textbf{Observations:}
In Fig. \ref{fig:synthetic-generation}(a, b), we
observe that while generation time grows with both set and way width, 
the increase with way index width is more rapid.
Our hypothesis for this is that a larger number of sets
increases the width of indexing bitvectors,
while larger associativity increases the number case-splits
(conditionals) since ways are iterated over.
The former increase is slower due to the use of word-level reasoning
in SMT solvers, while the latter has an explicit if-then-else encoding,
and hence a larger formula.
We observe a similar increase for the reuse buffer.
While the runtime increases exponentially, 
our approach is reasonably efficient and is able to generate 
patterns on the order of minutes for realistic sizes.

\subsubsection{Scalability with search depth}
\label{subsubsec:depthscaling}

In order to investigate the scalability of our approach with the depth of the grammar,
we perform experimentation with the synthetic platform $\platsynth(k)$
introduced in Ex. \ref{ex:largeskel}.
This platform is parameterized by a \textit{platform depth}
$k = \pdep$, which is the number of microarchitectural buffers.
The non-interference property specifies $\abuf_0$ as the secret input,
and $\abuf_{\pdep-1}$ as public output.
We run our pattern generation algorithm on this platform with varying
values of $\pdep$ and corresponding grammar depths 
$\gdep \in \{\pdep, \pdep+1, \pdep+2\}$.

Fig. \ref{fig:synthetic-generation}(d) presents the results.
We observe that for the case where $\gdep = \pdep$, 
we generate only one pattern with skeleton 
$\anop_1 \cdot \anop_2 \cdots \anop_{\pdep}$
and constraint $\land_i \datadep(\anop_i, \anop_{i+1})$
signifying data propagation through these operations.
Higher grammar depths, however, require multiple counterfactuals
(Ex. \ref{ex:platsynthmulticounterfactual}), and produce more patterns.
In the $\gdep = \pdep+1$ case, we require 2 counterfactuals
at a time, while in the $\gdep = \pdep+2$ case we require
3 counterfactuals at-a-time.
While this leads to a blowup in generation time as evidenced in the figure,
the times are still reasonable ($\sim$10m even in the $\pdep+2$ case).
Moreover we observe that in the realistic platforms considered in previous sections, 
smaller depths were still able to provide us with useful patterns.

\begin{figure}
    \centering
    \includegraphics[scale=0.11]{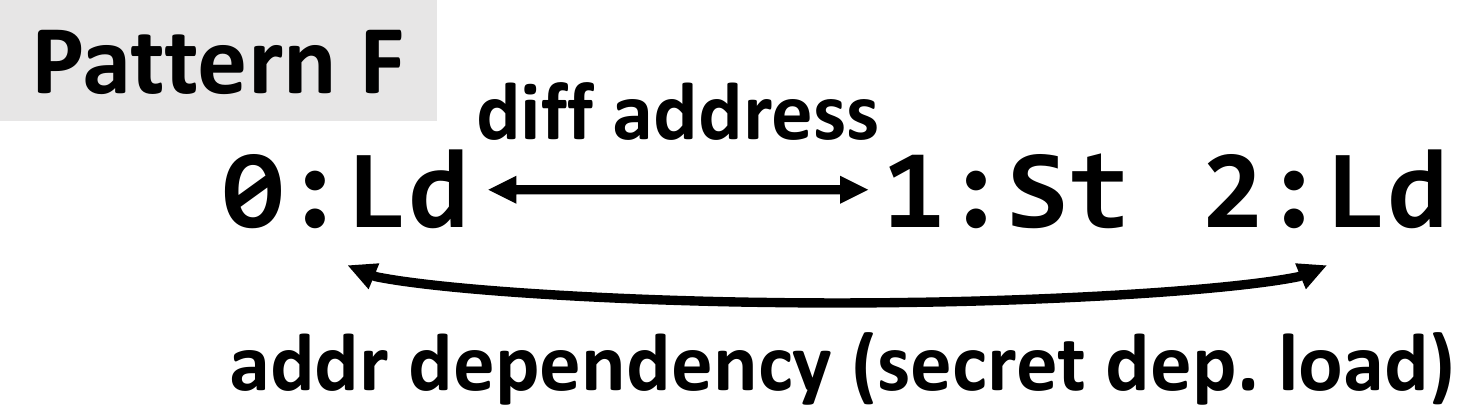}
    \hspace{0.4cm}
    \includegraphics[scale=0.11]{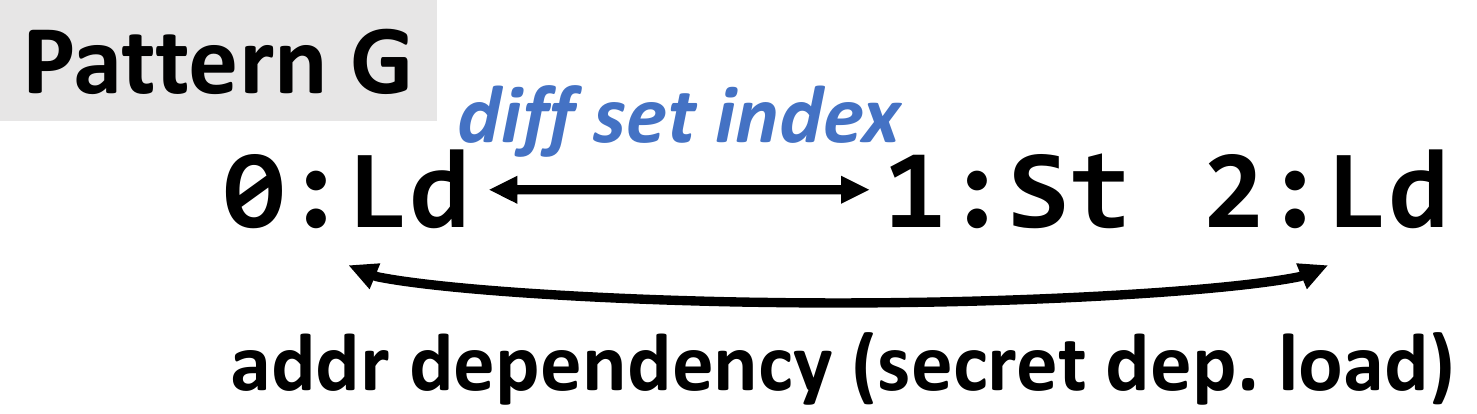}
    \caption{(F) Pattern using default grammar (Tab. \ref{tab:facetgrammar}).
        (G) More precise pattern after adding the $\mathsf{diffindex}$ predicate.}
    \label{fig:coarse-fine-patterns}
\end{figure}

\begin{figure}
    \centering
    \includegraphics[scale=0.27]{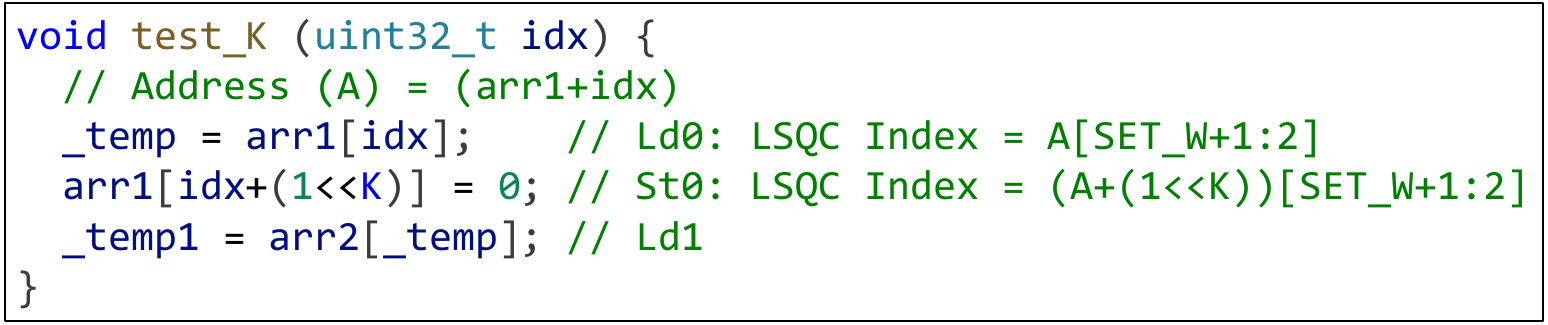}
    \caption{Example illustrating false positives with patterns.}
    \label{fig:fpr-example}
\end{figure}

\subsection{RQ4: How does the choice of grammar affect false positives?}
\label{subsec:rq4}

To explore this, we consider a variant of the $\platss$ platform.
As in the original model (Fig. \ref{fig:platsslisting}), 
a load is sourced from the LSQ cache if there exists
a valid entry with the same address. 
Otherwise, it is sourced from memory and updates the LSQ cache. 
However, in this variant, a store invalidates all 
LSQ entries with the same set index (to avoid loading stale values).
We perform pattern generation with the memory access count 
($\pseudocb{lscount}$) as public output, with grammar depth 3.
We get Pattern F (Fig. \ref{fig:coarse-fine-patterns})
with an address dependency between (\texttt{0:Ld}) and (\texttt{2:Ld}),
(as in SpectreV1), and where the intervening store (\texttt{1:St}) \textit{does not} 
invalidate the LSQ entry (captured by $\diffaddr$).
If \texttt{1:St} invalidated the LSQ entry, \texttt{2:Ld} would access
the memory in all executions, leading to no violating behaviours.

This pattern flags the code in Fig. \ref{fig:fpr-example} due to a 
match on instructions \texttt{Ld0}, \texttt{St0}, \texttt{Ld1}.
However, we note that while the address of the store in Fig. \ref{fig:fpr-example}
is different from the first load, the set index can be the same.
This is because the set index is a slice (sub-word) of the address.
Thus, under the condition that $\texttt{K} > \texttt{SET\_W} + 2$,
the store will be in the same set as the first load, and the 
LSQ entry will be invalidated.
Consequently, under this condition, the program is actually safe,
and Pattern F produces a false positive.

To address this, we can add a new predicate $\mathsf{diffindex}$ to the grammar
that checks if the set indices (defined as an address slice) of two addresses  
are different.
Performing generation with this augmented grammar, we obtain Pattern G 
(Fig. \ref{fig:coarse-fine-patterns}).
This pattern only flags $\texttt{test\_K}$ if $\texttt{K} \leq \texttt{SET\_W} + 2$,
and is hence more precise.
We summarize these observations in Tab. \ref{tab:fpr-example}.
Thus, grammars over (high-level) architectural state 
result in patterns with
more false positives. 
Through the grammar, \textit{we expose a precision-complexity tradeoff -
more precise patterns can be generated at the cost of a carefully 
tailored, microarchitecture-specific grammar}.

\begin{table}
    \centering
    \resizebox*{0.45\textwidth}{!}{
    \begin{tabular}{c|c|c}
        \hline
        \multirow{2}{*}{Check} & \multicolumn{2}{c}{Result with \texttt{test\_K} (Fig. \ref{fig:fpr-example}) and \texttt{SET\_W} set index}  \\ \cline{2-3}
        & $\texttt{K} > \texttt{SET\_W} + 2$ & $\texttt{K} \leq \texttt{SET\_W} + 2$ \\ \hline
        Hyperproperty & SAFE & UNSAFE \\ \hline
        Pat. F (Fig. \ref{fig:coarse-fine-patterns}) & \cellcolor{orange!20} UNSAFE & UNSAFE \\ \hline
        Pat. G (Fig. \ref{fig:coarse-fine-patterns}) & SAFE & UNSAFE \\ 
    \end{tabular}
    }
    \caption{Precision of patterns on \texttt{test\_K} (Fig. \ref{fig:fpr-example}).}
    \label{tab:fpr-example}
    \vspace*{-0.5cm}
\end{table}

\section{Discussion and Limitations}
\label{sec:discussion}

\subsubsection*{Applicability and Scope}

We have applied pattern generation to speculative (e.g., \S\ref{subsec:rq1}, Fig. \ref{fig:stlpattern}) and non-speculative (e.g., \S\ref{subsec:rq3}, Fig. \ref{fig:coarse-fine-patterns}) 
execution, and multiple side channels 
(\S\ref{subsec:platco}, \S\ref{subsec:platss}).
Crucially, this is possible due to the power of 
non-interference-based specifications: 
they can express varied attacker scenarios 
(e.g., secure-programming, constant-time 
\cite{Guarnieri2021HardwareSoftwareCF, Cauligi2019ConstanttimeFF}) and
platform semantics.
While our approach is general, it requires formulating non-interference specifications that accurately captures the threat model.
This requires care and expertise.
The effort in writing these specifications can be amortized over many different patterns generated and corresponding verification and exploit-finding tasks.

\subsubsection*{Platform complexity}
In \S\ref{sec:evaluation} we considered abstract platform models
(introduced in \S\ref{sec:methodology}) that are similar to those adopted in 
$\semhyper$ approaches (e.g., \cite{Cheang2019AFA,
Guarnieri2020SpectectorPD,Zeng2022AutomaticGO}).
These models, while exposing security relevant microarchitectural 
detail, are simpler than full processor RTL.
While the latter being more complex will likely result in larger generation 
times, this is not a fundamental limitation of our approach.
Our main requirement is that the platform model be amenable to
symbolic analysis. This can be performed on RTL 
using off-the-shelf model checkers (e.g., JasperGold, SymbiYosys 
\cite{symbiyosys}).
Additionally, there is work that extracts abstract 
models from RTL (\cite{Zeng2022AutomaticGO,Godbole24Lifting}).

\subsubsection*{Pattern Grammar and False Positives}
While patterns using architecture level predicates
are more robust/microarchitecture-independent,
they lead to more false positives.
As demonstrated in \S\ref{subsec:rq4}, using specialized grammars 
that expose microarchitectural execution details
(e.g., cache-indexing, replacement/prediction policies)
helps reduce false positives.
However, this is at the cost of (a) less performant analysis
(it is abstraction that makes pattern-based approaches scale)
and (b) more microarchitecture-dependent patterns.
Opinion in literature is divided on whether or not to expose 
microarchitectural detail to software analyses 
(e.g., \cite{Cauligi2021SoKPF}).
We view the grammar as a tradeoff: 
a generic grammar leads to abstract and thus efficient analysis
at the cost of more false positives, while a specialized grammar
is more precise at the cost of being more microarchitecture-specific.
While our current approach requires a user-specified grammar 
(Fig. \ref{fig:sempatpipeline}),
future work could automate this, e.g., 
by using counter-example guided predicate discovery \cite{Das2002CounterExampleBP, Ball2001AutomaticPA}.

\section{Related Work}
\label{sec:related}

\subsubsection*{Microarchitectural optimizations and vulnerabilities}

Our work targets hardware execution vulnerabilities, from
Spectre and Meltdown and their variants 
\cite{Kocher2019SpectreAE,Lipp2018MeltdownRK,Schwarz2019NetSpectreRA,Koruyeh2018SpectreRS}
to more recent attacks
(e.g., targeting store/line-fill buffers 
\cite{Canella2019FalloutLD,Schaik2019RIDLRI,Schwarz2019ZombieLoadCD,Maisuradze2018ret2specSE}).
Good overviews of these vulnerabilities may be found in~\cite{Hu2021SoKHD,Ge2018ASO,Canella2018ASE}.
These vulnerabilities are possible because of microarchitectural optimizations such as
instruction reuse \cite{Sodani1997DynamicIR} and silent stores \cite{Lepak2001SilentSA,Lepak2000SilentSF}
which we have used in our experiments.
Attacks exploiting these mechanisms were hypothesized in \cite{Vicarte2021OpeningPB} amongst which
some have recently been demonstrated on actual hardware (e.g., \cite{moghimi2023downfall}).

\subsubsection*{Software Analysis} 
Several approaches perform software analysis,
adopting different specifications, models, and techniques.

\textit{Semantic hyperproperty-based verification.}
Semantic hyperproperty-based approaches
(e.g., \cite{Cheang2019AFA,Guarnieri2020SpectectorPD,Balliu2020InSpectreBA,
Fabian2022AutomaticDO}) formulate security as a hyperproperty
\cite{Terauchi2005SecureIF,Clarkson2008Hyperproperties,Rushby1992NoninterferenceTT,Kozyri2022ExpressingIF}
over executions of the program on an abstract platform model.
These serve as inputs to our approach.

\textit{Symbolic software analysis.}
While the earlier works develop high-level hyperproperties that
capture varied microarchitectural mechanisms, other approaches
adopt more specific security models (e.g., constant-time execution).
This allows them to develop specialized and more efficient analysis techniques
(see \cite{Geimer2023SystematicEO} for a systematic evaluation and comparison),
based on symbolic execution (SE) \cite{King1976SymbolicEA, Wang2019KLEESpectre,Guo2019SPECUSYMSS,David2016BINSECSEAD} or 
\textit{relational} symbolic execution (RelSE) \cite{Farina2017RelationalSE, Daniel2019BinsecRelER,Daniel2021HuntingTH}.
Our work is orthogonal to these: (a) our main goal is pattern generation, not binary analysis and 
(b) we are not limited to a specific (e.g., constant-time) leakage model.

\textit{Pattern-based detection.}
Previous work performing pattern-based detection 
(e.g., \cite{Mosier2022AxiomaticHC, Len2021CatsVS, Trippel2019SecurityVV})
manually defines patterns.
We develop an approach to systematically generate patterns
that are complete (up to a skeleton size),
thus complementing these approaches.

\subsubsection*{Microarchitectural verification/abstraction}

The patterns that we generate can be thought of as software-side 
abstractions extracted based on our semantic platform analysis.

\textit{Contract-based abstractions.}
Approaches (e.g., \cite{Guarnieri2021HardwareSoftwareCF, wang2023specification}) 
develop security contracts at the hardware/software interface.
By proving that (a) the hardware refines the contract and (b) software satisfies the contract,
these approaches guarantee software security.
While the contract can also be viewed as a hardware/software abstraction, 
it is of a different nature than the patterns we generate.

\textit{Security verification/side-channel analysis.}
Several approaches directly verify hardware RTL w.r.t. security properties either 
formally (using symbolic techniques) 
\cite{Fadiheh2018ProcessorHS,Gleissenthall2019IODINEVC,Gleissenthall2021SolverAidedCH}
or using faster but incomplete techniques (e.g., fuzzing) 
\cite{Nilizadeh2018DifFuzzDF, Noller2021QFuzzQF, Oleksenko2021RevizorTB, Tyagi2022TheHuzzIF}.
Finally, there are approaches that perform automated extraction/validation 
of side-channels with white/black box designs \cite{oleksenko2023hide, Nemati2020ValidationOA, Gras2020ABSyntheAB, Weber2021OsirisAD}.
While this too can be seen as extracting security relevant hardware abstractions, 
it needs to be paired with software analysis techniques to be useful for software security verification.
\section{Conclusion}
\label{sec:conclusion}

In this work, we presented an approach to convert a given platform model and 
non-interference-based security hyperproperty into a set of attack patterns.
Our automatic generation approach improves on manual pattern creation 
(which can result in missed patterns) by guaranteeing that 
the generated patterns capture all non-interference violations up to a certain size.
We implemented our approach in a prototype tool.
Our evaluation resulted in the identification of, to our knowledge,
previously unknown patterns for Spectre BCB and Spectre STL, and 
other vulnerability variants.
We also demonstrated 
improved verification performance using generated patterns as compared to the original hyperproperty.
By providing a systematic way to generate patterns, our work
combines the best of both worlds: formal guarantees of hyperproperty-based
specification and scalability of pattern-based verification.

\bibliographystyle{ACM-Reference-Format}
\bibliography{refs}


\begin{thebibliography}{76}


\ifx \showCODEN    \undefined \def \showCODEN     #1{\unskip}     \fi
\ifx \showDOI      \undefined \def \showDOI       #1{#1}\fi
\ifx \showISBNx    \undefined \def \showISBNx     #1{\unskip}     \fi
\ifx \showISBNxiii \undefined \def \showISBNxiii  #1{\unskip}     \fi
\ifx \showISSN     \undefined \def \showISSN      #1{\unskip}     \fi
\ifx \showLCCN     \undefined \def \showLCCN      #1{\unskip}     \fi
\ifx \shownote     \undefined \def \shownote      #1{#1}          \fi
\ifx \showarticletitle \undefined \def \showarticletitle #1{#1}   \fi
\ifx \showURL      \undefined \def \showURL       {\relax}        \fi
\providecommand\bibfield[2]{#2}
\providecommand\bibinfo[2]{#2}
\providecommand\natexlab[1]{#1}
\providecommand\showeprint[2][]{arXiv:#2}

\bibitem[Alglave et~al\mbox{.}(2014)]%
        {Alglave2014HerdingCM}
\bibfield{author}{\bibinfo{person}{Jade Alglave}, \bibinfo{person}{Luc
  Maranget}, {and} \bibinfo{person}{Michael Tautschnig}.}
  \bibinfo{year}{2014}\natexlab{}.
\newblock \showarticletitle{Herding cats: modelling, simulation, testing, and
  data-mining for weak memory}.
\newblock \bibinfo{journal}{\emph{Proceedings of the 35th ACM SIGPLAN
  Conference on Programming Language Design and Implementation}}
  (\bibinfo{year}{2014}).
\newblock


\bibitem[Alur et~al\mbox{.}(2013)]%
        {Alur2013SyntaxguidedS}
\bibfield{author}{\bibinfo{person}{Rajeev Alur}, \bibinfo{person}{Rastislav
  Bodik}, \bibinfo{person}{Garvit Juniwal}, \bibinfo{person}{Milo M.~K.
  Martin}, \bibinfo{person}{Mukund Raghothaman}, \bibinfo{person}{Sanjit~A.
  Seshia}, \bibinfo{person}{Rishabh Singh}, \bibinfo{person}{Armando
  Solar-Lezama}, \bibinfo{person}{Emina Torlak}, {and}
  \bibinfo{person}{Abhishek Udupa}.} \bibinfo{year}{2013}\natexlab{}.
\newblock \showarticletitle{Syntax-guided synthesis}. In
  \bibinfo{booktitle}{\emph{2013 Formal Methods in Computer-Aided Design}}.
  \bibinfo{publisher}{IEEE}, \bibinfo{pages}{1--8}.
\newblock
\urldef\tempurl%
\url{https://doi.org/10.1109/fmcad.2013.6679385}
\showDOI{\tempurl}


\bibitem[Ball et~al\mbox{.}(2001)]%
        {Ball2001AutomaticPA}
\bibfield{author}{\bibinfo{person}{Thomas Ball}, \bibinfo{person}{Rupak
  Majumdar}, \bibinfo{person}{Todd Millstein}, {and} \bibinfo{person}{Sriram~K.
  Rajamani}.} \bibinfo{year}{2001}\natexlab{}.
\newblock \showarticletitle{Automatic predicate abstraction of C programs}. In
  \bibinfo{booktitle}{\emph{Proceedings of the ACM SIGPLAN 2001 Conference on
  Programming Language Design and Implementation}} (Snowbird, Utah, USA)
  \emph{(\bibinfo{series}{PLDI '01})}. \bibinfo{publisher}{Association for
  Computing Machinery}, \bibinfo{address}{New York, NY, USA},
  \bibinfo{pages}{203--213}.
\newblock
\showISBNx{1581134142}
\urldef\tempurl%
\url{https://doi.org/10.1145/378795.378846}
\showDOI{\tempurl}


\bibitem[Barbosa et~al\mbox{.}(2022)]%
        {Barbosa2022cvc5AV}
\bibfield{author}{\bibinfo{person}{Haniel Barbosa}, \bibinfo{person}{Clark~W.
  Barrett}, \bibinfo{person}{Martin Brain}, \bibinfo{person}{Gereon Kremer},
  \bibinfo{person}{Hanna Lachnitt}, \bibinfo{person}{Makai Mann},
  \bibinfo{person}{Abdalrhman Mohamed}, \bibinfo{person}{Mudathir Mohamed},
  \bibinfo{person}{Aina Niemetz}, \bibinfo{person}{Andres N{\"o}tzli},
  \bibinfo{person}{Alex Ozdemir}, \bibinfo{person}{Mathias Preiner},
  \bibinfo{person}{Andrew Reynolds}, \bibinfo{person}{Ying Sheng},
  \bibinfo{person}{Cesare Tinelli}, {and} \bibinfo{person}{Yoni Zohar}.}
  \bibinfo{year}{2022}\natexlab{}.
\newblock \showarticletitle{cvc5: A Versatile and Industrial-Strength SMT
  Solver}. In \bibinfo{booktitle}{\emph{TACAS}}.
\newblock


\bibitem[Barrett et~al\mbox{.}(2016)]%
        {BarFT-SMTLIB}
\bibfield{author}{\bibinfo{person}{Clark Barrett}, \bibinfo{person}{Pascal
  Fontaine}, {and} \bibinfo{person}{Cesare Tinelli}.}
  \bibinfo{year}{2016}\natexlab{}.
\newblock \bibinfo{title}{{The Satisfiability Modulo Theories Library
  (SMT-LIB)}}.
\newblock \bibinfo{howpublished}{{\tt www.SMT-LIB.org}}.
\newblock


\bibitem[Barrett et~al\mbox{.}(2009)]%
        {Barrett2009SatisfiabilityMT}
\bibfield{author}{\bibinfo{person}{Clark~W. Barrett}, \bibinfo{person}{Roberto
  Sebastiani}, \bibinfo{person}{Sanjit~A. Seshia}, {and}
  \bibinfo{person}{Cesare Tinelli}.} \bibinfo{year}{2009}\natexlab{}.
\newblock \showarticletitle{Satisfiability Modulo Theories}. In
  \bibinfo{booktitle}{\emph{Handbook of Satisfiability}}.
\newblock


\bibitem[Canella et~al\mbox{.}(2019a)]%
        {Canella2018ASE}
\bibfield{author}{\bibinfo{person}{Claudio Canella}, \bibinfo{person}{Jo~Van
  Bulck}, \bibinfo{person}{Michael Schwarz}, \bibinfo{person}{Moritz Lipp},
  \bibinfo{person}{Benjamin von Berg}, \bibinfo{person}{Philipp Ortner},
  \bibinfo{person}{Frank Piessens}, \bibinfo{person}{Dmitry Evtyushkin}, {and}
  \bibinfo{person}{Daniel Gruss}.} \bibinfo{year}{2019}\natexlab{a}.
\newblock \showarticletitle{A Systematic Evaluation of Transient Execution
  Attacks and Defenses}.
\newblock  (\bibinfo{date}{Aug.} \bibinfo{year}{2019}),
  \bibinfo{pages}{249--266}.
\newblock
\showISBNx{978-1-939133-06-9}
\urldef\tempurl%
\url{https://www.usenix.org/conference/usenixsecurity19/presentation/canella}
\showURL{%
\tempurl}


\bibitem[Canella et~al\mbox{.}(2019b)]%
        {Canella2019FalloutLD}
\bibfield{author}{\bibinfo{person}{Claudio Canella}, \bibinfo{person}{Daniel
  Genkin}, \bibinfo{person}{Lukas Giner}, \bibinfo{person}{Daniel Gruss},
  \bibinfo{person}{Moritz Lipp}, \bibinfo{person}{Marina Minkin},
  \bibinfo{person}{Daniel Moghimi}, \bibinfo{person}{Frank Piessens},
  \bibinfo{person}{Michael Schwarz}, \bibinfo{person}{Berk Sunar},
  \bibinfo{person}{Jo~Van Bulck}, {and} \bibinfo{person}{Yuval Yarom}.}
  \bibinfo{year}{2019}\natexlab{b}.
\newblock \showarticletitle{Fallout: Leaking Data on Meltdown-resistant CPUs}.
\newblock \bibinfo{journal}{\emph{Proceedings of the 2019 ACM SIGSAC Conference
  on Computer and Communications Security}} (\bibinfo{year}{2019}).
\newblock


\bibitem[Cauligi et~al\mbox{.}(2020)]%
        {Cauligi2019ConstanttimeFF}
\bibfield{author}{\bibinfo{person}{Sunjay Cauligi}, \bibinfo{person}{Craig
  Disselkoen}, \bibinfo{person}{Klaus~v. Gleissenthall}, \bibinfo{person}{Dean
  Tullsen}, \bibinfo{person}{Deian Stefan}, \bibinfo{person}{Tamara Rezk},
  {and} \bibinfo{person}{Gilles Barthe}.} \bibinfo{year}{2020}\natexlab{}.
\newblock \showarticletitle{Constant-time foundations for the new spectre era}.
  In \bibinfo{booktitle}{\emph{Proceedings of the 41st ACM SIGPLAN Conference
  on Programming Language Design and Implementation}}
  \emph{(\bibinfo{series}{PLDI '20})}. \bibinfo{publisher}{ACM}.
\newblock
\urldef\tempurl%
\url{https://doi.org/10.1145/3385412.3385970}
\showDOI{\tempurl}


\bibitem[Cauligi et~al\mbox{.}(2022)]%
        {Cauligi2021SoKPF}
\bibfield{author}{\bibinfo{person}{Sunjay Cauligi}, \bibinfo{person}{Craig
  Disselkoen}, \bibinfo{person}{Daniel Moghimi}, \bibinfo{person}{Gilles
  Barthe}, {and} \bibinfo{person}{Deian Stefan}.}
  \bibinfo{year}{2022}\natexlab{}.
\newblock \showarticletitle{SoK: Practical Foundations for Software Spectre
  Defenses}. In \bibinfo{booktitle}{\emph{2022 IEEE Symposium on Security and
  Privacy (SP)}}. \bibinfo{publisher}{IEEE}, \bibinfo{pages}{666--680}.
\newblock
\urldef\tempurl%
\url{https://doi.org/10.1109/sp46214.2022.9833707}
\showDOI{\tempurl}


\bibitem[Charles et~al\mbox{.}(2002)]%
        {Charles2002ImprovingIW}
\bibfield{author}{\bibinfo{person}{Desiree Charles}, \bibinfo{person}{Ali~R.
  Hurson}, {and} \bibinfo{person}{Narayanan Vijaykrishnan}.}
  \bibinfo{year}{2002}\natexlab{}.
\newblock \showarticletitle{Improving ILP with instruction-reuse cache
  hierarchy}.
\newblock \bibinfo{journal}{\emph{Fifth International Conference on Algorithms
  and Architectures for Parallel Processing, 2002. Proceedings.}}
  (\bibinfo{year}{2002}), \bibinfo{pages}{206--213}.
\newblock


\bibitem[Cheang et~al\mbox{.}(2019)]%
        {Cheang2019AFA}
\bibfield{author}{\bibinfo{person}{Kevin Cheang}, \bibinfo{person}{Cameron
  Rasmussen}, \bibinfo{person}{Sanjit Seshia}, {and} \bibinfo{person}{Pramod
  Subramanyan}.} \bibinfo{year}{2019}\natexlab{}.
\newblock \showarticletitle{A Formal Approach to Secure Speculation}. In
  \bibinfo{booktitle}{\emph{2019 IEEE 32nd Computer Security Foundations
  Symposium (CSF)}}. \bibinfo{publisher}{IEEE}, \bibinfo{pages}{288--28815}.
\newblock
\urldef\tempurl%
\url{https://doi.org/10.1109/csf.2019.00027}
\showDOI{\tempurl}


\bibitem[{Claire Wolf, et. al.}(2022)]%
        {symbiyosys}
\bibfield{author}{\bibinfo{person}{{Claire Wolf, et. al.}}}
  \bibinfo{year}{2022}\natexlab{}.
\newblock \bibinfo{title}{SymbiYosys}.
\newblock \bibinfo{howpublished}{\url{https://github.com/YosysHQ/sby}}.
\newblock


\bibitem[Clarke et~al\mbox{.}(1994)]%
        {Clarke1993ModelCA}
\bibfield{author}{\bibinfo{person}{Edmund~M. Clarke}, \bibinfo{person}{Orna
  Grumberg}, {and} \bibinfo{person}{David~E. Long}.}
  \bibinfo{year}{1994}\natexlab{}.
\newblock \showarticletitle{Model checking and abstraction}.
\newblock \bibinfo{journal}{\emph{ACM Trans. Program. Lang. Syst.}}
  \bibinfo{volume}{16}, \bibinfo{number}{5} (\bibinfo{date}{Sep.}
  \bibinfo{year}{1994}), \bibinfo{pages}{1512--1542}.
\newblock
\showISSN{0164-0925}
\urldef\tempurl%
\url{https://doi.org/10.1145/186025.186051}
\showDOI{\tempurl}


\bibitem[Clarkson and Schneider(2008)]%
        {Clarkson2008Hyperproperties}
\bibfield{author}{\bibinfo{person}{Michael~R. Clarkson} {and}
  \bibinfo{person}{Fred~B. Schneider}.} \bibinfo{year}{2008}\natexlab{}.
\newblock \showarticletitle{Hyperproperties}.
\newblock \bibinfo{journal}{\emph{2008 21st IEEE Computer Security Foundations
  Symposium}} (\bibinfo{year}{2008}), \bibinfo{pages}{51--65}.
\newblock


\bibitem[Daniel et~al\mbox{.}(2021)]%
        {Daniel2021HuntingTH}
\bibfield{author}{\bibinfo{person}{Lesly-Ann Daniel},
  \bibinfo{person}{S{\'e}bastien Bardin}, {and} \bibinfo{person}{Tamara Rezk}.}
  \bibinfo{year}{2021}\natexlab{}.
\newblock \showarticletitle{Hunting the Haunter - Efficient Relational Symbolic
  Execution for Spectre with Haunted RelSE}.
\newblock \bibinfo{journal}{\emph{Proceedings 2021 Network and Distributed
  System Security Symposium}} (\bibinfo{year}{2021}).
\newblock
\urldef\tempurl%
\url{https://api.semanticscholar.org/CorpusID:231878700}
\showURL{%
\tempurl}


\bibitem[Daniel et~al\mbox{.}(2023)]%
        {Daniel2019BinsecRelER}
\bibfield{author}{\bibinfo{person}{Lesly-Ann Daniel},
  \bibinfo{person}{S\'{e}bastien Bardin}, {and} \bibinfo{person}{Tamara Rezk}.}
  \bibinfo{year}{2023}\natexlab{}.
\newblock \showarticletitle{Binsec/Rel: Symbolic Binary Analyzer for Security
  with Applications to Constant-Time and Secret-Erasure}.
\newblock \bibinfo{journal}{\emph{ACM Trans. Priv. Secur.}}
  \bibinfo{volume}{26}, \bibinfo{number}{2}, Article \bibinfo{articleno}{11}
  (\bibinfo{date}{Apr} \bibinfo{year}{2023}), \bibinfo{numpages}{42}~pages.
\newblock
\showISSN{2471-2566}
\urldef\tempurl%
\url{https://doi.org/10.1145/3563037}
\showDOI{\tempurl}


\bibitem[Das and Dill(2002)]%
        {Das2002CounterExampleBP}
\bibfield{author}{\bibinfo{person}{Satyaki Das} {and} \bibinfo{person}{David~L.
  Dill}.} \bibinfo{year}{2002}\natexlab{}.
\newblock \showarticletitle{Counter-Example Based Predicate Discovery in
  Predicate Abstraction}. In \bibinfo{booktitle}{\emph{Formal Methods in
  Computer-Aided Design}}, \bibfield{editor}{\bibinfo{person}{Mark~D. Aagaard}
  {and} \bibinfo{person}{John~W. O'Leary}} (Eds.). \bibinfo{publisher}{Springer
  Berlin Heidelberg}, \bibinfo{address}{Berlin, Heidelberg},
  \bibinfo{pages}{19--32}.
\newblock
\showISBNx{978-3-540-36126-8}
\urldef\tempurl%
\url{https://doi.org/10.1145/378795.378846}
\showDOI{\tempurl}


\bibitem[David et~al\mbox{.}(2016)]%
        {David2016BINSECSEAD}
\bibfield{author}{\bibinfo{person}{Robin David}, \bibinfo{person}{S{\'e}bastien
  Bardin}, \bibinfo{person}{Thanh~Dinh Ta}, \bibinfo{person}{Laurent Mounier},
  \bibinfo{person}{Josselin Feist}, \bibinfo{person}{Marie-Laure Potet}, {and}
  \bibinfo{person}{Jean-Yves Marion}.} \bibinfo{year}{2016}\natexlab{}.
\newblock \showarticletitle{BINSEC/SE: A Dynamic Symbolic Execution Toolkit for
  Binary-Level Analysis}.
\newblock \bibinfo{journal}{\emph{2016 IEEE 23rd International Conference on
  Software Analysis, Evolution, and Reengineering (SANER)}}
  \bibinfo{volume}{1} (\bibinfo{year}{2016}), \bibinfo{pages}{653--656}.
\newblock
\urldef\tempurl%
\url{https://api.semanticscholar.org/CorpusID:7488274}
\showURL{%
\tempurl}


\bibitem[de~Moura and Bj{\o}rner(2008)]%
        {Moura2008Z3AE}
\bibfield{author}{\bibinfo{person}{Leonardo~Mendonça de Moura} {and}
  \bibinfo{person}{Nikolaj~S. Bj{\o}rner}.} \bibinfo{year}{2008}\natexlab{}.
\newblock \showarticletitle{Z3: An Efficient SMT Solver}. In
  \bibinfo{booktitle}{\emph{International Conference on Tools and Algorithms
  for Construction and Analysis of Systems}}.
\newblock


\bibitem[Fabian et~al\mbox{.}(2022)]%
        {Fabian2022AutomaticDO}
\bibfield{author}{\bibinfo{person}{Xaver Fabian}, \bibinfo{person}{Marco
  Guarnieri}, {and} \bibinfo{person}{Marco Patrignani}.}
  \bibinfo{year}{2022}\natexlab{}.
\newblock \showarticletitle{Automatic Detection of Speculative Execution
  Combinations}.
\newblock \bibinfo{journal}{\emph{Proceedings of the 2022 ACM SIGSAC Conference
  on Computer and Communications Security}} (\bibinfo{year}{2022}).
\newblock


\bibitem[Fadiheh et~al\mbox{.}(2018)]%
        {Fadiheh2018ProcessorHS}
\bibfield{author}{\bibinfo{person}{Mohammad~Rahmani Fadiheh},
  \bibinfo{person}{Dominik Stoffel}, \bibinfo{person}{Clark~W. Barrett},
  \bibinfo{person}{Subhasish Mitra}, {and} \bibinfo{person}{Wolfgang Kunz}.}
  \bibinfo{year}{2018}\natexlab{}.
\newblock \showarticletitle{Processor Hardware Security Vulnerabilities and
  their Detection by Unique Program Execution Checking}.
\newblock \bibinfo{journal}{\emph{2019 Design, Automation \& Test in Europe
  Conference \& Exhibition (DATE)}} (\bibinfo{year}{2018}),
  \bibinfo{pages}{994--999}.
\newblock


\bibitem[Farina et~al\mbox{.}(2019)]%
        {Farina2017RelationalSE}
\bibfield{author}{\bibinfo{person}{Gian~Pietro Farina},
  \bibinfo{person}{Stephen Chong}, {and} \bibinfo{person}{Marco Gaboardi}.}
  \bibinfo{year}{2019}\natexlab{}.
\newblock \showarticletitle{Relational Symbolic Execution}. In
  \bibinfo{booktitle}{\emph{Proceedings of the 21st International Symposium on
  Principles and Practice of Declarative Programming}}
  \emph{(\bibinfo{series}{PPDP '19})}. \bibinfo{publisher}{ACM}.
\newblock
\urldef\tempurl%
\url{https://doi.org/10.1145/3354166.3354175}
\showDOI{\tempurl}


\bibitem[Ge et~al\mbox{.}(2018)]%
        {Ge2018ASO}
\bibfield{author}{\bibinfo{person}{Qian Ge}, \bibinfo{person}{Yuval Yarom},
  \bibinfo{person}{David~A. Cock}, {and} \bibinfo{person}{Gernot Heiser}.}
  \bibinfo{year}{2018}\natexlab{}.
\newblock \showarticletitle{A survey of microarchitectural timing attacks and
  countermeasures on contemporary hardware}.
\newblock \bibinfo{journal}{\emph{J. Cryptogr. Eng.}} \bibinfo{volume}{8},
  \bibinfo{number}{1} (\bibinfo{year}{2018}), \bibinfo{pages}{1--27}.
\newblock
\urldef\tempurl%
\url{https://doi.org/10.1007/S13389-016-0141-6}
\showDOI{\tempurl}


\bibitem[Geimer et~al\mbox{.}(2023)]%
        {Geimer2023SystematicEO}
\bibfield{author}{\bibinfo{person}{Antoine Geimer}, \bibinfo{person}{Math\'{e}o
  Vergnolle}, \bibinfo{person}{Fr\'{e}d\'{e}ric Recoules},
  \bibinfo{person}{Lesly-Ann Daniel}, \bibinfo{person}{S\'{e}bastien Bardin},
  {and} \bibinfo{person}{Cl\'{e}mentine Maurice}.}
  \bibinfo{year}{2023}\natexlab{}.
\newblock \showarticletitle{A Systematic Evaluation of Automated Tools for
  Side-Channel Vulnerabilities Detection in Cryptographic Libraries}. In
  \bibinfo{booktitle}{\emph{Proceedings of the 2023 ACM SIGSAC Conference on
  Computer and Communications Security}} (Copenhagen, Denmark)
  \emph{(\bibinfo{series}{CCS '23})}. \bibinfo{publisher}{Association for
  Computing Machinery}, \bibinfo{address}{New York, NY, USA},
  \bibinfo{pages}{1690--1704}.
\newblock
\showISBNx{9798400700507}
\urldef\tempurl%
\url{https://doi.org/10.1145/3576915.3623112}
\showDOI{\tempurl}


\bibitem[Godbole et~al\mbox{.}(2024)]%
        {Godbole24Lifting}
\bibfield{author}{\bibinfo{person}{Adwait Godbole}, \bibinfo{person}{Kevin
  Cheang}, \bibinfo{person}{Yatin~A. Manerkar}, {and}
  \bibinfo{person}{Sanjit~A. Seshia}.} \bibinfo{year}{2024}\natexlab{}.
\newblock \showarticletitle{Lifting Micro-Update Models from RTL for Formal
  Security Analysis}. In \bibinfo{booktitle}{\emph{Proceedings of the 29th ACM
  International Conference on Architectural Support for Programming Languages
  and Operating Systems, Volume 2}} (La Jolla, CA, USA)
  \emph{(\bibinfo{series}{ASPLOS '24})}. \bibinfo{publisher}{Association for
  Computing Machinery}, \bibinfo{address}{New York, NY, USA},
  \bibinfo{pages}{631--648}.
\newblock
\showISBNx{9798400703850}
\urldef\tempurl%
\url{https://doi.org/10.1145/3620665.3640418}
\showDOI{\tempurl}


\bibitem[Goguen and Meseguer(1984)]%
        {Goguen1984UnwindingAI}
\bibfield{author}{\bibinfo{person}{Joseph~A. Goguen} {and}
  \bibinfo{person}{Jos{\'e} Meseguer}.} \bibinfo{year}{1984}\natexlab{}.
\newblock \showarticletitle{Unwinding and Inference Control}.
\newblock \bibinfo{journal}{\emph{1984 IEEE Symposium on Security and Privacy}}
  (\bibinfo{year}{1984}), \bibinfo{pages}{75--75}.
\newblock


\bibitem[Gras et~al\mbox{.}(2020)]%
        {Gras2020ABSyntheAB}
\bibfield{author}{\bibinfo{person}{Ben Gras}, \bibinfo{person}{Cristiano
  Giuffrida}, \bibinfo{person}{Michael Kurth}, \bibinfo{person}{Herbert Bos},
  {and} \bibinfo{person}{Kaveh Razavi}.} \bibinfo{year}{2020}\natexlab{}.
\newblock \showarticletitle{{ABSynthe: Automatic Blackbox Side-channel
  Synthesis on Commodity Microarchitectures}}. In
  \bibinfo{booktitle}{\emph{27th Annual Network and Distributed System Security
  Symposium, {NDSS} 2020, San Diego, California, USA, February 23-26, 2020}}.
  \bibinfo{publisher}{The Internet Society}.
\newblock
\urldef\tempurl%
\url{https://www.ndss-symposium.org/ndss-paper/absynthe-automatic-blackbox-side-channel-synthesis-on-commodity-microarchitectures/}
\showURL{%
\tempurl}


\bibitem[Guanciale et~al\mbox{.}(2020)]%
        {Balliu2020InSpectreBA}
\bibfield{author}{\bibinfo{person}{Roberto Guanciale}, \bibinfo{person}{Musard
  Balliu}, {and} \bibinfo{person}{Mads Dam}.} \bibinfo{year}{2020}\natexlab{}.
\newblock \showarticletitle{InSpectre: Breaking and Fixing Microarchitectural
  Vulnerabilities by Formal Analysis}. In \bibinfo{booktitle}{\emph{Proceedings
  of the 2020 ACM SIGSAC Conference on Computer and Communications Security}}
  \emph{(\bibinfo{series}{CCS '20})}. \bibinfo{publisher}{ACM}.
\newblock
\urldef\tempurl%
\url{https://doi.org/10.1145/3372297.3417246}
\showDOI{\tempurl}


\bibitem[Guarnieri et~al\mbox{.}(2020)]%
        {Guarnieri2020SpectectorPD}
\bibfield{author}{\bibinfo{person}{Marco Guarnieri}, \bibinfo{person}{Boris
  K{\"o}pf}, \bibinfo{person}{Jos{\'e}~Francisco Morales}, \bibinfo{person}{Jan
  Reineke}, {and} \bibinfo{person}{Andr{\'e}s S{\'a}nchez}.}
  \bibinfo{year}{2020}\natexlab{}.
\newblock \showarticletitle{Spectector: Principled Detection of Speculative
  Information Flows}. In \bibinfo{booktitle}{\emph{2020 IEEE Symposium on
  Security and Privacy (SP)}}. \bibinfo{publisher}{IEEE},
  \bibinfo{pages}{1--19}.
\newblock
\urldef\tempurl%
\url{https://doi.org/10.1109/sp40000.2020.00011}
\showDOI{\tempurl}


\bibitem[Guarnieri et~al\mbox{.}(2021)]%
        {Guarnieri2021HardwareSoftwareCF}
\bibfield{author}{\bibinfo{person}{Marco Guarnieri}, \bibinfo{person}{Boris
  K{\"o}pf}, \bibinfo{person}{Jan Reineke}, {and} \bibinfo{person}{Pepe Vila}.}
  \bibinfo{year}{2021}\natexlab{}.
\newblock \showarticletitle{Hardware-Software Contracts for Secure
  Speculation}.
\newblock \bibinfo{journal}{\emph{2021 IEEE Symposium on Security and Privacy
  (SP)}} (\bibinfo{year}{2021}), \bibinfo{pages}{1868--1883}.
\newblock


\bibitem[Guo et~al\mbox{.}(2019)]%
        {Guo2019SPECUSYMSS}
\bibfield{author}{\bibinfo{person}{Shengjian Guo}, \bibinfo{person}{Yueqi
  Chen}, \bibinfo{person}{Peng Li}, \bibinfo{person}{Yueqiang Cheng},
  \bibinfo{person}{Huibo Wang}, \bibinfo{person}{Meng Wu}, {and}
  \bibinfo{person}{Zhiqiang Zuo}.} \bibinfo{year}{2019}\natexlab{}.
\newblock \showarticletitle{SPECUSYM: Speculative Symbolic Execution for Cache
  Timing Leak Detection}.
\newblock \bibinfo{journal}{\emph{2020 IEEE/ACM 42nd International Conference
  on Software Engineering (ICSE)}} (\bibinfo{year}{2019}),
  \bibinfo{pages}{1235--1247}.
\newblock
\urldef\tempurl%
\url{https://api.semanticscholar.org/CorpusID:207869919}
\showURL{%
\tempurl}


\bibitem[Hu et~al\mbox{.}(2021)]%
        {Hu2021SoKHD}
\bibfield{author}{\bibinfo{person}{Guangyuan Hu}, \bibinfo{person}{Zecheng He},
  {and} \bibinfo{person}{Ruby~B. Lee}.} \bibinfo{year}{2021}\natexlab{}.
\newblock \showarticletitle{SoK: Hardware Defenses Against Speculative
  Execution Attacks}. In \bibinfo{booktitle}{\emph{2021 International Symposium
  on Secure and Private Execution Environment Design (SEED)}}.
  \bibinfo{publisher}{IEEE}, \bibinfo{pages}{108--120}.
\newblock
\urldef\tempurl%
\url{https://doi.org/10.1109/seed51797.2021.00023}
\showDOI{\tempurl}


\bibitem[Intel(2023)]%
        {IntelDOIT}
\bibfield{author}{\bibinfo{person}{Intel}.} \bibinfo{year}{2023}\natexlab{}.
\newblock \bibinfo{title}{{Data Operand Independent Timing Instruction Set
  Architecture (ISA) Guidance}}.
\newblock
\newblock
\urldef\tempurl%
\url{https://www.intel.com/content/www/us/en/developer/articles/technical/software-security-guidance/best-practices/data-operand-independent-timing-isa-guidance.html}
\showURL{%
\tempurl}
\newblock
\shownote{Accessed: 2023-11-23}.


\bibitem[Jha and Seshia(2015)]%
        {Jha2015ATO}
\bibfield{author}{\bibinfo{person}{Susmit Jha} {and} \bibinfo{person}{Sanjit~A.
  Seshia}.} \bibinfo{year}{2015}\natexlab{}.
\newblock \showarticletitle{A theory of formal synthesis via inductive
  learning}.
\newblock \bibinfo{journal}{\emph{Acta Informatica}}  \bibinfo{volume}{54}
  (\bibinfo{year}{2015}), \bibinfo{pages}{693--726}.
\newblock


\bibitem[King(1976)]%
        {King1976SymbolicEA}
\bibfield{author}{\bibinfo{person}{James~C. King}.}
  \bibinfo{year}{1976}\natexlab{}.
\newblock \showarticletitle{Symbolic execution and program testing}.
\newblock \bibinfo{journal}{\emph{Commun. ACM}} \bibinfo{volume}{19},
  \bibinfo{number}{7} (\bibinfo{date}{Jul} \bibinfo{year}{1976}),
  \bibinfo{pages}{385--394}.
\newblock
\showISSN{1557-7317}
\urldef\tempurl%
\url{https://doi.org/10.1145/360248.360252}
\showDOI{\tempurl}


\bibitem[Kocher(2018)]%
        {kocher}
\bibfield{author}{\bibinfo{person}{Paul Kocher}.}
  \bibinfo{year}{2018}\natexlab{}.
\newblock \bibinfo{booktitle}{\emph{Spectre Mitigations in Microsoft’s C/C++
  Compiler}}.
\newblock
\urldef\tempurl%
\url{https://www.paulkocher.com/doc/MicrosoftCompilerSpectreMitigation.html}
\showURL{%
\tempurl}


\bibitem[Kocher et~al\mbox{.}(2019)]%
        {Kocher2019SpectreAE}
\bibfield{author}{\bibinfo{person}{Paul~C. Kocher}, \bibinfo{person}{Daniel
  Genkin}, \bibinfo{person}{Daniel Gruss}, \bibinfo{person}{Werner Haas},
  \bibinfo{person}{Michael Hamburg}, \bibinfo{person}{Moritz Lipp},
  \bibinfo{person}{Stefan Mangard}, \bibinfo{person}{Thomas Prescher},
  \bibinfo{person}{Michael Schwarz}, {and} \bibinfo{person}{Yuval Yarom}.}
  \bibinfo{year}{2019}\natexlab{}.
\newblock \showarticletitle{Spectre Attacks: Exploiting Speculative Execution}.
\newblock \bibinfo{journal}{\emph{2019 IEEE Symposium on Security and Privacy
  (SP)}} (\bibinfo{year}{2019}), \bibinfo{pages}{1--19}.
\newblock


\bibitem[Koruyeh et~al\mbox{.}(2024)]%
        {Koruyeh2018SpectreRS}
\bibfield{author}{\bibinfo{person}{Esmaeil~Mohammadian Koruyeh},
  \bibinfo{person}{Khaled~N. Khasawneh}, \bibinfo{person}{Chengyu Song}, {and}
  \bibinfo{person}{Nael Abu-Ghazaleh}.} \bibinfo{year}{2024}\natexlab{}.
\newblock \showarticletitle{Spectre Returns! Speculation Attacks Using the
  Return Stack Buffer}.
\newblock \bibinfo{journal}{\emph{IEEE Design \& Test}} \bibinfo{volume}{41},
  \bibinfo{number}{2}, \bibinfo{pages}{47--55}.
\newblock
\showISSN{2168-2364}
\urldef\tempurl%
\url{https://doi.org/10.1109/mdat.2024.3352537}
\showDOI{\tempurl}


\bibitem[Kozyri et~al\mbox{.}(2022)]%
        {Kozyri2022ExpressingIF}
\bibfield{author}{\bibinfo{person}{Elisavet Kozyri}, \bibinfo{person}{Stephen
  Chong}, {and} \bibinfo{person}{Andrew~C. Myers}.}
  \bibinfo{year}{2022}\natexlab{}.
\newblock \showarticletitle{Expressing Information Flow Properties}.
\newblock \bibinfo{journal}{\emph{Found. Trends Priv. Secur.}}
  \bibinfo{volume}{3} (\bibinfo{year}{2022}), \bibinfo{pages}{1--102}.
\newblock


\bibitem[Lepak et~al\mbox{.}(2001)]%
        {Lepak2001SilentSA}
\bibfield{author}{\bibinfo{person}{Kevin~M. Lepak}, \bibinfo{person}{Gordon~B.
  Bell}, {and} \bibinfo{person}{Mikko~H. Lipasti}.}
  \bibinfo{year}{2001}\natexlab{}.
\newblock \showarticletitle{Silent Stores and Store Value Locality}.
\newblock \bibinfo{journal}{\emph{IEEE Trans. Computers}}  \bibinfo{volume}{50}
  (\bibinfo{year}{2001}), \bibinfo{pages}{1174--1190}.
\newblock


\bibitem[Lepak and Lipasti(2000a)]%
        {Lepak2000OnTV}
\bibfield{author}{\bibinfo{person}{Kevin~M. Lepak} {and}
  \bibinfo{person}{Mikko~H. Lipasti}.} \bibinfo{year}{2000}\natexlab{a}.
\newblock \showarticletitle{On the value locality of store instructions}.
\newblock \bibinfo{journal}{\emph{Proceedings of 27th International Symposium
  on Computer Architecture (IEEE Cat. No.RS00201)}} (\bibinfo{year}{2000}),
  \bibinfo{pages}{182--191}.
\newblock


\bibitem[Lepak and Lipasti(2000b)]%
        {Lepak2000SilentSF}
\bibfield{author}{\bibinfo{person}{Kevin~M. Lepak} {and}
  \bibinfo{person}{Mikko~H. Lipasti}.} \bibinfo{year}{2000}\natexlab{b}.
\newblock \showarticletitle{Silent stores for free}.
\newblock \bibinfo{journal}{\emph{Proceedings 33rd Annual IEEE/ACM
  International Symposium on Microarchitecture. MICRO-33 2000}}
  (\bibinfo{year}{2000}), \bibinfo{pages}{22--31}.
\newblock


\bibitem[Lipp et~al\mbox{.}(2018)]%
        {Lipp2018MeltdownRK}
\bibfield{author}{\bibinfo{person}{Moritz Lipp}, \bibinfo{person}{Michael
  Schwarz}, \bibinfo{person}{Daniel Gruss}, \bibinfo{person}{Thomas Prescher},
  \bibinfo{person}{Werner Haas}, \bibinfo{person}{Anders Fogh},
  \bibinfo{person}{Jann Horn}, \bibinfo{person}{Stefan Mangard},
  \bibinfo{person}{Paul~C. Kocher}, \bibinfo{person}{Daniel Genkin},
  \bibinfo{person}{Yuval Yarom}, {and} \bibinfo{person}{Michael Hamburg}.}
  \bibinfo{year}{2018}\natexlab{}.
\newblock \showarticletitle{Meltdown: Reading Kernel Memory from User Space}.
  In \bibinfo{booktitle}{\emph{USENIX Security Symposium}}.
\newblock


\bibitem[Liu et~al\mbox{.}(2015)]%
        {Liu2015LastLevelCS}
\bibfield{author}{\bibinfo{person}{Fangfei Liu}, \bibinfo{person}{Yuval Yarom},
  \bibinfo{person}{Qian Ge}, \bibinfo{person}{Gernot Heiser}, {and}
  \bibinfo{person}{Ruby~B. Lee}.} \bibinfo{year}{2015}\natexlab{}.
\newblock \showarticletitle{Last-Level Cache Side-Channel Attacks are
  Practical}. In \bibinfo{booktitle}{\emph{2015 IEEE Symposium on Security and
  Privacy}}. \bibinfo{publisher}{IEEE}, \bibinfo{pages}{605--622}.
\newblock
\urldef\tempurl%
\url{https://doi.org/10.1109/sp.2015.43}
\showDOI{\tempurl}


\bibitem[Lustig et~al\mbox{.}(2016)]%
        {Lustig2016COATCheckVM}
\bibfield{author}{\bibinfo{person}{Daniel Lustig}, \bibinfo{person}{Geet
  Sethi}, \bibinfo{person}{Margaret Martonosi}, {and} \bibinfo{person}{Abhishek
  Bhattacharjee}.} \bibinfo{year}{2016}\natexlab{}.
\newblock \showarticletitle{COATCheck: Verifying Memory Ordering at the
  Hardware-OS Interface}. In \bibinfo{booktitle}{\emph{Proceedings of the
  Twenty-First International Conference on Architectural Support for
  Programming Languages and Operating Systems}} (Atlanta, Georgia, USA)
  \emph{(\bibinfo{series}{ASPLOS '16})}. \bibinfo{publisher}{Association for
  Computing Machinery}, \bibinfo{address}{New York, NY, USA},
  \bibinfo{pages}{233--247}.
\newblock
\showISBNx{9781450340915}
\urldef\tempurl%
\url{https://doi.org/10.1145/2872362.2872399}
\showDOI{\tempurl}


\bibitem[Maisuradze and Rossow(2018)]%
        {Maisuradze2018ret2specSE}
\bibfield{author}{\bibinfo{person}{Giorgi Maisuradze} {and}
  \bibinfo{person}{Christian Rossow}.} \bibinfo{year}{2018}\natexlab{}.
\newblock \showarticletitle{ret2spec: Speculative Execution Using Return Stack
  Buffers}.
\newblock \bibinfo{journal}{\emph{Proceedings of the 2018 ACM SIGSAC Conference
  on Computer and Communications Security}} (\bibinfo{year}{2018}).
\newblock
\urldef\tempurl%
\url{https://api.semanticscholar.org/CorpusID:51804116}
\showURL{%
\tempurl}


\bibitem[McMillan(1993)]%
        {SymbolicMC}
\bibfield{author}{\bibinfo{person}{Kenneth~L. McMillan}.}
  \bibinfo{year}{1993}\natexlab{}.
\newblock \bibinfo{booktitle}{\emph{Symbolic model checking}}.
\newblock \bibinfo{publisher}{Kluwer}.
\newblock
\showISBNx{978-0-7923-9380-1}
\urldef\tempurl%
\url{https://doi.org/10.1007/978-1-4615-3190-6_3}
\showURL{%
\tempurl}


\bibitem[Moghimi(2023)]%
        {moghimi2023downfall}
\bibfield{author}{\bibinfo{person}{Daniel Moghimi}.}
  \bibinfo{year}{2023}\natexlab{}.
\newblock \showarticletitle{Downfall: Exploiting Speculative Data Gathering}.
  In \bibinfo{booktitle}{\emph{32nd USENIX Security Symposium (USENIX Security
  23)}}. \bibinfo{publisher}{USENIX Association}, \bibinfo{address}{Anaheim,
  CA}, \bibinfo{pages}{7179--7193}.
\newblock
\showISBNx{978-1-939133-37-3}
\urldef\tempurl%
\url{https://www.usenix.org/conference/usenixsecurity23/presentation/moghimi}
\showURL{%
\tempurl}


\bibitem[Mosier et~al\mbox{.}(2022)]%
        {Mosier2022AxiomaticHC}
\bibfield{author}{\bibinfo{person}{Nicholas Mosier}, \bibinfo{person}{Hanna
  Lachnitt}, \bibinfo{person}{Hamed Nemati}, {and} \bibinfo{person}{Caroline
  Trippel}.} \bibinfo{year}{2022}\natexlab{}.
\newblock \showarticletitle{Axiomatic hardware-software contracts for
  security}. In \bibinfo{booktitle}{\emph{Proceedings of the 49th Annual
  International Symposium on Computer Architecture}}
  \emph{(\bibinfo{series}{ISCA '22})}. \bibinfo{publisher}{ACM}.
\newblock
\urldef\tempurl%
\url{https://doi.org/10.1145/3470496.3527412}
\showDOI{\tempurl}


\bibitem[Mutlu et~al\mbox{.}(2005)]%
        {Mutlu2005OnRT}
\bibfield{author}{\bibinfo{person}{Onur Mutlu}, \bibinfo{person}{Hyesoon Kim},
  \bibinfo{person}{Jared Stark}, {and} \bibinfo{person}{Yale~N. Patt}.}
  \bibinfo{year}{2005}\natexlab{}.
\newblock \showarticletitle{On Reusing the Results of Pre-Executed Instructions
  in a Runahead Execution Processor}.
\newblock \bibinfo{journal}{\emph{IEEE Computer Architecture Letters}}
  \bibinfo{volume}{4} (\bibinfo{year}{2005}), \bibinfo{pages}{2--2}.
\newblock


\bibitem[Nemati et~al\mbox{.}(2020)]%
        {Nemati2020ValidationOA}
\bibfield{author}{\bibinfo{person}{Hamed Nemati}, \bibinfo{person}{Pablo
  Buiras}, \bibinfo{person}{Andreas Lindner}, \bibinfo{person}{Roberto
  Guanciale}, {and} \bibinfo{person}{Swen Jacobs}.}
  \bibinfo{year}{2020}\natexlab{}.
\newblock \showarticletitle{Validation of Abstract Side-Channel Models for
  Computer Architectures}.
\newblock \bibinfo{journal}{\emph{Computer Aided Verification}}
  \bibinfo{volume}{12224} (\bibinfo{year}{2020}), \bibinfo{pages}{225 -- 248}.
\newblock
\urldef\tempurl%
\url{https://api.semanticscholar.org/CorpusID:211567271}
\showURL{%
\tempurl}


\bibitem[Nilizadeh et~al\mbox{.}(2018)]%
        {Nilizadeh2018DifFuzzDF}
\bibfield{author}{\bibinfo{person}{Shirin Nilizadeh}, \bibinfo{person}{Yannic
  Noller}, {and} \bibinfo{person}{Corina~S. Pasareanu}.}
  \bibinfo{year}{2018}\natexlab{}.
\newblock \showarticletitle{DifFuzz: Differential Fuzzing for Side-Channel
  Analysis}.
\newblock \bibinfo{journal}{\emph{2019 IEEE/ACM 41st International Conference
  on Software Engineering (ICSE)}} (\bibinfo{year}{2018}),
  \bibinfo{pages}{176--187}.
\newblock


\bibitem[Noller and Tizpaz-Niari(2021)]%
        {Noller2021QFuzzQF}
\bibfield{author}{\bibinfo{person}{Yannic Noller} {and} \bibinfo{person}{Saeid
  Tizpaz-Niari}.} \bibinfo{year}{2021}\natexlab{}.
\newblock \showarticletitle{QFuzz: quantitative fuzzing for side channels}.
\newblock \bibinfo{journal}{\emph{Proceedings of the 30th ACM SIGSOFT
  International Symposium on Software Testing and Analysis}}
  (\bibinfo{year}{2021}).
\newblock


\bibitem[Oleksenko et~al\mbox{.}(2022)]%
        {Oleksenko2021RevizorTB}
\bibfield{author}{\bibinfo{person}{Oleksii Oleksenko},
  \bibinfo{person}{Christof Fetzer}, \bibinfo{person}{Boris Köpf}, {and}
  \bibinfo{person}{Mark Silberstein}.} \bibinfo{year}{2022}\natexlab{}.
\newblock \showarticletitle{Revizor: testing black-box CPUs against speculation
  contracts}.
\newblock  (\bibinfo{date}{Feb.} \bibinfo{year}{2022}).
\newblock
\urldef\tempurl%
\url{https://doi.org/10.1145/3503222.3507729}
\showDOI{\tempurl}


\bibitem[Oleksenko et~al\mbox{.}(2023)]%
        {oleksenko2023hide}
\bibfield{author}{\bibinfo{person}{Oleksii Oleksenko}, \bibinfo{person}{Marco
  Guarnieri}, \bibinfo{person}{Boris K\"{o}pf}, {and} \bibinfo{person}{Mark
  Silberstein}.} \bibinfo{year}{2023}\natexlab{}.
\newblock \showarticletitle{Hide and Seek with Spectres: Efficient discovery of
  speculative information leaks with random testing}. In
  \bibinfo{booktitle}{\emph{2023 IEEE Symposium on Security and Privacy (SP)}}.
  \bibinfo{publisher}{IEEE}.
\newblock
\urldef\tempurl%
\url{https://doi.org/10.1109/sp46215.2023.10179391}
\showDOI{\tempurl}


\bibitem[Polgreen et~al\mbox{.}(2022)]%
        {uclid5}
\bibfield{author}{\bibinfo{person}{Elizabeth Polgreen}, \bibinfo{person}{Kevin
  Cheang}, \bibinfo{person}{Pranav Gaddamadugu}, \bibinfo{person}{Adwait
  Godbole}, \bibinfo{person}{Kevin Laeufer}, \bibinfo{person}{Shaokai Lin},
  \bibinfo{person}{Yatin~A. Manerkar}, \bibinfo{person}{Federico Mora}, {and}
  \bibinfo{person}{Sanjit~A. Seshia}.} \bibinfo{year}{2022}\natexlab{}.
\newblock \showarticletitle{{UCLID5:} Multi-modal Formal Modeling,
  Verification, and Synthesis}. In \bibinfo{booktitle}{\emph{34th International
  Conference on Computer Aided Verification (CAV)}}
  \emph{(\bibinfo{series}{Lecture Notes in Computer Science},
  Vol.~\bibinfo{volume}{13371})}. \bibinfo{publisher}{Springer},
  \bibinfo{pages}{538--551}.
\newblock


\bibitem[Ponce-de Leon and Kinder(2022)]%
        {Len2021CatsVS}
\bibfield{author}{\bibinfo{person}{Hernán Ponce-de Leon} {and}
  \bibinfo{person}{Johannes Kinder}.} \bibinfo{year}{2022}\natexlab{}.
\newblock \showarticletitle{Cats vs. Spectre: An Axiomatic Approach to Modeling
  Speculative Execution Attacks}.
\newblock  (\bibinfo{date}{May} \bibinfo{year}{2022}).
\newblock
\urldef\tempurl%
\url{https://doi.org/10.1109/sp46214.2022.9833774}
\showDOI{\tempurl}


\bibitem[Rushby(1992)]%
        {Rushby1992NoninterferenceTT}
\bibfield{author}{\bibinfo{person}{John Rushby}.}
  \bibinfo{year}{1992}\natexlab{}.
\newblock \bibinfo{booktitle}{\emph{Noninterference, Transitivity, and
  Channel-Control Security Policies}}.
\newblock \bibinfo{type}{{T}echnical {R}eport}.
\newblock
\urldef\tempurl%
\url{http://www.csl.sri.com/papers/csl-92-2/}
\showURL{%
\tempurl}


\bibitem[Sanchez~Vicarte et~al\mbox{.}(2021)]%
        {Vicarte2021OpeningPB}
\bibfield{author}{\bibinfo{person}{Jose~Rodrigo Sanchez~Vicarte},
  \bibinfo{person}{Pradyumna Shome}, \bibinfo{person}{Nandeeka Nayak},
  \bibinfo{person}{Caroline Trippel}, \bibinfo{person}{Adam Morrison},
  \bibinfo{person}{David Kohlbrenner}, {and} \bibinfo{person}{Christopher~W.
  Fletcher}.} \bibinfo{year}{2021}\natexlab{}.
\newblock \showarticletitle{Opening Pandora's Box: A Systematic Study of New
  Ways Microarchitecture Can Leak Private Data}. In
  \bibinfo{booktitle}{\emph{2021 ACM/IEEE 48th Annual International Symposium
  on Computer Architecture (ISCA)}}. \bibinfo{publisher}{IEEE},
  \bibinfo{pages}{347--360}.
\newblock
\urldef\tempurl%
\url{https://doi.org/10.1109/isca52012.2021.00035}
\showDOI{\tempurl}


\bibitem[Schwarz et~al\mbox{.}(2019a)]%
        {Schwarz2019ZombieLoadCD}
\bibfield{author}{\bibinfo{person}{Michael Schwarz}, \bibinfo{person}{Moritz
  Lipp}, \bibinfo{person}{Daniel Moghimi}, \bibinfo{person}{Jo Van~Bulck},
  \bibinfo{person}{Julian Stecklina}, \bibinfo{person}{Thomas Prescher}, {and}
  \bibinfo{person}{Daniel Gruss}.} \bibinfo{year}{2019}\natexlab{a}.
\newblock \showarticletitle{ZombieLoad: Cross-Privilege-Boundary Data
  Sampling}. In \bibinfo{booktitle}{\emph{Proceedings of the 2019 ACM SIGSAC
  Conference on Computer and Communications Security}}
  \emph{(\bibinfo{series}{CCS '19})}. \bibinfo{publisher}{ACM}.
\newblock
\urldef\tempurl%
\url{https://doi.org/10.1145/3319535.3354252}
\showDOI{\tempurl}


\bibitem[Schwarz et~al\mbox{.}(2019b)]%
        {Schwarz2019NetSpectreRA}
\bibfield{author}{\bibinfo{person}{Michael Schwarz}, \bibinfo{person}{Martin
  Schwarzl}, \bibinfo{person}{Moritz Lipp}, {and} \bibinfo{person}{Daniel
  Gruss}.} \bibinfo{year}{2019}\natexlab{b}.
\newblock \showarticletitle{NetSpectre: Read Arbitrary Memory over Network}.
\newblock \bibinfo{journal}{\emph{ArXiv}}  \bibinfo{volume}{abs/1807.10535}
  (\bibinfo{year}{2019}).
\newblock


\bibitem[Sha et~al\mbox{.}(2006)]%
        {Sha2006NoSQSC}
\bibfield{author}{\bibinfo{person}{Tingting Sha}, \bibinfo{person}{Milo~M.K.
  Martin}, {and} \bibinfo{person}{Amir Roth}.} \bibinfo{year}{2006}\natexlab{}.
\newblock \showarticletitle{NoSQ: Store-Load Communication without a Store
  Queue}.
\newblock \bibinfo{journal}{\emph{2006 39th Annual IEEE/ACM International
  Symposium on Microarchitecture (MICRO'06)}} \bibinfo{volume}{27},
  \bibinfo{number}{1}, \bibinfo{pages}{285--296}.
\newblock
\showISSN{0272-1732}
\urldef\tempurl%
\url{https://doi.org/10.1109/mm.2007.17}
\showDOI{\tempurl}


\bibitem[Sha et~al\mbox{.}(2005)]%
        {Sha2005ScalableSF}
\bibfield{author}{\bibinfo{person}{Tingting Sha}, \bibinfo{person}{Milo M.~K.
  Martin}, {and} \bibinfo{person}{Amir Roth}.} \bibinfo{year}{2005}\natexlab{}.
\newblock \showarticletitle{Scalable store-load forwarding via store queue
  index prediction}.
\newblock \bibinfo{journal}{\emph{38th Annual IEEE/ACM International Symposium
  on Microarchitecture (MICRO'05)}} (\bibinfo{year}{2005}), \bibinfo{pages}{12
  pp.--170}.
\newblock


\bibitem[Sodani and Sohi(1997)]%
        {Sodani1997DynamicIR}
\bibfield{author}{\bibinfo{person}{Avinash Sodani} {and}
  \bibinfo{person}{Gurindar~S. Sohi}.} \bibinfo{year}{1997}\natexlab{}.
\newblock \showarticletitle{Dynamic Instruction Reuse}.
\newblock \bibinfo{journal}{\emph{Conference Proceedings. The 24th Annual
  International Symposium on Computer Architecture}} (\bibinfo{year}{1997}),
  \bibinfo{pages}{194--205}.
\newblock


\bibitem[Subramaniam and Loh(2006)]%
        {Subramaniam2006FireandForgetLS}
\bibfield{author}{\bibinfo{person}{Samantika Subramaniam} {and}
  \bibinfo{person}{Gabriel Loh}.} \bibinfo{year}{2006}\natexlab{}.
\newblock \showarticletitle{Fire-and-Forget: Load/Store Scheduling with No
  Store Queue at All}. In \bibinfo{booktitle}{\emph{2006 39th Annual IEEE/ACM
  International Symposium on Microarchitecture (MICRO'06)}}.
  \bibinfo{publisher}{IEEE}, \bibinfo{pages}{273--284}.
\newblock
\showISSN{1072-4451}
\urldef\tempurl%
\url{https://doi.org/10.1109/micro.2006.26}
\showDOI{\tempurl}


\bibitem[Terauchi and Aiken(2005)]%
        {Terauchi2005SecureIF}
\bibfield{author}{\bibinfo{person}{Tachio Terauchi} {and} \bibinfo{person}{Alex
  Aiken}.} \bibinfo{year}{2005}\natexlab{}.
\newblock \showarticletitle{Secure Information Flow as a Safety Problem}. In
  \bibinfo{booktitle}{\emph{Proceedings of the 12th International Conference on
  Static Analysis}} (London, UK) \emph{(\bibinfo{series}{SAS'05})}.
  \bibinfo{publisher}{Springer-Verlag}, \bibinfo{address}{Berlin, Heidelberg},
  \bibinfo{pages}{352--367}.
\newblock
\showISBNx{3540285849}
\urldef\tempurl%
\url{https://doi.org/10.1007/11547662_24}
\showDOI{\tempurl}


\bibitem[Trippel et~al\mbox{.}(2019)]%
        {Trippel2019SecurityVV}
\bibfield{author}{\bibinfo{person}{Caroline Trippel}, \bibinfo{person}{Daniel
  Lustig}, {and} \bibinfo{person}{Margaret Martonosi}.}
  \bibinfo{year}{2019}\natexlab{}.
\newblock \showarticletitle{{Security Verification via Automatic Hardware-Aware
  Exploit Synthesis: The CheckMate Approach}}.
\newblock \bibinfo{journal}{\emph{IEEE Micro}}  \bibinfo{volume}{39}
  (\bibinfo{year}{2019}), \bibinfo{pages}{84--93}.
\newblock


\bibitem[Tyagi et~al\mbox{.}(2022)]%
        {Tyagi2022TheHuzzIF}
\bibfield{author}{\bibinfo{person}{Aakash Tyagi}, \bibinfo{person}{Addison
  Crump}, \bibinfo{person}{Ahmad-Reza Sadeghi}, \bibinfo{person}{Garrett
  Persyn}, \bibinfo{person}{Jeyavijayan Rajendran}, \bibinfo{person}{Patrick
  Jauernig}, {and} \bibinfo{person}{Rahul Kande}.}
  \bibinfo{year}{2022}\natexlab{}.
\newblock \showarticletitle{TheHuzz: Instruction Fuzzing of Processors Using
  Golden-Reference Models for Finding Software-Exploitable Vulnerabilities}.
\newblock \bibinfo{journal}{\emph{ArXiv}}  \bibinfo{volume}{abs/2201.09941}
  (\bibinfo{year}{2022}).
\newblock


\bibitem[v.~Gleissenthall et~al\mbox{.}(2021)]%
        {Gleissenthall2021SolverAidedCH}
\bibfield{author}{\bibinfo{person}{Klaus v. Gleissenthall},
  \bibinfo{person}{Rami~G\"{o}khan K\i{}c\i{}}, \bibinfo{person}{Deian Stefan},
  {and} \bibinfo{person}{Ranjit Jhala}.} \bibinfo{year}{2021}\natexlab{}.
\newblock \showarticletitle{Solver-Aided Constant-Time Hardware Verification}.
  In \bibinfo{booktitle}{\emph{Proceedings of the 2021 ACM SIGSAC Conference on
  Computer and Communications Security}} (Virtual Event, Republic of Korea)
  \emph{(\bibinfo{series}{CCS '21})}. \bibinfo{publisher}{Association for
  Computing Machinery}, \bibinfo{address}{New York, NY, USA},
  \bibinfo{pages}{429--444}.
\newblock
\showISBNx{9781450384544}
\urldef\tempurl%
\url{https://doi.org/10.1145/3460120.3484810}
\showDOI{\tempurl}


\bibitem[van Schaik et~al\mbox{.}(2019)]%
        {Schaik2019RIDLRI}
\bibfield{author}{\bibinfo{person}{Stephan van Schaik}, \bibinfo{person}{Alyssa
  Milburn}, \bibinfo{person}{Sebastian Osterlund}, \bibinfo{person}{Pietro
  Frigo}, \bibinfo{person}{Giorgi Maisuradze}, \bibinfo{person}{Kaveh Razavi},
  \bibinfo{person}{Herbert Bos}, {and} \bibinfo{person}{Cristiano Giuffrida}.}
  \bibinfo{year}{2019}\natexlab{}.
\newblock \showarticletitle{RIDL: Rogue In-Flight Data Load}. In
  \bibinfo{booktitle}{\emph{2019 IEEE Symposium on Security and Privacy (SP)}}.
  \bibinfo{publisher}{IEEE}, \bibinfo{pages}{88--105}.
\newblock
\urldef\tempurl%
\url{https://doi.org/10.1109/sp.2019.00087}
\showDOI{\tempurl}


\bibitem[von Gleissenthall et~al\mbox{.}(2019)]%
        {Gleissenthall2019IODINEVC}
\bibfield{author}{\bibinfo{person}{Klaus von Gleissenthall},
  \bibinfo{person}{Rami~G{\"o}khan Kici}, \bibinfo{person}{Deian Stefan}, {and}
  \bibinfo{person}{Ranjit Jhala}.} \bibinfo{year}{2019}\natexlab{}.
\newblock \showarticletitle{IODINE: Verifying Constant-Time Execution of
  Hardware}.
\newblock \bibinfo{journal}{\emph{ArXiv}}  \bibinfo{volume}{abs/1910.03111}
  (\bibinfo{year}{2019}).
\newblock
\urldef\tempurl%
\url{https://api.semanticscholar.org/CorpusID:197672843}
\showURL{%
\tempurl}


\bibitem[Wang et~al\mbox{.}(2020)]%
        {Wang2019KLEESpectre}
\bibfield{author}{\bibinfo{person}{Guanhua Wang}, \bibinfo{person}{Sudipta
  Chattopadhyay}, \bibinfo{person}{Arnab~Kumar Biswas}, \bibinfo{person}{Tulika
  Mitra}, {and} \bibinfo{person}{Abhik Roychoudhury}.}
  \bibinfo{year}{2020}\natexlab{}.
\newblock \showarticletitle{KLEESpectre: Detecting Information Leakage through
  Speculative Cache Attacks via Symbolic Execution}.
\newblock \bibinfo{journal}{\emph{ACM Transactions on Software Engineering and
  Methodology}} \bibinfo{volume}{29}, \bibinfo{number}{3} (\bibinfo{date}{June}
  \bibinfo{year}{2020}), \bibinfo{pages}{1--31}.
\newblock
\showISSN{1557-7392}
\urldef\tempurl%
\url{https://doi.org/10.1145/3385897}
\showDOI{\tempurl}


\bibitem[Wang et~al\mbox{.}(2023)]%
        {wang2023specification}
\bibfield{author}{\bibinfo{person}{Zilong Wang}, \bibinfo{person}{Gideon Mohr},
  \bibinfo{person}{Klaus von Gleissenthall}, \bibinfo{person}{Jan Reineke},
  {and} \bibinfo{person}{Marco Guarnieri}.} \bibinfo{year}{2023}\natexlab{}.
\newblock \showarticletitle{Specification and Verification of Side-channel
  Security for Open-source Processors via Leakage Contracts}. In
  \bibinfo{booktitle}{\emph{Proceedings of the 2023 ACM SIGSAC Conference on
  Computer and Communications Security}} (, Copenhagen, Denmark,)
  \emph{(\bibinfo{series}{CCS '23})}. \bibinfo{publisher}{Association for
  Computing Machinery}, \bibinfo{address}{New York, NY, USA},
  \bibinfo{pages}{2128--2142}.
\newblock
\showISBNx{9798400700507}
\urldef\tempurl%
\url{https://doi.org/10.1145/3576915.3623192}
\showDOI{\tempurl}


\bibitem[Weber et~al\mbox{.}(2021)]%
        {Weber2021OsirisAD}
\bibfield{author}{\bibinfo{person}{Daniel Weber}, \bibinfo{person}{Ahmad
  Ibrahim}, \bibinfo{person}{Hamed Nemati}, \bibinfo{person}{Michael Schwarz},
  {and} \bibinfo{person}{Christian Rossow}.} \bibinfo{year}{2021}\natexlab{}.
\newblock \showarticletitle{Osiris: Automated Discovery of Microarchitectural
  Side Channels}. In \bibinfo{booktitle}{\emph{30th USENIX Security Symposium
  (USENIX Security 21)}}. \bibinfo{publisher}{USENIX Association},
  \bibinfo{pages}{1415--1432}.
\newblock
\showISBNx{978-1-939133-24-3}
\urldef\tempurl%
\url{https://www.usenix.org/conference/usenixsecurity21/presentation/weber}
\showURL{%
\tempurl}


\bibitem[Zeng et~al\mbox{.}(2022)]%
        {Zeng2022AutomaticGO}
\bibfield{author}{\bibinfo{person}{Yu Zeng}, \bibinfo{person}{Aarti Gupta},
  {and} \bibinfo{person}{Sharad Malik}.} \bibinfo{year}{2022}\natexlab{}.
\newblock \showarticletitle{Automatic generation of architecture-level models
  from RTL designs for processors and accelerators}. In
  \bibinfo{booktitle}{\emph{Proceedings of the 2022 Conference \& Exhibition on
  Design, Automation \& Test in Europe}} (Antwerp, Belgium)
  \emph{(\bibinfo{series}{DATE '22})}. \bibinfo{publisher}{European Design and
  Automation Association}, \bibinfo{address}{Leuven, BEL},
  \bibinfo{pages}{460--465}.
\newblock
\showISBNx{9783981926361}


\end{thebibliography}

\newpage
\appendix

\section{Appendix}

We provide the complete versions of the algorithms and proofs.

\subsection{Template Generation}

\subsubsection{\gentemp{} Algorithm}

We provide the full description of the template generation procedure in Algorithm \ref{alg:gen-templates-app}.
This algorithm generates a set of pattern templates up to a certain depth $d$ given 
a semantic platform definition $M$, a non-interference property $\mathsf{NI}(\srcvars, \obsvars, \initialpred)$, and depth $d \in \mathbb{N}$.

As discussed in the main text, the algorithm iterates over all templates, first checking if the template propagates taint from the source variables to the observed variables, and if so, performing a semantic analysis to check whether the template violates the non-interference property.

\setcounter{algocf}{0}

\begin{algorithm}
    \small
    \DontPrintSemicolon
    \SetKwFunction{TemplateHelper}{TemplateHelper}
    \SetKwProg{Fn}{Function}{:}{}
    
    \KwInput{Semantic platform definition $M$, non-interference property $\mathsf{NI}(\srcvars, \obsvars, \initialpred)$, depth $d \in \mathbb{N}$}
    \KwOutput{A set of pattern templates}
    \KwData{acc: the accumulated set of pattern templates}

    \Fn{\TemplateHelper{$\askeleton$}}{
        \tcc{Search depth not reached?}
        \If{$|\askeleton| < d$}
        {
            \For{$\anop \in \setops$}
            {
                \If{$\anop\cdot \askeleton$ propagates taint from $\srcvars^C$ to $\obsvars$}
                {
                    \lIf{$\anop\cdot \askeleton \not\models \mathsf{NI}$}
                    {
                        acc.append($\anop\cdot \askeleton$)
                    }
                }
                \TemplateHelper{$\anop\cdot\askeleton$}    
            }
        }
    }
    \tcc{Search over depth $d$ templates}
    \TemplateHelper{$\epsilon$} \;
    \Return acc
\caption{\gentemp($M$, $\mathsf{NI}$, $d$)}
\label{alg:gen-templates-app}
\end{algorithm}

\newcommand{\atom}[1]{\mathsf{atom}_{#1}}

\begin{algorithm*}[h]
\SetKwFunction{GrammarHelper}{ConsHelper}
\SetKwProg{Fn}{Function}{:}{}
\DontPrintSemicolon
    \KwInput{Semantic platform definition $M$, non-interference property $\mathsf{NI}$, pattern template $\askeleton$, and a grammar $G$}
    \KwOutput{A set of patterns}
    \KwData{acc: an accumulated set of patterns}
    
    \Fn{\GrammarHelper{$\apred, i, L$}}{
        \If{$i > |L|$}{
            acc.append($(\askeleton, \apred)$)  \tcc*[r]{Exhausted all atomic predicates?}
        }
        \Else{
            $\atom{0} = L[i]$; $\atom{1} = L[i+1]$; $\atom{2} = L[i+2]$ \tcc*[r]{Choose atoms to branch on}
            \tcc{Does adding $\neg \atom{0}$, i.e., $(= \neg L[i])$ eliminate violations?}
            \lIf{
                $\forall \aninstr_1, \cdots, \aninstr_{|\askeleton|}.~  
                \aninstr_1 \cdots\aninstr_{|\askeleton|} \models (\askeleton, \apred \land \neg \atom{0}) \implies
                \aninstr_1, \cdots, \aninstr_{|\askeleton|} \models \mathsf{NI}$ \;
            }
            { \GrammarHelper{$\apred \land L[i]$, $i+1, L$} \tcc*[f]{add $L[i]$}}
            \tcc{Does adding $\neg \atom{0} \land \neg \atom{1}$ eliminate all violations?}
            {\color{blue}
            \ElseIf{
                $\forall \aninstr_1, \cdots, \aninstr_{|\askeleton|}.~  
                \aninstr_1 \cdots\aninstr_{|\askeleton|} \models (\askeleton, \apred \land \neg \atom{0} \land \neg \atom{1}) \implies
                \aninstr_1, \cdots, \aninstr_{|\askeleton|} \models \mathsf{NI}$ \;
            }
            { 
                \tcc{add $L[i], L[i+1]$ disjuncts (multi-counterfactual branching)}
                \GrammarHelper{$\apred \land L[i]$, $i+2, L$}; \GrammarHelper{$\apred \land L[i+1]$, $i+2, L$};
            }
            \tcc{Does adding $\neg \atom{0} \land \neg \atom{1} \land \neg \atom{2}$ eliminate all violations?}
            \ElseIf{
                $\forall \aninstr_1, \cdots, \aninstr_{|\askeleton|}.~  
                \aninstr_1 \cdots\aninstr_{|\askeleton|} \models (\askeleton, \apred \land \neg \atom{0} \land \neg \atom{1} \land \neg \atom{2}) \implies
                \aninstr_1, \cdots, \aninstr_{|\askeleton|} \models \mathsf{NI}$ \;
            }
            { 
                \tcc{add $L[i], L[i+1], L[i+2]$ disjuncts (multi-counterfactual branching)}
                \GrammarHelper{$\apred \land L[i]$, $i+3, L$}; \GrammarHelper{$\apred \land L[i+1]$, $i+3, L$}; \GrammarHelper{$\apred \land L[i+2]$, $i+3, L$};
            }}
            \lElse(\tcc*[f]{skip over $L[i]$}){
                \GrammarHelper{$\apred, i+1, L$};
            }
        }
    }
    $L$ = ApplyPredicates($\askeleton, G$) \tcc*[r]{Create all atoms in $L$}
    \GrammarHelper{$\texttt{true}, 0, L$} \tcc*[r]{Counterfact. addition}
    \Return acc
\caption{\conspec($M, \mathsf{NI}, \askeleton, G$)}
\label{alg:gram-search-full}
\end{algorithm*}

\subsubsection{Proof of Lemma \ref{lem:templategen}}

The proof of Lemma \ref{lem:templategen} is a direct consequence of the algorithm \gentemp{} and the soundness of a taint-based overapproximation.
We rely on the fact that, if a template $\askeleton$ does not propagate taint from $\srcvars$ to $\obsvars$, then it also does not violate the non-interference property $\mathsf{NI}(\srcvars, \obsvars, \initialpred)$.

\subsection{Constraint-based Template Specialization}

\subsubsection{Counterfactual-based Addition}

Counterfactual-based addition is based on Observations 1 and 2 which can be shown through simple propositional logic.
We will only prove Observation 2 since it is a generalization of Observation 1.

\begin{proof}[Proof of Observation \ref{obs:multicounterfactual}]
    Consider a partially specialized template $(\askeleton, \apred)$, and a set of atoms $\{\afacet_i\}_i$. Let $S$ be the set of programs:
    $S = \{ \aprogram ~|~ \aprogram \models (\askeleton, \apred)  \}$. Also for each $\afacet_i$, let $S_i$ be the set of programs:
    $S_i = \{ \aprogram ~|~ \aprogram \models (\askeleton, \apred \land \afacet_i) \}$. Finally, let $T$ be the programs: $T = \{ \aprogram ~|~ \aprogram \models \mathsf{NI} \}$.
    Then, the set $U$ of programs satisfying 
    \begin{equation*}
        \aprogram \models (\askeleton, \apred \land 
        \bigwedge_i \neg \afacet_i)
    \end{equation*}
    can we described as $U = S \setminus \bigcup_i S_i$.

    Now, suppose to the contrary that the statement was not true.
    Then there exists a program $\aprogram$, such that the following holds:
    \begin{equation}
        \aprogram \models (\askeleton, \apred) \land \aprogram \not\models \mathsf{NI}  \quad \text{ i.e., } \aprogram \in S \cap \overline{T}
    \end{equation}
    and
    \begin{equation}
        \neg\bigvee_i (\aprogram \models (\askeleton, \apred \land \afacet_i)) \quad \text{i.e., } \aprogram \not\in \bigcup_i S_i
    \end{equation}

    Then, by (3) and (4), $\aprogram \in (S \cap \overline{T} \cap \overline{\bigcup_i S_i})$, which implies, by the definition of $U$, that $\aprogram \in U \cap \overline{T}$.
    However, this means that the antecedent of the observation (which corresponds to $\overline{U} \cup T$) does not infact hold, which is a contradiction.
\end{proof}

\subsubsection{Full \conspec{} Algorithm}

We provide the full description of the constraint-based template specialization procedure (Algorithm \ref{alg:gram-search-full}).

\subsubsection{$k$-completeness of \conspec{}}

Theorem \ref{thm:kcomplete} is proven by induction on
the number of specialization iterations.

\begin{proof}[Proof of Theorem \ref{thm:kcomplete}]
Suppose \conspec{} is invoked with arguments $(\askeleton, \mathsf{NI}, M, G)$.
We claim that in Alg. \ref{alg:gram-search-full}, at each recursive call of $\mathsf{ConsHelper}$ with arguments $(\apred, i, L)$, the following property holds:
\begin{align*}
    &\forall \aprogram.~~(\aprogram \models (\askeleton, \mathsf{true}) \land C \not\models \mathsf{NI}) \\
    &\quad \implies
    \exists (\askeleton, \apred') \in \text{acc} \cup \{(\askeleton, \apred)\}.~ 
    \aprogram \models (\askeleton, \apred')
\end{align*}
In words, the partially specialized patterns maintained in the accumulator queue form an overapproxiation of the violating programs for skeleton $\askeleton$.

\textbf{Base case:} At the start, $\mathsf{ConsHelper}(\mathsf{true}, 0, L)$ immediately implies the property. 

\textbf{Inductive step:} Let the property hold for $\mathsf{ConsHelper}(\apred, i, L)$. Then, either we (a) add $(\askeleton, \apred)$ to acc, (b) we skip over $L[i]$ or (c) reinvoke $\mathsf{ConsHelper}$ with an incremented $i$.
The property holds immediately in cases (a) and (c). In case (b), we note that 
the set of programs $C$ that satisfying $(\askeleton, \apred)$ and violating $\mathsf{NI}$ is identical to those that violate $(\askeleton, \apred \land L[i])$
due to Observation 1. A similar argument holds for the multi-counterfactual case (using Obs. 2).

Therefore, by induction, the property holds for all recursive calls of $\mathsf{ConsHelper}$, and hence, the final set of patterns in \texttt{acc} (when there are no more calls to $\mathsf{ConsHelper}$) is a $k$-complete set of patterns for $\askeleton$.
\end{proof}

\end{document}